\crefname{hypothesis}{Hypothesis}{Hypotheses}
\newcommand*{\cE}{\mathcal{E}}
\newcommand*{\cL}{\mathcal{L}}
\newcommand{\Odd}{\mathsf{Odd}}
\DeclareMathOperator{\E}{\mathbf{E}}
\DeclareMathOperator{\R}{\mathbb{R}}
\DeclareMathOperator{\N}{\mathbb{N}}
\DeclareMathOperator{\Z}{\mathbb{Z}}
\newcommand{\cond}{\ensuremath{ ~\Big|~ }} 
\newcommand{\Pro}[1]{\mathbf{Pr} \left[\,#1\,\right]}
\newcommand{\Proco}[2]{\mathbf{Pr} \left[\,#1~\middle\vert~#2\,\right]}
\newcommand{\Ex}[1]{\mathbf{E} \left[\,#1\,\right]}
\newcommand{\Exco}[2]{\mathbf{E} \left[\,#1~\middle\vert~#2\,\right]}
\newcommand{\Exf}[2]{\mathbf{E}^{(#1)}\left[\,#2\,\right]}
\newcommand{\Prof}[2]{\mathbf{Pr}^{(#1)}\left[\,#2\,\right]}
\newcommand{\M}{\mathbf{M}}
\renewcommand{\P}{\mathbf{P}}
\newcommand{\B}{\mathcal{B}}
\newcommand{\xbar}{\ensuremath{\overline{x}}}
\DeclareMathOperator{\discr}{\mathrm{disc}}
\DeclareMathOperator{\1}{\mathbf{1}}
\newcommand{\taucont}[1]{{\tau}_{\mathrm{cont}}\left({#1}\right)}
\newcommand{\tauspectral}[1]{{\widetilde{\tau}}_{\mathrm{cont}}\left({#1}\right)}
\newcommand{\taulocal}{\tau_{\mathrm{local}}}
\newcommand{\tauglobal}{\tau_{\mathrm{global}}}
\numberwithin{equation}{section}
\newlength{\leftstackrelawd}
\newlength{\leftstackrelbwd}
\def\leftstackrel#1#2{\settowidth{\leftstackrelawd}
{${{}^{#1}}$}\settowidth{\leftstackrelbwd}{$#2$}
\addtolength{\leftstackrelawd}{-\leftstackrelbwd}
\leavevmode\ifthenelse{\lengthtest{\leftstackrelawd>0pt}}
{\kern-.5\leftstackrelawd}{}\mathrel{\mathop{#2}\limits^{#1}}}
\def\?#1{}
\def\whp{w.h.p\@ifnextchar-{.}{\@ifnextchar.{.\?}{\@ifnextchar,{.}{\@ifnextchar){.}{\@ifnextchar:{.:\?}{.\ }}}}}}
\let\epsilon\varepsilon
\newcommand{\DeclareMathActive}[2]{
  \expandafter\edef\csname keep@#1@code\endcsname{\mathchar\the\mathcode`#1 }
  \begingroup\lccode`~=`#1\relax
  \lowercase{\endgroup\def~}{#2}
  \AtBeginDocument{\mathcode`#1="8000}
}
\newcommand{\std}[1]{\csname keep@#1@code\endcsname}
\patchcmd{\newmcodes@}{\mathcode`\-\relax}{\std@minuscode\relax}{}{\ddt}
\def\Hy@Warning{\@gobble}
\def\Hy@WarningNoLine{\@gobble}
\newif\if@gather@prefix 
\preto\place@tag@gather{
  \if@gather@prefix\iftagsleft@ 
    \kern-\gdisplaywidth@ 
    \rlap{\gather@prefix} 
    \kern\gdisplaywidth@ 
  \fi\fi 
} 
\appto\place@tag@gather{ 
  \if@gather@prefix\iftagsleft@\else 
    \kern-\displaywidth 
    \rlap{\gather@prefix} 
    \kern\displaywidth 
  \fi\fi 
  \global\@gather@prefixfalse 
} 
\preto\place@tag{ 
  \if@gather@prefix\iftagsleft@ 
    \kern-\gdisplaywidth@ 
    \rlap{\gather@prefix} 
    \kern\displaywidth@ 
  \fi\fi 
} 
\appto\place@tag{ 
  \if@gather@prefix\iftagsleft@\else 
    \kern-\displaywidth 
    \rlap{\gather@prefix} 
    \kern\displaywidth 
  \fi\fi 
  \global\@gather@prefixfalse 
} 
\newcommand*{\beforetext}[1]{ 
  \ifmeasuring@\else
  \gdef\gather@prefix{#1} 
  \global\@gather@prefixtrue 
  \fi
} 
\newcommand{\token}[3]
{
 \draw[fill=yellow!30,draw=black,rounded corners=2pt] (#1-0.3,#2-0.23) rectangle node[pos=0.5]{\footnotesize{#3}} (#1+0.3,#2+0.23);
}
\newcommand{\ttoken}[3]
{
 \draw[fill=yellow!30,thick,draw=red,rounded corners=2pt] (#1-0.3,#2-0.23) rectangle node[pos=0.5]{\footnotesize{#3}} (#1+0.3,#2+0.23);
}
\newcommand{\newL}{L}
\begin{document}

\title{\Large (Almost) Perfect Discrete Iterative Load Balancing}

\author{
    Petra Berenbrink\thanks{University of Hamburg, Germany (\email{petra.berenbrink@uni-hamburg.de}). 
%    This work was supported by DFG projects 427756233 and 411362735.
    }
    \and Robert Els\"asser\thanks{University of Salzburg, Austria (\email{elsa@cs.sbg.ac.at}).}
    \and Tom Friedetzky\thanks{Durham University, U.K. (\email{tom.friedetzky@durham.ac.uk}).}
    \and Hamed Hosseinpour\thanks{University of Hamburg, Germany (\email{hamed.hosseinpour@uni-hamburg.de}).}
    \and Dominik Kaaser\thanks{Technical University of Hamburg, Germany (\email{dominik.kaaser@tuhh.de}).}
    \and Peter Kling\thanks{Darmstadt University of Applied Sciences, Germany (\email{peter.kling@h-da.de}).}
    \and Thomas Sauerwald\thanks{University of Cambridge, U.K. (\email{tms41@cam.ac.uk}).}
}

\date{}

\maketitle

\fancyfoot[C]{\thepage}

\begin{abstract}
We consider discrete, iterative load balancing via matchings on arbitrary
graphs. Initially each node holds a certain number of tokens, defining the load
of the node, and the objective is to redistribute the tokens such that
eventually each node has approximately the same number of tokens. We present
results for a general class of simple local balancing schemes where the
tokens are balanced via matchings. In each round the process averages the
tokens of any two matched nodes. If the sum of their tokens is odd, the node to
receive the one excess token is selected at random. Our class covers three
popular models: in the matching model a new matching is generated randomly in
each round, in the balancing circuit model a fixed sequence of matchings is
applied periodically, and in the asynchronous model the load is balanced over a
randomly chosen edge.

We measure the quality of a load vector by its discrepancy, defined as the
difference between the maximum and minimum load across all nodes. As our main
result we show that with high probability our discrete balancing scheme reaches
a discrepancy of $3$ in a number of rounds which asymptotically matches the spectral bound for continuous load balancing with fractional load.

This result improves and tightens a long line of previous works, by not only
achieving a small constant discrepancy (instead of a non-explicit, large
constant) but also holding for arbitrary instead of regular graphs. The result
also demonstrates that in the general model we consider, discrete load
balancing is no harder than continuous load balancing.
\end{abstract}

\section{Introduction}

In load balancing the goal is to distribute a set of jobs to multiple computing resources. Many large-scale applications use some form of decentralized load balancing, with examples including balancing search engine queries, distributing internet traffic, updating distributed databases, or performing numerical simulations.

In a classical model for parallel and distributed computing, $n$ processors are connected by an arbitrary, undirected graph. Initially each processor (\emph{node}) has a given number of jobs (\emph{tokens}). The objective is to redistribute the tokens by moving them across the edges such that
eventually each node has approximately the same number of tokens, thereby minimizing the \emph{discrepancy}\footnote{Note that this term is the standard definition in the field of load balancing, in contrast to its use in discrepancy theory.}, defined as the difference between the maximum and minimum load across all nodes. Particularly for large-scale networks, it is desirable to employ iterative protocols that are simple and local. This means each node only needs to know its current and neighboring node's loads and, based on this information, determines the number of tokens to be transferred. While
load balancing on graphs is a well-studied and natural area, there are profound connections to other problems including opinion dynamics \cite{DBLP:conf/podc/BerenbrinkCGMMR23}, gossiping \cite{DBLP:journals/tit/BoydGPS06}, information aggregation \cite{DBLP:conf/focs/KempeDG03}, and clustering \cite{DBLP:conf/soda/BecchettiCNPT17}.

Two important classifications of iterative  load balancing schemes are between  \emph{diffusion models} and \emph{matching models} \cite{DBLP:journals/pc/DiekmannFM99}.
In the former model every node concurrently balances its load with \emph{all} neighbors in each round.
In the latter model every node balances its load with \emph{at most one} neighbor in each round.
Another, orthogonal, distinction is between \emph{continuous} and \emph{discrete} load.
In continuous load balancing, load tokens are assumed to be arbitrarily divisible.
Continuous iterative load balancing is essentially equivalent to a Markov chain, and the spectral gap captures the convergence speed and time to reach a constant discrepancy (e.g., \cite{DBLP:conf/focs/RabaniSW98}).

While the continuous setting is well understood, the question remains whether it is a good approximation of the discrete setting, where load is composed of unit-size tokens that are not divisible \cite{DBLP:conf/focs/RabaniSW98}. The deviation between the processes is caused by the accumulation of rounding errors across different nodes and rounds, which makes the discrete process non-linear and hard to analyze.
This stark contrast between the continuous setting and the discrete setting is highlighted in Rabani, Sinclair and Wanka~\cite{DBLP:conf/focs/RabaniSW98}, where they call the discrete setting \emph{``true process''} and the continuous setting \emph{``idealized process''}. In the same work from 1998, the authors also point out that ``the question of a precise quantitative relationship between Markov chains and load-balancing algorithms has been posed by several authors'', notably in \cite{DBLP:journals/jcss/GhoshM96,lovasz1995mixing,DBLP:journals/mst/MuthukrishnanGS98,DBLP:journals/jcss/GhoshM96,DBLP:conf/spaa/SubramanianS94}, and ``seems to be of interest in its own right.'' One concretization of this question, which has been the objective of many previous works in this area \cite{DBLP:conf/focs/RabaniSW98,DBLP:conf/focs/SauerwaldS12,DBLP:conf/stoc/FriedrichS09,DBLP:journals/jcss/BerenbrinkCFFS15}, is as follows:

\paragraph{Open Question:} For a given undirected, connected graph $G=(V,E)$ and an arbitrary initial load vector in $\N_0^n$ with initial discrepancy $K$, let $\tau_{\text{cont}}(K)$ denote the number of rounds needed in the continuous setting to reach discrepancy at most $1$. Find a tight bound on the discrepancy after $O(\tau_{\text{cont}}(K))$ rounds in the discrete model. 

\medskip
In this work we nearly close the gap between the discrete and the continuous setting by bounding the discrepancy by $3$, which holds for any connected graph and a large class of models.
Additionally, our analytical framework is simpler and more flexible.

\paragraph{Our Results.} We focus on the matching model, as it requires less communication per round than the diffusion model; yet it tends to perform better in theory and practice \cite{DBLP:conf/focs/RabaniSW98,DBLP:conf/focs/SauerwaldS12}. We present results for a general class of matching models including the prominent \emph{random matching model} and the \emph{balancing circuit model}. 
In the former model a new matching is generated randomly in each round, while in the latter model a fixed sequence of matchings is applied periodically. We also consider an \emph{``asynchronous''} process in which at each iteration one edge $e \in E$ is chosen uniformly at random.
The very simple load balancing process works as follows.
For each edge in the matching, the two incident nodes average their load as much as possible; in case their total load is odd, the so-called excess token is assigned randomly to either node. This resembles randomized rounding in optimization, which leads to the question of how well the discrete setting can approximate the continuous one with this randomized strategy. 
Note that the algorithm  considered in this paper is  natural and simple, that is, straightforward to implement. Moreover it is    strictly distributed in the sense that only local information based on a node and its neighbours are used. Due to the nature of the communication patterns (matchings) there is no multi-port communication, another nod toward real-world constraints. 

As is standard in the area of discrete load balancing, we use a very strict notion of smoothness of a load vector called \emph{discrepancy}. Let $X^{(t)}$ be the load vector at the end of round $t$, and recall that the discrepancy, denoted by $\discr(X^{(t)})$, is defined as the difference between the maximum and minimum load across all nodes.
In the following theorem we focus on the three specific models described above;
our general result can be found in \cref{thm:main-result}. 
\begin{theorem*}[simplified version of \cref{thm:main-result}]
Let $G$ be any undirected, connected graph with $n$ nodes and consider any initial load vector $x^{(0)}$ with $\discr(x^{(0)})\leq K$ in the discrete setting.
Then, in the random matching, balancing circuit and asynchronous model there exists a time $\tau=O(\tauspectral{K})$, where $\tauspectral{K}$ is the standard spectral bound in the continuous setting (see \cref{eq:balancing_spectral} and \cref{eq:matching_spectral}), such that for an arbitrarily small constant $c>0$,
\[
\Pro{ \discr\left(X^{(\tau)}\right) \leq 3} \ge 1 -\exp \left(-\log^{1-c}(n) \right).
\]
\end{theorem*}

As is commonly done we bound the convergence time in terms of the natural spectral bound $\tauspectral{K}$ on $\taucont{K}$, instead of $\taucont{K}$ directly.
Furthermore, it was shown in \cite[Theorem~2.10]{DBLP:conf/focs/SauerwaldS12}, that this upper bound is tight for the random matching model if $K \geq n^{1+\Omega(1)}$. 
With this slight slack in the number of rounds, our work essentially closes the lid on the long standing problem of consolidating the continuous and the discrete setting. Concretely, we obtain the following quantitative improvements. First, the previously best known result \cite{DBLP:conf/focs/SauerwaldS12} used the same balancing times but only achieved a discrepancy which is a non-explicit (and large) constant.
Second, the results for constant discrepancy in \cite{DBLP:conf/focs/SauerwaldS12} only hold for regular graphs, whereas our theorem applies to regular and non-regular graphs alike.
Third, our result holds with higher probability compared to \cite{DBLP:conf/focs/SauerwaldS12}.

On a very high level, the way we reduce the discrepancy from $K$ to $3$ is as follows. The most involved building block is to show a constant bound on the maximum load for instances with a \emph{linear} number of tokens. Then we show that for instances with an \emph{arbitrary} number of tokens, after $\tau$ rounds the number of tokens above the average is $O(n)$. Hence we can apply the result for a linear number of tokens, which results into a discrepancy of $4$ first, and
using another $\tau$ rounds we finally reduce the discrepancy down to $3$. See \cref{sec:techniques} for more details and
for a more detailed comparison with previous work, see~\cref{tab:related}.
We note that, with slight abuse of notation, in the balancing circuit model we consider $\Delta$ to be length of the sequence of matchings, whereas in the random matching model it is the maximum degree of the given graph.

\newcommand{\rounddet}{det}
\newcommand{\roundrand}{rand}

\begin{table}\centering
\caption{Overview of related results for the balancing circuit model.
The stated results from \cite{DBLP:conf/focs/SauerwaldS12,DBLP:conf/stoc/FriedrichS09}, as well as ours, also hold for the random matching model. $\tauspectral{K}$ is the spectral bound for the continuous setting, and it is defined in \cref{eq:matching_spectral,eq:balancing_spectral}, respectively.
All runtime bounds for randomized rounding hold with probability at least $1-\operatorname{o}(1)$. 
}\label{tab:related}
\medskip
\small
\begin{tabular}{cllcl}
\toprule
\textbf{Reference}&\textbf{Rounds}&\textbf{Discrepancy}&\textbf{Rounding}&\textbf{Graphs}\\
\midrule
\cite{DBLP:conf/focs/RabaniSW98} &
$O\left( \tauspectral{K} \right)$ &
$O\left(\tauspectral{n}\right)$ &  \rounddet & all \\
\cite{DBLP:conf/stoc/FriedrichS09} &
$O\left(\tauspectral{K} \right)$ &
$O\left(\sqrt{\tauspectral{n}}\right)$  & \roundrand & all \\
\cite{DBLP:conf/stoc/FriedrichS09} &
$O\left(\tauspectral{K}\cdot(\log \log(n))^3\right)$ &
$O(1)$  & \roundrand & expander \\
\cite{MS10} 
& $2 \log_2(n)$ &
$16$  & \roundrand & hypercube \\
\cite{MS09} &
 $O(\log (n))$ & $2$  & \roundrand & hypercube \\
\cite{DBLP:conf/focs/SauerwaldS12} &
$O\left(\tauspectral{K}\right)$ &
$O(\log^\epsilon n)$ & \roundrand & all \\
\cite{DBLP:conf/focs/SauerwaldS12} &
$O\left(\tauspectral{K} \cdot \log\log(n)\right)$ &
$O(\log\log(n))$ & \roundrand & all \\
\cite{DBLP:conf/focs/SauerwaldS12} &
$O\left( \tauspectral{K}\right)$ &
$O(1)$ & \roundrand & constant degree, $\Delta=O(1)$ \\
Thm.~\ref{thm:main-result} &
$O\left(\tauspectral{K}\right)$ &
3 & \roundrand & all \\
\bottomrule
\end{tabular}
\end{table}

\subsection{Related Work}

\cite{DBLP:journals/jpdc/Cybenko89,DBLP:journals/concurrency/Boillat90, DBLP:books/daglib/0067089,DBLP:journals/tit/BoydGPS06}
pioneer the use of Markov chains for analyzing diffusion-based load balancing schemes in the continuous model. Note that \cite{DBLP:journals/tit/BoydGPS06} considers the asynchronous process where in each round a single edge is chosen uniformly at random, whereas the other papers consider the synchronous variant.
For these processes it is known that the discrepancy can be reduced from $K$ to $\ell$ within  $O(\tauspectral{(K/\ell)}$ rounds in the diffusion model, the balancing circuit model and the random matching model (see, e.g., \cite{DBLP:conf/focs/RabaniSW98}). 
All these upper bounds are essentially tight, which follows from the connection between the spectral gap of the graph and mixing times of Markov chains \cite{Venkat}.

\medskip

One of the earliest rigorous analyses of the discrete setting is due to \cite{DBLP:journals/mst/MuthukrishnanGS98}.
Their algorithm always computes, for each edge, the flow of load that would occur in the continuous model, and rounds that down to obtain the numbers of tokens to be sent.
For the diffusion model they show that after $O(\tauspectral{K})$ rounds the discrepancy is at most $O(\Delta n/(1-\lambda))$, where $\lambda$ is the second largest eigenvalue of the diffusion matrix.
\cite{DBLP:journals/jcss/GhoshM96} shows similar results for the matching model.
\cite{DBLP:conf/focs/RabaniSW98} presents a more refined analysis based on Markov chains. 
In the same spirit as the above two papers, they assume that an excess token is always kept at the current node. Their technique works for a large class of processes including  matchings and diffusion models.
Among other results, they provide bounds for general graphs showing that the discrepancy is at most $O(\Delta\tauspectral{K})$ after $O(\tauspectral{K})$ rounds in the balancing circuit model as well as in the diffusion model.
In \cite{DBLP:conf/ipps/BerenbrinkFKK19} the authors analyzed the asynchronous process  for the complete graph. They prove that after $O(n \cdot \log (Kn))$ rounds a discrepancy of two is reached. However, to the best of our knowledge, there are no results for arbitrary graphs in the discrete, asynchronous setting.

\cite{DBLP:journals/siamcomp/FriedrichGS12} introduces a quasi-random version which rounds up or down deterministically such that the accumulated rounding errors
on each edge are minimized. For torus graphs they show a constant discrepancy, while for hypercubes the discrepancy is $O(\log^{3/2}(n))$.
\cite{DBLP:journals/dc/AkbariBS16} analyzes a similar framework for load balancing achieving discrepancy $O(d)$ for $d$-regular graphs, but their algorithm requires additional memory to keep track of the decisions in the past.

\medskip

Randomized rounding was first analyzed in \cite{DBLP:conf/stoc/FriedrichS09} with a focus on the matching model.
Their results demonstrate that for many networks randomized rounding yields roughly a square-root improvement of the discrepancy; see \cref{tab:related} for more details about their results for arbitrary and regular graphs.
For expanders a separate and more tailored analysis leads to a constant discrepancy in $O(\tauspectral{K} \cdot(\log \log(n))^3)$ rounds in the random matching model.

Most closely related, and in many ways our direct competitor, is the paper \cite{DBLP:conf/focs/SauerwaldS12}. For \emph{arbitrary} graphs, a discrepancy of $O(\log^{\epsilon} n)$ for an arbitrarily small constant $\epsilon > 0$ in $O(\tauspectral{K})$ rounds is shown  for the random matching model and
the balancing circuit model.
For regular graphs the authors show constant discrepancy for the random matching model and for the balancing circuit model; for the latter their result needs $\Delta=O(1)$. The number of rounds is again $O(\tauspectral{K})$.
This is a non-trivial result;  the intricate analysis spanning some 40 pages results in  large (and non-explicit) constant factors in the discrepancy bound as well as the running time.

\medskip
There are many results in related models.
\cite{MS10,MS09} consider smoothing networks and 
 show that dimension-exchange on hypercubes achieves a discrepancy of $16$ in $2 \log_2 (n)$ rounds and a discrepancy of $2$ in $O(\log(n))$ rounds.\footnote{Note that the dimension-exchange communication scheme can be regarded as one particular instance of the balancing circuit model. When we write $\log$ we mean $\log_e$ unless specified differently.}
Due to the special structure of the network and the matchings, there is no dependence on $K$. 

Another interesting albeit less related class of load balancing processes are inspired by the Rotor-Router model \cite{DBLP:journals/rsa/CooperDFS10,DBLP:journals/cpc/CooperS06}, where each node evenly distributes tokens to its neighbors (usually in a deterministic way). 
\cite{DBLP:journals/talg/BerenbrinkKKMU19} considers rotor router inspired diffusion-type algorithms. Their results only hold for $d$-regular graphs and their best algorithm achieves  a discrepancy of $d$.

A different direction is dynamic load balancing, where in addition to the balancing, at each round one new token is added to a random node \cite{DBLP:conf/icalp/BerenbrinkHHKR23,DBLP:journals/algorithmica/AlistarhNS22}.
\cite{DBLP:journals/algorithmica/AlistarhNS22} considers such a process on cycles, while in \cite{DBLP:conf/icalp/BerenbrinkHHKR23} the authors recently presented results for arbitrary graphs and wider class of processes. However, due to the addition of new tokens, the proven discrepancy bounds all diverge in $n$. The same holds for so-called ``selfish'' load balancing models \cite{DBLP:journals/talg/BerenbrinkHS14} where tokens act selfishly when deciding whether to migrate to a neighboring node. 
There are also several works on models with dynamic load generation \emph{and} consumption
\cite{DBLP:journals/siamcomp/AnagnostopoulosKU05,DBLP:conf/icalp/BerenbrinkFM05,DBLP:journals/siamcomp/BerenbrinkFG03}, resulting in discrepancy bounds which are either not independent of $n$, or showing only stability.

\paragraph{Road Map.} The remainder of this paper is organized as follows. In \Cref{sec:model} we describe our process and models formally, list some preliminary results and state our main result more formally. In \Cref{sec:techniques} we give a detailed overview of our proof techniques. We summarize our main results and point to some open problems in \Cref{sec:conclusions}. The remaining sections contain the formal analysis. In \cref{sec:auxiliary-results}, we provide some negatice association and concentration results for load vectors. These results enable us to reduce the task of balancing any load vector to balancing one with at most $O(n)$ tokens. \Cref{sec:spreading} contains the backbone of our analysis, in which we analyze load vectors that have at most $O(n)$ tokens. In \Cref{sec:arbitrary} we use the results from \Cref{sec:spreading} to show results for arbitrary load vectors. Equipped with the results in \cref{sec:arbitrary,sec:spreading}, it is relatively straightforward to establish our main results, in particular~\Cref{thm:main-result}, which is done in \Cref{sec:proofsmainresults}. Finally, in the appendix we present an assortment of elementary and well-known tools applied in our analysis.

\section{Model and Main Results}\label{sec:model}

We are given an arbitrary connected and undirected graph $G = (V, E)$ with $n$ nodes.
Initially a set $\mathcal{T}$ of unit-sized \emph{tokens} are distributed arbitrarily among the nodes.
The initial load vector is denoted $x^{(0)}$, and the load vector at (the end of) round $t$ is denoted by $X^{(t)}$. These vectors are row-vectors and the $i$-th entry represents the (integral) load of node~$i$, i.e., the number of tokens on node $i$.
Note that due to the inherent randomization in the process, $X_i^{(t)}$ for $t > 0$ is a random variable.
We will use uppercase letters for random variables and matrices, but lowercase letters for fixed outcomes. We define $\overline{x}:= \sum_{i\in V} x_i^{(0)}/n$ as the average load.
For an $n$-dimensional vector $X$ we define the discrepancy of $X$ as $\discr(X) := \max_{i\in [n]} X_i - \min_{j\in [n]} X_j$. We usually assume that the tokens on the nodes are ordered; the height of the token is its number in that order.  

\subsection{Process Definition}
In this section we first define the standard discrete balancing process which does not specify how tokens are exchanged across the matching edges (similar to \cite{DBLP:conf/focs/RabaniSW98, DBLP:conf/focs/SauerwaldS12}). After that we define a so-called height-sensitive variant of the process, which also specifies the movements of individual tokens. However, both processes generate, at any point of time, exactly the same load distribution.  

For both processes we are given a \emph{sequence of matchings} $\left(\M^{(s)}\right)_{s=1}^{\infty}:= ( \M^{(1)},\M^{(2)},\ldots )$.
The standard load balancing process updates the (discrete) load vector iteratively as follows.
\begin{tcblisting}{breakable,listing only,
  listing options={%
     mathescape,%
     tabsize=4,%
     numbers=left,%
     numberstyle=\tiny,%
     numberblanklines=false,%
     basicstyle=\rmfamily,%
     columns=fullflexible,%
     emph={%
         [1]if, each, else, then, and, or, for, do, while, return, exit, not, output, initialize %
         },
     emphstyle={%
         [1]\bfseries%
         },%
},
size=fbox,boxrule=0pt,frame hidden,arc=0pt,colback=black!10}
for each round $t=1,2,\ldots $ do
    for each edge $\{u,v\} \in \M^{(t)} $ do
        $\displaystyle \left(X^{(t)}_u,X^{(t)}_v\right) \gets \begin{cases} \left(\left\lceil\frac{X^{(t-1)}_u+X^{(t-1)}_v}{2}\right\rceil, \left\lfloor\frac{X^{(t-1)}_u+X^{(t-1)}_v}{2}\right\rfloor\right) & \text{with probability } 1/2, \\[1ex]     \left(\left\lfloor\frac{X^{(t-1)}_u+X^{(t-1)}_v}{2}\right\rfloor, \left\lceil\frac{X^{(t-1)}_u+X^{(t-1)}_v}{2}\right\rceil\right) & \text{with probability } 1/2 .    \end{cases}$
\end{tcblisting}

We assume the tokens in $\mathcal{T}$ are numbered from $1$ to $|\mathcal{T}|$.
In each round $t$, each token $i$ has a \emph{location} $W_i^{(t)} \in V$ 
and a \emph{height} $H_i^{(t)} \in \{1, \dots, X^{(t)}_{W_i^{(t)}}\}$. Initially tokens are ordered arbitrarily on each node, and the initial height of a token is its position in that order. 
We now define the height-sensitive process which is a realization (and refinement) of the load balancing process above (an illustration can be found in~\cref{fig:heightDescription}). The process uses the notion of siblings: two tokens with the same height on two matched nodes are called siblings.

\medskip

\noindent%
\colorbox{black!10}{\minipage{\textwidth-2\fboxsep}%
\textbf{Height-Sensitive Process} (round $t$, matching edge $\{u,v\} \in E$ with $X_u^{(t-1)} \geq X_v^{(t-1)}$)
\smallskip

1. \emph{Moving step:} Move the top $\lceil (X_u^{(t-1)}-X_v^{(t-1)})/2\rceil$ tokens from node $u$ to node $v$, preserving their relative order and adjusting their height accordingly.
\smallskip

2. \emph{Shuffling step:}  Swap each token on node $v$ with its sibling on $u$ with probability $1/2$, independently from all other tokens. In case where the topmost token at $v$ has no sibling, this token is also moved to $u$ with probability $1/2$.
\endminipage}

\medskip

\begin{figure}[ht]
\begin{center}
\begin{tikzpicture}[scale=1.0]
\tikzstyle{knoten}=[rectangle,rounded corners=3pt,draw=black,fill=white,scale=1.0]
\tikzstyle{bknoten}=[rectangle,rounded corners=3pt,draw=black,fill=blue!60,scale=1.0]

\begin{scope}[yshift=-0.05cm]
\draw[-stealth] (-3.5,0) to node[pos=1.0,above]{height} (-3.5,3.5);
\draw (-3.6,0.05) to node[pos=0,left=-1pt]{$1$} (-3.4,0.05);
\draw (-3.6,0.55) to node[pos=0,left=-1pt]{$2$} (-3.4,0.55);
\draw (-3.6,1.05) to node[pos=0,left=-1pt]{$3$} (-3.4,1.05);
\draw (-3.6,1.55) to node[pos=0,left=-1pt]{$4$} (-3.4,1.55);
\draw (-3.6,2.05) to node[pos=0,left=-1pt]{$5$} (-3.4,2.05);
\draw (-3.6,2.55) to node[pos=0,left=-1pt]{$6$} (-3.4,2.55);
\draw (-3.6,3.05) to node[pos=0,left=-1pt]{$7$} (-3.4,3.05);

\draw[gray,dotted,thin] (-3.5,0.05) to (5.6,0.05);
\draw[gray,dotted,thin] (-3.5,0.55) to (5.6,0.55);
\draw[gray,dotted,thin] (-3.5,1.05) to (5.6,1.05);
\draw[gray,dotted,thin] (-3.5,1.55) to (5.6,1.55);
\draw[gray,dotted,thin] (-3.5,2.05) to (5.6,2.05);
\draw[gray,dotted,thin] (-3.5,2.55) to (5.6,2.55);
\draw[gray,dotted,thin] (-3.5,3.05) to (5.6,3.05);
\draw[gray,dotted,thin] (-3.5,3.55) to (5.6,3.55);

\end{scope}

\node[knoten] (0) at (-2.6,3) {$7$};
\node[knoten] (0) at (-2.6,2.5) {$6$};
\node[knoten] (0) at (-2.6,2) {$5$};
\node[knoten] (0) at (-2.6,1.5) {$4$};
\node[knoten] (0) at (-2.6,1) {$3$};
\node[knoten] (0) at (-2.6,0.5) {$2$};
\node[knoten] (0) at (-2.6,0) {$1$};
\draw[fill=black,opacity=1] (-2.6,-0.4) circle (0.15cm);
\node[] () at (-2.6,-0.7) {$u$};

\draw[-stealth] (-2.3,2.55) to [bend right=-20] node[pos=0.5,above]{} (-1.9,2.55);
\draw[-stealth] (-2.3,3.05) to [bend right=-20] node[pos=0.5,above]{} (-1.9,3.05);
\draw[-stealth] (-2.3,2.05) to [bend right=-20] node[pos=0.5,above]{} (-1.9,2.05);

\node[knoten] (0) at (-1.6,0.5) {$9$};
\node[knoten] (0) at (-1.6,0) {$8$};
\draw[fill=black,opacity=1] (-1.6,-0.4) circle (0.15cm);
\node[] () at (-1.6,-0.7) {$v$};

\node[] () at (-2.1,-1.5) {$(a)$};

\begin{scope}[xshift=0.6cm]

\node[knoten] (0) at (0,1.5) {$4$};
\node[knoten] (0) at (0,1) {$3$};
\node[knoten] (0) at (0,0.5) {$2$};
\node[knoten] (0) at (0,0) {$1$};
\draw[fill=black,opacity=1] (0,-0.4) circle (0.15cm);
\node[] () at (0,-0.7) {$u$};

\node[knoten] (0) at (1.0,2) {$7$};
\node[knoten] (0) at (1.0,1.5) {$6$};
\node[knoten] (0) at (1.0,1.0) {$5$};
\node[knoten] (0) at (1.0,0.5) {$9$};
\node[knoten] (0) at (1.0,0) {$8$};
\draw[fill=black,opacity=1] (1.0,-0.4) circle (0.15cm);
\node[] () at (1.0,-0.7) {$v$};

\draw[dashed,stealth-stealth] (0.3,0.05) to [bend right=-20] node[pos=0.5,above]{$?$} (0.7,0.05);
\draw[dashed,stealth-stealth] (0.3,0.55) to [bend right=-20] node[pos=0.5,above]{$?$} (0.7,0.55);
\draw[dashed,stealth-stealth] (0.3,1.05) to [bend right=-20] node[pos=0.5,above]{$?$} (0.7,1.05);
\draw[dashed,stealth-stealth] (0.3,1.55) to [bend right=-20] node[pos=0.5,above]{$?$} (0.7,1.55);
\draw[dashed,stealth-stealth] (0.3,2.05) to [bend right=-20] node[pos=0.5,above]{$?$} (0.7,2.05);

\draw[-stealth,thick] (-2.25,-1.5) to [bend right=0] node[pos=0.5,above] {Moving Step} (0.25,-1.5);

\node[] () at (0.625,-1.5) {$(b)$};

\end{scope}

\begin{scope}[xshift=4cm]

\node[knoten] (0) at (0,1.5) {$6$};
\node[knoten] (0) at (0,1.0) {$3$};
\node[knoten] (0) at (0,0.5) {$2$};
\node[knoten] (0) at (0,0) {$8$};
\draw[fill=black,opacity=1] (0,-0.4) circle (0.15cm);
\node[] () at (0,-0.7) {$u$};

\node[knoten] (0) at (1,2) {$7$};
\node[knoten] (0) at (1,1.5) {$4$};
\node[knoten] (0) at (1,1) {$5$};
\node[knoten] (0) at (1,0.5) {$9$};
\node[knoten] (0) at (1,0) {$1$};
\draw[fill=black,opacity=1] (1,-0.4) circle (0.15cm);
\node[] () at (1,-0.7) {$v$};

\draw[-stealth,thick] (-2.35,-1.5) to [bend right=0] node[pos=0.5,above] {Shuffling Step} (0.1,-1.5);

\node[] () at (0.5,-1.5) {$(c)$};

\end{scope}

\end{tikzpicture}
~\vspace{-1.5em}
\end{center}
\caption{Illustration of the height-sensitive process and the effect on the tokens, which are labelled from $1$ to $9$. $(a)$: The configuration of tokens before the averaging; $(b)$: the configuration after the moving step and $(c)$: the configuration after the shuffling step.
}\label{fig:heightDescription}
\end{figure}
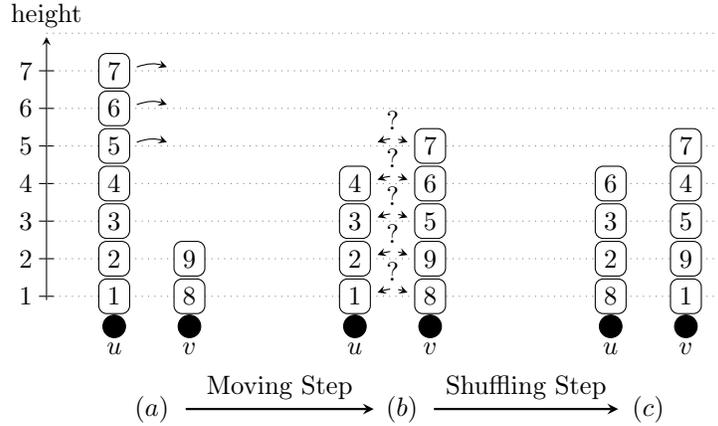

\smallskip

In the following we will refer to  the topmost token as \emph{excess token} if it does not have a sibling.
It is easy to verify that the height of a token can never increase (see \cref{lem:OneTokenRW}, $(i)$ in the appendix).
Furthermore, \emph{any individual} token performs a random walk with the sequence of matching matrices as transition matrices (\cref{lem:OneTokenRW}, $(ii)$).
Crucially, we will prove later that the movements of \emph{different} tokens satisfy a negative association property (see~\cref{lem:height}).

\subsection{Properties of Matchings}
To state the next definitions and results more formally, we adopt the notation from \cite{DBLP:conf/focs/RabaniSW98}. Assume that nodes are labelled from $1$ to $n$. We define the balancing matrix $\M^{(t)} \in [0,1]^{n\times n}$ that represents a matching of $G$ in round $t$ as
$\M_{u,v}^{(t)}:=1/2$ if $u \neq v$ are matched in round $t$, $\M_{u,v}^{(t)}:=0$ if $u \neq v$ are not matched in round $t$, and $\M_{u,u}^{(t)}:=1$ if $u$ is not matched in round $t$. 
 Following \cite{DBLP:conf/focs/SauerwaldS12,DBLP:conf/focs/RabaniSW98}, we use the notation $[u:v]\in \M^{(t)}$ to indicate that nodes $u$ and $v$ with $u<v$ are matched in round $t$.
 Further, we define the product of all matching matrices from time $t_1$ to $t_2$ as
$\M^{[t_1,t_2]}:= \prod_{s=t_1}^{t_2}\M^{(s)}$ (if $t_1 > t_2$ then $\M^{[t_1,t_2]}$ is the identity matrix). For simplicity, we use the same symbol $\M^{(t)}$ for both the matching of round $t$ and the corresponding balancing matrix, sometimes referring to both just as ``matching''. Also, we will always use upper case $\M$ for matchings. Later we also use $\M^{[t_1+1,t_2]}_{u,D} : = \sum_{v\in D} \M^{[t_1+1,t_2]}_{u,v}$. Finally, $\M^{[t_1,t_2]}_{u,.}$ denotes the row of the matrix $\M^{[t_1,t_2]}$ corresponding to node $u$. 
By $\vec{1}$ we denote the row vector of length $n$ in which each entry is $1$.

\begin{definition}[{\cite[Definition 2.1]{DBLP:conf/focs/SauerwaldS12}}]\label{def:smoothing}
A fixed sequence of matchings $\left(\M^{(s)}\right)_{s=1}^{t}$, $t \geq 1$ is called $(K,\epsilon)$-smoothing if for any $x^{(0)}\in\R^{n}$ with $\discr(x^{(0)})\le K$
we have $\discr(x^{(0)}\cdot \M^{[1,t]})\le \epsilon$.
\end{definition}

The expression $x^{(0)}\cdot \M^{[1,t]}$ equals the load vector of the continuous load balancing process with initial load $x^{(0)}$ applying the matchings $\left(\M^{(s)}\right)_{s=1}^{t}$ (see, e.g., \cite{DBLP:conf/focs/RabaniSW98}).
Hence the definition above states that the matching sequence is sufficient to balance the load up to $\epsilon$ in the continuous model. 

In general the matching sequence can be generated deterministically or randomly, and our main result will cover both cases. 
The next definition states the properties of the matchings which we require for our main result.

\begin{definition}\label{def:taus}
A sequence of matchings $\left(\M^{(s)}\right)_{s=1}^{\infty}$ is called $(\tauglobal,\taulocal)$-\emph{good} if
\begin{align}
\inf_{t \in \N_{0}} \Pro{ \bigcap_{u \in V} \left\| \M_{u,\cdot}^{[t+1,t+\tauglobal]} - \vec{ \frac{1}{n}} \right\|_2^2 \leq \frac{1}{n^7}} &\ge 1-\frac{1}{n^3}
\label{eq:taus:eq-1}, \\
\inf_{t \in \N_{0}} \min_{u \in V} \Pro{ \left\| \M_{u,\cdot}^{[t+1,t+\taulocal]} \right\|_2^2 \leq \frac{1}{\log^{10}(n)}} &\ge 1- \frac{1}{\log^{11}(n)}.\label{eq:taus:eq-2}
\end{align}
\end{definition}
Since each token is performing a random walk according to the matching sequence (\cref{lem:OneTokenRW}), we can interpret these events in terms of a distribution of a token performing a time-inhomogeneous random walk and its $\ell_2$-distance to the stationary (i.e, uniform) distribution. The event in \cref{eq:taus:eq-1} means that the distribution of any token will be very close to uniform after $\tauglobal$ rounds.  
This basically corresponds to a complete, i.e., ``global'' mixing of a random walk, since it holds for \emph{any} start node $u$.
With regards to the event in \cref{eq:taus:eq-2}), it only requires a very coarse mixing of the distribution after $\taulocal$ rounds, and this condition holds only ``locally'', i.e., from a specific node $u$. 
Using standard spectral techniques, for the random matching model we have $\tauglobal = O( \log(n)/(1-\lambda))$ and $\taulocal=O(\log \log(n)/(1-\lambda))$ (assuming
 $p_{\min}=\Omega(1/\Delta)$).

\subsection{Main Results}\label{sec:mainresintro}

Our main result is the following theorem; the proof can be found in \cref{sec:proofsmainresults}. \smallskip

\noindent\colorbox{black!10}{\minipage{\textwidth-2\fboxsep}
\begin{theorem}
\label{thm:main-result}%
Let $G$ be any undirected, connected graph on $n$ nodes and consider any initial load vector $x^{(0)} \in \N_0^n$ with $\discr(x^{(0)})\le K$.
Assume our process balances the tokens via a $(\tauglobal,\taulocal)$-good sequence of matchings $\left(\M^{(s)}\right)_{s=1}^{\infty}$.
Then there exists a time $\tau$ with
\[
\tau = O\left(\frac{\log (Kn)}{\log(n)}\cdot \tauglobal+\frac{\log(n)}{\log\log(n)}\cdot \taulocal\right)
\]
such that
\[
\Pro{ \discr\left(X^{(\tau)}\right) \leq 4} \geq 1 -\exp \left(- (1/200) \cdot \frac{\log(n)}{\log \log (n)} \right),
\] 
and for any constant $c > 0$,
\[
\Pro{ \discr\left(X^{(\tau)}\right) \leq 3}\geq 
1 - \exp\left(- \log^{1-c}(n) \right).
\] 
\end{theorem}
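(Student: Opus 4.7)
The plan is to combine the main technical results of \cref{sec:spreading,sec:arbitrary} in the scheme already sketched in \cref{sec:techniques}: first reduce an arbitrary load vector to one with only $O(n)$ tokens above the average, and then invoke the constant-discrepancy ``spreading'' result for linear-token instances twice, the first invocation yielding discrepancy $4$ and the second, with a stronger tail bound, yielding $3$.

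In the first phase, I would run the height-sensitive process for $\tau_1 = \Theta(\log(Kn)/\log n) \cdot \tauglobal$ rounds. By \cref{eq:taus:eq-1}, every block of $\tauglobal$ rounds shrinks $\|x^{(0)} \M^{[1,t]} - \overline{x}\vec{1}\|_2^2$ by a factor of $\Theta(1/n)$ in the continuous process, so after $\Theta(\log(Kn)/\log n)$ such blocks the continuous configuration has $\ell_\infty$-distance $O(1)$ from $\overline{x}$. The discrete--continuous gap at each node is controlled by the shuffling coin flips, and since each token performs a random walk whose movements are, across tokens, negatively associated (the two structural lemmas on the height-sensitive process promised after its definition), the concentration tools of \cref{sec:auxiliary-results} give that, with probability at least $1 - \exp(-\Omega(\log(n)/\log\log n))$, $X^{(\tau_1)}$ has at most $O(n)$ tokens above $\overline{x}$ and symmetrically below.

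In the second phase, I would apply the main result of \cref{sec:spreading}, which handles configurations with $O(n)$ excess tokens in a further $\tau_2 = \Theta\bigl(\tauglobal + (\log n / \log\log n) \cdot \taulocal\bigr)$ rounds. The key is to follow each excess token individually: \cref{eq:taus:eq-2} guarantees that after $\taulocal$ rounds a token's distribution has collision probability $O(1/\log^{10}(n))$, and iterating for $\Theta(\log n / \log\log n)$ blocks, together with a negatively-associated Chernoff bound over the set of excess tokens, shows that no node can be hit by more than a constant number of them. This yields discrepancy $4$ with probability $1 - \exp(-\Omega(\log n / \log\log n))$. The third phase restarts the spreading analysis from $X^{(\tau_1+\tau_2)}$: since the configuration already has discrepancy $4$, only $O(n)$ tokens sit strictly above $\overline{x}+1$, and a refined version of the spreading argument (exploiting the same mixing blocks but with a tighter union bound over the topmost layer) sharpens the tail to $1 - \exp(-\log^{1-c}(n))$ at the price of reducing the gap only from $4$ to $3$.

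The main obstacle is clearly the spreading analysis of the second phase. Once the continuous process has equilibrated to within $O(1)$, no further $\ell_2$ progress is available, so one must reason directly about the joint trajectories of the $O(n)$ excess tokens. These trajectories are correlated through the shared matching sequence, so independence is unavailable and one must rely on negative association; extracting the tight constant $3$ (as opposed to the large, non-explicit constant in prior work such as \cite{DBLP:conf/focs/SauerwaldS12}) requires using both mixing conditions of \cref{def:taus} jointly, together with a careful peeling argument that handles the topmost tokens round by round.
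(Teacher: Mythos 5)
Your overall architecture (reduce to $O(n)$ excess tokens, then invoke the linear-token spreading result repeatedly) is the paper's, and your first phase matches \cref{lem:TauglobalSmoothingRelation} combined with \cref{lem:toNToken}. But there are two genuine gaps in how you chain the spreading result. First, a single invocation of \cref{lem:NtokensTo25Height} cannot take you from ``$16n$ excess tokens'' to discrepancy $4$: the proposition only guarantees maximum load $L+1$ when the (normalized) vector has at most $(L-\epsilon)n$ tokens, and with $16n$ excess tokens the smallest admissible $L$ is $17$, giving maximum load $\lceil\overline{x}\rceil+18$ and, after the symmetric argument for the minimum, discrepancy $38$ (\cref{lem:discrepancy-38}). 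Reaching $4$ requires a \emph{second} invocation that exploits the now-constant discrepancy: after normalizing so that loads lie in $\{0,\dots,38\}$ one can choose $L=\lceil\overline{x}+\tfrac12\rceil$ and $\epsilon=\tfrac12$, so that the upper bound $L+1$ and the lower bound obtained from the flipped load vector differ by only $4$ (\cref{lem:disc54to4}). So your ``two invocations'' must really be three.

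Second, the step from $4$ to $3$ is not a routine sharpening of the spreading argument, and the obstruction is structural: if $\overline{x}$ is within $O(1/\log^4(n))$ of an integer there is no admissible pair $(L,\epsilon)$ for \cref{lem:NtokensTo25Height} that would reduce the discrepancy further. The paper's fix (\cref{reductiontothree}) is to set aside $n/\log^{\delta}(n)$ ``secondary'' tokens from the two highest levels, which creates artificial slack $\epsilon=1/\log^{\delta}(n)$ for the Phase-1 drop applied to the remaining primary tokens, and then to run \cref{lem:rmlemmaphasetwo} on the at most $2n/\log^{\delta}(n)$ tokens that survive above height $\overline{x}+1$. This sub-constant $\epsilon$ is also exactly why the success probability degrades to $1-\exp(-\log^{1-c}(n))$: the per-epoch drop is only $1-\epsilon/L$, so $\log(n)/\log\log(n)$ epochs yield a tail of $\exp(-\Theta(\log^{1-\delta}(n)/\log\log(n)))$. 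Your sketch neither identifies this obstruction nor supplies the token-removal idea, so as written the last step does not go through. A minor further caveat: inside the spreading analysis one cannot simply union-bound ``no node is hit by more than a constant number of excess tokens'' over all nodes, since the per-token collision expectation is $\Theta(L)$, a constant; the paper instead proves an expected multiplicative drop per epoch (\cref{lem:generallemmaone:RM}) and applies Markov's inequality --- but since you invoke \cref{lem:NtokensTo25Height} as a black box this does not affect your chain.
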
\endminipage}
~

\medskip

\noindent Note that \cref{thm:main-result} does not make any assumption regarding the provenance of the matching sequences; all we use is the abstract property of $(\tauglobal,\taulocal)$-goodness. 
We now provide details for three explicit ways to create such sequences, namely the \emph{balancing circuit model}, the \emph{random matching model} and the \emph{asynchronous model} (a.k.a.~single edge model). To state the result for these models we need some definitions.
For any $n$ by $n$ real symmetric matrix $\M$, let $\lambda_1(\M)\ge \lambda_2(\M)\ge\cdots \ge \lambda_n(\M)$ be the $n$ eigenvalues of $\M$. For simplicity, let $\lambda(\M) := \max\{|\lambda_2(\M)|,|\lambda_n(\M)|\}$. For a non-symmetric matrix~$\M$, we define the  symmetric matrix as $\widetilde{\M}=\M\cdot \M^T$ and let $\lambda(\M):=\max\{|\lambda_2(\widetilde{\M})|,|\lambda_n(\widetilde{\M})|\}$.

\paragraph{Application to Balancing Circuits.}
In the balancing circuit model all or a subset of the edges of $G$ are covered using a periodic sequence of $\Delta$ fixed matchings $\M^{(1)},\M^{(2)},\ldots,\M^{(\Delta)}$. 
The sequence $\left(\M^{(s)}\right)_{s=1}^{\infty}$ is chosen deterministically and periodically such that $\M^{(s)}=\M^{((s-1)\bmod\Delta+1)}$. Such matchings can be found, e.g., via edge-coloring, and there exist many efficient distributed algorithms that compute such a coloring \cite{DBLP:journals/jcss/GhoshM96}. In the balancing circuit model $\Delta$ is (an upper bound on) the maximum degree of the graph induced by the union of all matchings.
In this model, \Cref{thm:main-result} provides, see \Cref{cor:BC},
\begin{align}
\tau&= O(\tauspectral{K}), \qquad \mbox{ where } \qquad \tauspectral{K} := \Theta \left( \frac{\Delta \cdot \log(Kn)}{1-\lambda\left(\vphantom{X^1}\smash{\M^{[1,\Delta]}}\right)} \right). \label{eq:balancing_spectral}
\end{align}
\paragraph{Application to Random Matchings.}
In line with previous work \cite{DBLP:journals/jcss/GhoshM96,DBLP:journals/tit/BoydGPS06,DBLP:conf/focs/SauerwaldS12}, we consider the following class of randomized algorithms that generate a sequence of (random) matchings $\left(\M^{(s)}\right)_{s=1}^{\infty}$, and simply refer to this as the ``random matching model''. First, we require that the matchings are mutually independent across all rounds. Second, for all edges $\{u,v\}$ and all rounds $t \geq 1$, the probability that $\{u,v\}$ is included in $\M^{(t)}$ is at least $p_{\min}:=c/\Delta$ for some time-independent value $c>0$ (however, $c$ is allowed to be a function of $n$). Recall that we use $\Delta$ as the number of matchings for the balancing circuit model and as maximum degree of $G$ in the random matching model.
Note that the decisions whether or not to include two edges into a matching within the same round are clearly \emph{not} independent.
Some concrete distributed algorithms that satisfy both conditions are described in \cite{DBLP:journals/jcss/GhoshM96,DBLP:journals/tit/BoydGPS06}. 
To state the results for the random matching model we 
define the diffusion matrix $\P$  as $\P_{u,v}:=1/(2\Delta)$ if $(u,v)\in E$, $\P_{u,v}:=1-\operatorname{deg}(u)/(2\Delta)$ if $u=v$, and $\P_{u,v}:=0$ otherwise.
In this case, \Cref{thm:main-result} provides, see \Cref{cor:RM},
\begin{align}
 \tau &= O(\tauspectral{K}), \qquad \mbox{ where } \qquad \tauspectral{K} :=
 \Theta \left(\frac{\log(Kn)}{p_{\min}\cdot \Delta \cdot\left(1-\lambda\left(\vphantom{X^1}\P\right)\right)}\right)  \label{eq:matching_spectral}
 \end{align}
 Note that in case $p_{\min}=\Omega(1/\Delta)$ we have 
$\tauspectral{K} =\Theta \left( \log(Kn)/(1-\lambda(\P)) \right)$.

\paragraph{Application to the Asynchronous (Single Edge) Model.} 
 In this model, at each round we pick a single edge $e$ uniformly at random. This is a special case of the random matching model with $p_{\min} = 1/|E|$; and thus, the same of definition of $\tauspectral{K}$ applies here. Therefore, \Cref{thm:main-result} provides $\tau =O\left(\frac{|E|}{\Delta}\cdot\log(Kn)/\left(1-\lambda\left(\vphantom{X^1}\P\right)\right)\right)
 =O\left( n \cdot \frac{d}{\Delta} \cdot \log(Kn)/\left(1-\lambda\left(\vphantom{X^1}\P\right)\right)\right)
 $, where $d=2|E|/n$ is the average degree of $G$ (see \Cref{cor:RM}).

\subsection{Comparison of our Results and Techniques with Previous Work}

An important novelty of our approach is that our analysis framework seamlessly covers the balancing circuit model, the random matching model and the asynchronous (single edge) model. This is in contrast to previous works \cite{DBLP:conf/focs/RabaniSW98,DBLP:journals/mst/MuthukrishnanGS98,DBLP:conf/focs/SauerwaldS12,DBLP:conf/stoc/FriedrichS09}, which either focus only on one specific matching model or provide tailored analyses for each of them. To unify the three models covered here, our analysis is based on a coarse (and local) and a fine (and global) mixing/balancing property of these models.

In comparison with \cite{DBLP:conf/focs/SauerwaldS12}, the constant factors in both our running time and discrepancy are explicit and small. In addition to our analysis being tighter and simpler, we also obtain a larger success probability. We note that the tool box we use differs substantially from that of \cite{DBLP:conf/focs/SauerwaldS12}, not least because the pièce de résistance of our analysis, getting from a large constant bound for the discrepancy down to a very small constant is not being done in \cite{DBLP:conf/focs/SauerwaldS12} at all, and is one of the hardest problems we solve.
Finally, the results of \cite{DBLP:conf/focs/SauerwaldS12} require regular graphs in the random matching, and a constant $\Delta$ for the balancing circuit model. Our results are covering both of these models (and additionally the asynchronous model), and hold without any of these restrictions.

While our proof method yields more general and tighter results, we also believe that our proof is more intuitive and direct. The first and more straightforward step is to carefully bound the number of tokens in each possible subset $S \subseteq V$ and then apply a union bound. This is based on a new Hoeffding-type concentration bound, which generalizes and tightens previous bounds \cite{DBLP:conf/stoc/FriedrichS09, DBLP:conf/focs/SauerwaldS12} in that it works for \emph{arbitrary} convex combinations of the load vector. This leaves us with only $O(n)$ tokens to balance.

The second part of the analysis makes uses of our new height-sensitive process, which constraints the movement of tokens in such a way that their heights are non-increasing. Despite this restriction, we can prove that their movements are negatively associated. This property, together with the Hoeffding-type concentration, is then used in an involved analysis to show that eventually a discrepancy $4$ is reached. An extra iteration of this method finally yields a discrepancy bound of $3$.

\section{An Outline of the Analysis}\label{sec:techniques}

In this section we give a more detailed outline of our analysis. First, in \cref{sec:technical_contributions}, we present a collection of the most important technical results. These technical tools are crucial for our analysis and their proofs are considerably more challenging than deriving the discrepancy bounds using these tools. Additionally, we believe that some of these tools are of independent interest. At the end of this section, in \cref{sec:together}, we give a brief summary on how these technical results are combined to obtain our main result, which is a discrepancy bound of $3$.

\subsection{Our Technical Contributions}\label{sec:technical_contributions}

In order to prove small discrepancy bounds, we need to keep track of the number of tokens at a specific height. Thanks to the height-sensitive process defined earlier, the height of a token is non-increasing over time, and the sequence of locations of a token $i \in \mathcal{T}$, $(W_i^{(t)})_{t \geq 0}$, form a random walk (\cref{lem:OneTokenRW}). Crucially, we establish that these random walks are negatively associated: 
\begin{lemma}[simplified version of \cref{lem:height}]
Fix a subset of tokens $\mathcal{B} \subseteq \mathcal{T}$ at round $0$. Let $t > 0$ be any round, and fix the matchings between round $0$ and $t$. Then for any set $D \subseteq V$, the events $\{ W_i^{(t)} \in D \}, i \in \mathcal{B}$ are negatively associated.
\end{lemma}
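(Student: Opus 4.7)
The plan is to proceed by induction on $t$. The base case $t=0$ is trivial: all locations $W_i^{(0)}$ are deterministic, so each indicator is a constant and is trivially negatively associated (NA).

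For the inductive step, I would first condition on the full state at time $t-1$ (the locations, and hence the heights, of all tokens). Given this conditioning the moving step of round $t$ is deterministic, and the shuffling step reduces to a product of independent coin flips---one per sibling pair at each common height on each matched edge, plus one per ``excess'' token. Conditional on the state at $t-1$, the indicators $\{\1[W_i^{(t)} \in D]\}_{i \in \mathcal{B}}$ therefore decompose cleanly: tokens whose post-move location lies on an unmatched edge keep a (now deterministic) position; and for each sibling pair $(a,b)$ at post-move positions $(u,h)$ and $(v,h)$, the pair $(\1[W_a^{(t)} \in D], \1[W_b^{(t)} \in D])$ equals $(\1[u \in D], \1[v \in D])$ or its swap, each with probability $1/2$, independently across pairs. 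These swapped pairs are perfectly negatively correlated since their sum is the constant $\1[u \in D] + \1[v \in D]$, and since concatenations of independent NA collections remain NA, the \emph{conditional} NA statement follows.

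The crux of the argument is lifting this conditional NA to an \emph{unconditional} one. Because NA is not preserved under arbitrary mixtures, I cannot simply integrate over the distribution of the state at $t-1$. To close this gap I would strengthen the inductive invariant to also track the exchangeability structure created by past sibling swaps: after every shuffling step the two tokens of a just-swapped pair have an exchangeable joint distribution, and this symmetry is precisely what is needed to rule out the delicate case of the NA inequality in which a monotone test function straddles the two sides of a swapped pair (forcing the troublesome cross-terms to cancel). A complementary and arguably cleaner route---which I would pursue if the invariant gets unwieldy---is to expose the mutually independent shuffling coins in a fixed order rather than round by round: each indicator $\1[W_i^{(t)} \in D]$ is then a deterministic function of independent coins, and NA can be argued via standard closure properties over disjoint coin-support sets. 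The main obstacle is exactly this lifting: the moving step depends on the random loads at $t-1$, so which tokens become siblings in round $t$ is itself a function of past coins, and showing that negative association survives this state-dependent re-pairing---rather than being destroyed by the mixture over states---is the technical heart of the lemma.
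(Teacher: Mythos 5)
Your proposal has a genuine gap, and it sits exactly where you locate it: the passage from conditional to unconditional negative association is never carried out. The base case and the one-round conditional analysis (independent coins per sibling pair, perfectly anti-correlated swapped indicators, closure of NA under concatenation of independent NA families) are fine, but neither of your two proposed routes for the lifting closes the argument. NA is not preserved under mixtures, and the ``exchangeability invariant'' is only a sketch of a sketch --- it is not clear what the strengthened induction hypothesis would be, nor that it survives the moving step, which re-pairs tokens in a way that depends on the random loads at time $t-1$. The ``fixed order of coins'' route fails more concretely: the indicators $\{W_i^{(t)} \in D\}$ for different tokens $i$ are \emph{not} functions of disjoint sets of coins. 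Two tokens that are siblings in some round share the single coin deciding that swap, and \emph{which} coins a given token's trajectory consults is itself determined by earlier coins (through the load vector and hence the moving step). So the standard closure property for monotone functions of disjoint blocks of independent variables does not apply. Note also that you are attempting a stronger statement than the paper actually proves: the full \cref{lem:height} establishes only the one-sided product bound $\Pro{\bigcap_{i\in\mathcal{B}}\{W_i^{(t_2)}\in D\}} \le \prod_{i\in\mathcal{B}} \M^{[t_1+1,t_2]}_{w_i^{(t_1)},D}$, which is all that is used downstream (in \cref{lem:collision}).

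The paper's proof avoids the lifting problem entirely by running the induction on a different object. It defines $Z_i^{(t)} := \M^{[t+1,t_2]}_{W_i^{(t)},D}$, the \emph{future} hitting probability of $D$ from token $i$'s current location, and shows that $Z^{(t)} := \prod_{i\in\mathcal{B}} Z_i^{(t)}$ is a supermartingale with respect to the process filtration. The one-round check is done conditionally on the state at $t-1$ --- precisely where your conditional analysis lives --- and uses your same case split (sibling pairs inside $\mathcal{B}$, tokens whose sibling is outside $\mathcal{B}$, the excess token), with the sibling-pair case reducing to $p\cdot q \le \left(\frac{p+q}{2}\right)^2$ where $p,q$ are the two candidate future hitting probabilities. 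Because the tracked quantity at time $t_2$ equals the indicator of the target intersection and at time $t_1$ equals the claimed upper bound, the conditional one-round estimates compose across rounds by iterated expectation; no unconditional-NA statement about the locations is ever needed. If you want to salvage your approach, the cleanest fix is to adopt this change of variables: prove the product inequality for the $Z_i^{(t)}$ rather than NA of the location indicators.
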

In comparison to previous work (Lemma~4.2~from \cite{DBLP:conf/focs/SauerwaldS12}), our lemma yields the same statement but here tokens move following the definition of the height-sensitive process, whereas in \cite{DBLP:conf/focs/SauerwaldS12}, the two nodes exchange all tokens freely, which means that the height of a token could increase. Even though in this sense our process might be slightly harder to describe and analyze, our proof is simpler and more elementary than \cite{DBLP:conf/focs/SauerwaldS12}, e.g., we do not need the somewhat unwieldy negative regression condition from \cite{DBLP:journals/rsa/DubhashiR98}. 
We continue with a Hoeffding-like concentration bound.

\begin{lemma}[simplified version of \cref{ChernoffBound}]
Consider any load vector $x^{(0)}$ with $\discr(x^{(0)}) \leq K$ and any round $t\ge 1$ 
such that the sequence of matchings from round $1$ to $t$ is $(K,1/(2n))$-smoothing. Then for any stochastic vector $(a_w)_{w \in V}$, it holds for any $\delta > 0$,
\[
  \Pro{ 
        \left| \sum_{w \in V} a_w \cdot X_w^{(t)} - \overline{x} \right| \geq \delta} \leq 2 \cdot \exp\left( -\frac{ (\delta-1/(2n))^2}
        { 4 \|a\|_2^2} \right).
\]
\end{lemma}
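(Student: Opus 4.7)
My plan is to express $S_t := \sum_{w \in V} a_w X_w^{(t)}$ as the terminal value of a Doob-type martingale driven by the per-edge rounding coins, and then apply a sub-Gaussian Azuma-type bound whose variance factor telescopes cleanly in $t$. For each round $t' \in \{0,1,\ldots,t\}$ I set $\psi^{(t')} := \M^{[t'+1,t]} a^T$, viewing $a$ as a row vector, so that $\psi^{(t)} = a^T$ and $\M^{(t')}\psi^{(t')} = \psi^{(t'-1)}$. Since the process satisfies $\Ex{X^{(t')} \mid X^{(t'-1)}} = X^{(t'-1)} \M^{(t')}$, the sequence $Y^{(t')} := X^{(t')} \cdot \psi^{(t')}$ is a martingale with $Y^{(t)} = S_t$ and $Y^{(0)} = \sum_w a_w (x^{(0)} \M^{[1,t]})_w$. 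Because $a$ is a probability vector, $\M^{[1,t]}$ is doubly stochastic, and the matching sequence is $(K, 1/(2n))$-smoothing, each entry of $x^{(0)}\M^{[1,t]}$ lies within $1/(2n)$ of $\overline{x}$; hence $|Y^{(0)} - \overline{x}| \leq 1/(2n)$, which accounts for the $-1/(2n)$ shift in the exponent through a final triangle inequality.

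Next I would analyze the martingale difference $D^{(t')} := Y^{(t')} - Y^{(t'-1)} = (X^{(t')} - X^{(t'-1)}\M^{(t')}) \cdot \psi^{(t')}$. Only matched edges $[u,v] \in \M^{(t')}$ with $X^{(t'-1)}_u + X^{(t'-1)}_v$ odd contribute: each such edge carries an independent fair coin $\xi_{uv}^{(t')} \in \{\pm 1\}$ producing $\pm 1/2$ at $u$ and $\mp 1/2$ at $v$, hence a net contribution $\tfrac{1}{2}\xi_{uv}^{(t')}(\psi^{(t')}_u - \psi^{(t')}_v)$ to $D^{(t')}$. Unmatched nodes and edges with even combined load contribute zero. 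Conditional on the history up to round $t'-1$, $D^{(t')}$ is therefore a sum of independent centered Rademacher-scaled terms, and Hoeffding's lemma yields
\[
\Exco{e^{\lambda D^{(t')}}}{X^{(0)},\ldots,X^{(t'-1)}} \;\leq\; \exp\!\left(\frac{\lambda^2}{8} \sum_{[u,v]\in \M^{(t')}} \bigl(\psi^{(t')}_u - \psi^{(t')}_v\bigr)^2\right).
\]

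The crucial technical step is the elementary identity
\[
\|\M y\|_2^2 \;=\; \|y\|_2^2 \;-\; \tfrac{1}{2} \sum_{[u,v]\in \M}(y_u - y_v)^2,
\]
valid for any matching matrix $\M$ and any vector $y$, which follows from an edge-by-edge computation. Applied with $y = \psi^{(t')}$ it gives $\sum_{[u,v]\in \M^{(t')}}(\psi^{(t')}_u - \psi^{(t')}_v)^2 = 2\bigl(\|\psi^{(t')}\|_2^2 - \|\psi^{(t'-1)}\|_2^2\bigr)$, and summing over $t' = 1,\ldots, t$ telescopes to $2\bigl(\|a\|_2^2 - \|\M^{[1,t]} a^T\|_2^2\bigr) \leq 2\|a\|_2^2$. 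Iterating conditional expectations across rounds then yields $\Ex{\exp(\lambda(Y^{(t)}-Y^{(0)}))} \leq \exp(\lambda^2 \|a\|_2^2 / 2)$, which is at least as strong as the sub-Gaussian parameter $2\|a\|_2^2$ implicit in the claimed bound. A standard Chernoff optimization over $\lambda$, combined with the $1/(2n)$ bias of $Y^{(0)}$ via triangle inequality, produces the stated two-sided tail.

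The main obstacle is recognizing and proving this telescoping Dirichlet-energy identity: it is precisely what collapses the ostensibly round-dependent variance proxies $\sum_{[u,v]\in \M^{(t')}}(\psi^{(t')}_u-\psi^{(t')}_v)^2$ into the single, round-independent factor $\|a\|_2^2$ in the exponent; without it the variance would naively grow with $t$ and give a useless bound. Once this identity, the martingale setup, and the per-round Hoeffding bound are in place, the remaining steps, namely iterating conditional expectations, Chernoff optimization in $\lambda$, and absorbing the $1/(2n)$ smoothing bias, are routine.
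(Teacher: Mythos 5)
Your proof is correct and takes essentially the same route as the paper: your martingale difference $D^{(t')}=\sum_{[u:v]\in\M^{(t')}}E_{u,v}^{(t')}\bigl(\psi_u^{(t')}-\psi_v^{(t')}\bigr)$ is exactly the paper's rounding-error decomposition \cref{eq:disc_cont} combined with the Azuma-type bound of \cref{lem:ss12chernoff}, and your telescoping Dirichlet-energy identity is precisely the potential-function computation of \cref{lem:theorem32_new} (cf.\ \cref{eq:psi:non-increasing}), which likewise collapses the per-round variance proxies to $2\|a\|_2^2$. The only differences are presentational (you derive the per-round MGF bound from Hoeffding's lemma rather than citing the concentration inequality as a black box, and you drop the $1/n$-centering of the potential), and your handling of the $1/(2n)$ smoothing bias matches the paper's.
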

In order to appreciate this result, we first discuss a more general (but somewhat harder to apply) version, which is derived in the proof of \cref{ChernoffBound}. This states that for \emph{any} matching sequence, and any stochastic vector $(a_w)_{w \in V}$, $\mu:= \Ex{\sum_{w \in V} a_w \cdot X_w^{(t)}}$,
and any $\delta > 0$ it holds
\[
  \Pro{ 
        \left| \sum_{w \in V} a_w \cdot X_w^{(t)} - \mu \right| \geq \delta} \leq 2 \cdot \exp\left( -\frac{ \delta^2}
        { 4 \|a\|_2^2} \right).
\]
This tail bound essentially matches the one from Hoeffding's inequality for $\sum_{w \in V} Y_w$ if the $Y_w$ are all independent and $Y_w \in [-a_w,a_w]$. However, in the load balancing process the range of the $X_w$'s are unbounded and also the $X_w$'s are far from being independent; for instance, two nodes matched in round $t$ must have a load difference of at most $1$.
 
Our concentration bound is established by carefully aggregating all rounding errors contributing to $\sum_{w \in V} a_w \cdot X_w^{(t)} - \mu$ by means of a quadratic potential function. On a high level, our proof resembles that in \cite{DBLP:conf/focs/SauerwaldS12}, however, one key difference is that we employ a more general potential function which involves the coefficients $a:=(a_w)_{w \in V}$.

Compared to prior work, our result generalizes~\cite[Lemma~3.5]{DBLP:conf/focs/SauerwaldS12}
to \emph{arbitrary} stochastic vectors $a$. The corresponding result in \cite{DBLP:conf/focs/SauerwaldS12} only works for the specific vector $a$ with $a_w:=\M_{w,u}^{[t+1,t_1]}$, for a fixed node $u\in V$ and round $t_1 \geq t$; in particular, if we choose $t_1=t$ then $a$ is a unit-vector.  There are also earlier and weaker versions of this inequality, e.g., in \cite[Theorem~4.6]{DBLP:conf/stoc/FriedrichS09}, which only match our form if the underlying graph and matchings have constant expansion. We therefore believe that our result will be the final word in the quest to find a tight and general concentration inequality of this type.

The first application of our concentration inequality (\cref{ChernoffBound}) is to bound the number of tokens above the average load:
\begin{lemma}[simplified version of \cref{lem:toNToken}]
Consider an initial load vector with $\discr(x^{(0)})\le K$ and round $t \geq 0$ such that the sequence $\left(\M^{(s)}\right)_{s=1}^{t}$ which is
$(K, 1/(2n))$-smoothing. Then,
\begin{align*}
\Pro{ \sum_{w \in V} \max \left\{ X_w^{(t)} - \overline{x}, 0 \right\} \leq 16 \cdot n }&\geq 1-2\cdot n^{-2}.
\end{align*}
\end{lemma}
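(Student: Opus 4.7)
My approach is to apply the Hoeffding-type bound of~\cref{ChernoffBound} to every subset of $V$ and combine these with a union bound. Since the balancing process conserves the total load, $\sum_{w\in V}(X_w^{(t)}-\overline x)=0$, and therefore
\[
\sum_{w\in V}\max\{X_w^{(t)}-\overline x,\,0\}=\sum_{w\in S^\star}\bigl(X_w^{(t)}-\overline x\bigr),\qquad\text{where}\quad S^\star:=\bigl\{w\in V:\,X_w^{(t)}>\overline x\bigr\}.
\]
It therefore suffices to show that, with probability at least $1-2n^{-2}$, the bound $\sum_{w\in S}(X_w^{(t)}-\overline x)\le 16n$ holds \emph{simultaneously} for every subset $S\subseteq V$; specialising this uniform statement to the (random) set $S^\star$ then yields the lemma.

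For fixed $S\subseteq V$ with $|S|=k\ge 1$, I take $a_w:=\mathbf{1}_{w\in S}/k$, so $a$ is stochastic with $\|a\|_2^2=1/k$ and $\sum_{w}a_w X_w^{(t)}=\tfrac{1}{k}\sum_{w\in S}X_w^{(t)}$. Multiplying the inequality of~\cref{ChernoffBound} through by $k$ gives
\[
\Pro{\sum_{w\in S}\bigl(X_w^{(t)}-\overline x\bigr)\ge k\delta}\;\le\;2\exp\!\Bigl(-\tfrac14\, k\,(\delta-1/(2n))^2\Bigr).
\]
I choose $\delta_k:=1/(2n)+2\sqrt{\log(en/k)}+2\sqrt{3\log(n)/k}$, which makes the exponent at least $k\log(en/k)+3\log n$. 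A union bound over the $\binom{n}{k}\le(en/k)^k$ subsets of size~$k$ and then over $k\in\{1,\dots,n\}$ gives total failure probability at most $\sum_{k=1}^n 2/n^3\le 2/n^2$.

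It remains to verify that $k\delta_k\le 16n$ uniformly in $k\in\{1,\dots,n\}$. Expanding the three terms,
\[
k\delta_k \;=\; \tfrac{k}{2n} \;+\; 2k\sqrt{\log(en/k)} \;+\; 2\sqrt{3k\log n},
\]
which are bounded by $\tfrac12$, by $2n$, and by $2\sqrt{3n\log n}\le n$ (for $n$ large enough) respectively. For the middle term I use the elementary fact that $k\mapsto k\sqrt{\log(en/k)}$ is non-decreasing on $[1,n]$ (its unconstrained maximum sits at $k=\sqrt{e}\,n>n$) and attains its maximum value $n$ at $k=n$. Altogether $k\delta_k\le 3n+\tfrac12$, well inside $16n$. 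The main difficulty of the proof is the trade-off hidden in the choice of $\delta_k$: the $\binom{n}{k}$ enumeration forces $\delta_k$ to grow like $\sqrt{\log(en/k)}$ as $k$ shrinks, and it is precisely the inequality $k\sqrt{\log(en/k)}\le n$ on $[1,n]$ that keeps the aggregate bound linear in $n$; the generous constant $16$ then comfortably absorbs the remaining lower-order corrections and any small-$n$ edge cases.
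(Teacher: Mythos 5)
Your proof is correct, and while it leans on the same central tool as the paper (\cref{ChernoffBound} applied with the flat stochastic vector $a_w=\mathbf{1}_{w\in S}/|S|$ on a subset $S$, followed by a union bound over the $\binom{n}{k}$ subsets of each size), the aggregation is genuinely different. The paper uses a dyadic level-set decomposition: for each $i$ it shows that at most $n\cdot 2^{-i}$ nodes have load $\ge \overline{x}+4i$ (plus a separate event $\mathcal{E}_0$ capping the maximum at $\overline{x}+6\sqrt{\log n}$), and then sums $(\text{count at level }i)\times(\text{level value})$ to get $\le 16n$. You instead prove a bound that holds \emph{simultaneously for every} subset $S$, with a size-dependent threshold $k\delta_k$, and then specialise to the random set $S^\star$ of overloaded nodes, using conservation of total load to identify $\sum_w\max\{X_w^{(t)}-\overline{x},0\}$ with $\sum_{w\in S^\star}(X_w^{(t)}-\overline{x})$. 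I checked the details: the choice $\delta_k=1/(2n)+2\sqrt{\log(en/k)}+2\sqrt{3\log(n)/k}$ does make the exponent at least $k\log(en/k)+3\log n$ (the cross term is nonnegative and can be dropped), which exactly cancels the $(en/k)^k$ entropy factor and leaves $2n^{-3}$ per size class; and the monotonicity of $k\mapsto k\sqrt{\log(en/k)}$ on $[1,n]$ (critical point at $k=e^{1/2}n$) gives $k\delta_k\le 3n+1/2$. Your route buys a cleaner final step — no level counting, and the singleton case $k=1$ automatically subsumes the paper's event $\mathcal{E}_0$ — at the price of the slightly more delicate two-term threshold $\delta_k$ needed to balance the union bound against the linear budget. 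The only caveat, which you flag, is that $2\sqrt{3n\log n}\le n$ requires $n$ large enough; this is harmless given the generous constant $16$ and the convention in this literature.
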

This lemma exploits that in \cref{ChernoffBound} we can choose the stochastic vector $(a_w)_{w \in V}$ freely. For each possible subset $S \subseteq V$ and integer $i \geq 1$, we bound the existence of a ``bad'' set $S$ of size $\Theta(n / 2^{i})$ in which all nodes have load at least $\xbar+4 \cdot i$. This is done by choosing an appropriate vector $(a_w)$ and threshold $\delta$ in \cref{ChernoffBound}. The proof is then concluded by a simple union bound over all possible ``bad'' subsets $S \subseteq V$.

The next proposition is the most involved step in our analysis. It shows that we can reduce the discrepancy to a constant if the initial load vector has at most $O(n)$ tokens.
\begin{proposition}[simplified and informal version of \cref{lem:NtokensTo25Height}]
Let $t :=O( \tauglobal + \taulocal \cdot \log (n)/\log\log(n))$, and let $1 \leq L =O(1)$. Consider a $(\tauglobal,\taulocal)$-good sequence of matchings $\left(\M^{(s)}\right)_{s=1}^{\infty}$ and an arbitrary load vector $x^{(0)}$ with at most $( L-\Omega(1/\log (n))) \cdot n$ tokens.
Then,
\[
\Pro{ \max_{w \in V}X_w^{(t)} \leq \newL+1 }  
\geq 1-\exp\left( \Omega\left( \frac{\log(n)}{\log \log(n)} \right) \right).
\]
\end{proposition}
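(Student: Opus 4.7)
I would structure the proof in two stages matching the two terms in the round budget $t = O\bigl(\tauglobal + \taulocal \cdot \log(n)/\log\log(n)\bigr)$.

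Stage~1: \emph{global mixing} ($\tauglobal$ rounds). The global mixing event in \cref{def:taus} gives, with probability at least $1-1/n^3$, that $\|\M^{[1,\tauglobal]}_{u,\cdot}-\vec{1/n}\|_2^2\le 1/n^7$ simultaneously for all $u$. Since $|\mathcal{T}|=O(n)$, a short Cauchy--Schwarz calculation then shows that this matching sequence is $(K,1/(2n))$-smoothing for $K=O(n)$, so the Hoeffding-type concentration inequality (simplified \cref{ChernoffBound}) applies at time $\tauglobal$. Plugging $a=e_u$ (so $\|a\|_2^2=1$) and $\delta = C\sqrt{\log n}$ for a sufficiently large constant $C$, then union-bounding over $u\in V$, I obtain $\max_u X_u^{(\tauglobal)}\le L+O(\sqrt{\log n})$ with probability $1-n^{-\Omega(1)}$. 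This leaves us in an almost-balanced configuration with only a $\sqrt{\log n}$ overshoot above $L$, and with $\xbar \le L - \Omega(1/\log n)$ still available as headroom.

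Stage~2: \emph{iterative height reduction} ($\log(n)/\log\log(n)$ phases of $\taulocal$ rounds each). I would track the level-set counts $T_h^{(t)}:=\sum_u\max\bigl(X_u^{(t)}-h+1,\,0\bigr)$; by the height-sensitive process and \cref{lem:OneTokenRW}, $T_h^{(t)}$ is non-increasing in $t$, and our target is $T_{L+2}^{(t)}=0$. Starting from $\max_u X_u^{(\tauglobal)}\le L+O(\sqrt{\log n})$, I would prove by downward induction on $m$ from $\lceil C\sqrt{\log n}\rceil$ to $2$ that after the $j$-th $\taulocal$-block past $\tauglobal$, the max load is at most $L+m$. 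Each inductive step combines three ingredients: (i)~the previous bound on the max load, which constrains the set of tokens that could produce a new overflow; (ii)~the local-mixing event $\|\M^{[t_0+1,t_0+\taulocal]}_{u,\cdot}\|_2^2\le 1/\log^{10}(n)$ (failure probability $1/\log^{11}(n)$ per start node), which spreads out each token's contribution; and (iii)~the negative association of token walks (simplified \cref{lem:height}) combined with a Chernoff tail bound for sums of negatively associated Bernoullis, applied to the number of tokens landing at a fixed node. A union bound over $u\in V$ and over the $O(\log(n)/\log\log(n))$ phases then produces the required failure probability $\exp(-\Omega(\log(n)/\log\log(n)))$.

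\paragraph{Main obstacle.} The hardest part is the iterative step in Stage~2. A single application of the concentration inequality with $a=e_u$ only gives a constant-order tail for a constant-order deviation, so it cannot yield the final bound in one shot; the geometric decrease of the admissible overflow height across phases is essential to amplify these per-step bounds into $\exp(-\Omega(\log(n)/\log\log(n)))$. A second subtlety is coupling the restricted token motion of the height-sensitive process (only top tokens move in the moving step) with the random walk representation of individual tokens that is needed to invoke concentration; preserving negative association under this restriction is exactly what \cref{lem:height} supplies. Finally, the $\Omega(1/\log n)$ slack in the total token count must be spent carefully in the inductive step to guarantee that the continuous average stays strictly below $L$ throughout, which is what allows the max load to ultimately land at $L+1$ rather than $L+2$.
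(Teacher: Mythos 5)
Your Stage~2 is where the argument breaks. You propose a downward induction on the \emph{maximum load}, establishing at each $\taulocal$-block that no node exceeds $L+m-1$ by bounding, for each fixed node $v$, the number of tokens landing on $v$ via negative association plus a Chernoff bound, and then union-bounding over $v\in V$. The number of tokens landing on $v$ is a sum of negatively associated indicators with mean $\mu\le L-\epsilon+o(1)=O(1)$, so the Chernoff tail $\Pro{N_v\ge L+m-1}$ is of order $\bigl(e\mu/(L+m-1)\bigr)^{L+m-1}$; for constant $m$ this is a \emph{constant}, and even at the top of your induction ($m=\Theta(\sqrt{\log n})$) it is only $\exp(-\Theta(\sqrt{\log n}\log\log n))$, which does not survive a union bound over $n$ nodes. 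A per-node union bound of this kind cannot push the maximum below $L+\Theta(\log n/\log\log n)$ — the usual balls-into-bins threshold — so the induction stalls far above $L+1$. A second accounting problem: the local-mixing event fails with probability $1/\log^{11}(n)$ per start node per block, so union-bounding it over $n$ nodes and $\log n/\log\log n$ blocks gives a bound exceeding $1$.

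The paper's \cref{lem:NtokensTo25Height} avoids both issues by being token-centric rather than node-centric. It tracks $Y^{(t)}$, the \emph{total} number of tokens at height $\ge L+1$, and uses that such a token can keep its height over an epoch only if at least $L$ other tokens collide with it at its random destination; by \cref{lem:collision,lem:height} the expected number of collisions is a convex combination $\sum_w a_w X_w$ with $\|a\|_2^2\le\|\M_{u,\cdot}^{E(k)}\|_2^2\le 1/\log^{10}(n)$, which \cref{ChernoffBound} bounds by $L-\epsilon+o(1)$. Conditional Markov then gives survival probability at most $1-\epsilon/L+o(1)$ per token per epoch, the local-mixing failures are absorbed additively into this drop factor rather than union-bounded, and the final $\exp(-\Omega(\log n/\log\log n))$ comes from one application of Markov's inequality to $\Ex{Y}$ after $\ell=\log n/\log\log n$ epochs (Phase 1, down to $n/\log n$ tokens above height $L$), followed by a second phase of the same type that eliminates all tokens at height $\ge L+2$. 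Your Stage~1 observation (max load $L+O(\sqrt{\log n})$ after $\tauglobal$ rounds) is correct — it is essentially \cref{cor:disc32} — but it is neither needed nor sufficient as a launching point for the contraction.
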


\begin{figure}[ht]
\begin{tikzpicture}[xscale=0.45,yscale=0.4,
win/.style={fill=green,opacity=0.4},
los/.style={fill=red,opacity=0.4},
r/.style={draw=black},
knoten/.style={rectangle,yscale=2.6,rounded corners=3pt,scale=2,draw=black, fill=white},
sty/.style={pos=0.0,below,scale=0.75}]

\draw[-stealth] (0,-0.25) to node[pos=1.0,above] {$V$} (0,8);
\draw[-stealth] (-0.25,0) to node[pos=1.0,right] {$t$} (30,0);

\draw (0.3,-0.1) to node[sty] {$0$} (0.3,0.1);
\draw (0.8,-0.1) to node[sty] {$1$} (0.8,0.1);

\draw (-0.1,0.5) to node[pos=0.0,left] {$1$} (0.1,0.5);

\draw (-0.1,7.5) to node[pos=0.0,left] {$L \cdot n$} (0.1,7.5);
\draw[dashed, gray] (0,7.5) to (28,7.5);
\draw (-0.1,6.1) to node[pos=0.0,left] {$n/\log(n)$} (0.1,6.1);
\draw[dashed, gray] (0,6.1) to (28,6.1);

\draw (7.1,-0.1) to node[sty] {$\tauglobal$} (7.1,0.1);
\draw[dashed, gray] (7.1,-0.1) to (7.1,8);

\draw (8.6,-0.1) to node[sty] {} (8.6,0.1);
\draw (10.1,-0.1) to node[sty] {} (10.1,0.1);
\draw (14.1,-0.1) to node[sty] {$t_1:=\tauglobal + \ell \cdot \taulocal$} (14.1,0.01);
\draw[dashed, gray] (14.1,-0.1) to (14.1,8);
\draw (21.1,-0.1) to node[sty] {$t_1+\tauglobal$} (21.1,0.1);
\draw[dashed, gray] (21.1,-0.1) to (21.1,8);

\draw (22.6,-0.1) to node[sty] {} (22.6,0.1);
\draw (24.1,-0.1) to node[sty] {} (24.1,0.1);
\draw (28.1,-0.1) to node[sty] {$t_1+ \ell \cdot \taulocal$} (28.1,0.1);

\draw [decorate,decoration={brace,amplitude=5pt,mirror,raise=4ex}] (8.6,-0.95) -- (7.1,-0.95) node[midway,yshift=2.75em,scale=0.75]{$\taulocal$};
\draw [decorate,decoration={brace,amplitude=5pt,mirror,raise=4ex}] (22.6,-0.95) -- (21.1,-0.95) node[midway,yshift=2.75em,scale=0.75]{$\taulocal$};

\draw [decorate,decoration={brace,amplitude=5pt,mirror,raise=4ex}] (0.4,-0.4) -- (13.9,-0.4) node[midway,yshift=-3em]{Phase $1$};
\draw [decorate,decoration={brace,amplitude=5pt,mirror,raise=4ex}] (14.1,-0.4) -- (27.9,-0.4) node[midway,yshift=-3em]{Phase $2$};

\draw[smooth, domain = 0.3:7.1, color=blue, thick] plot (\x,7.5);
\draw[smooth, domain = 7.1:14.1, color=blue, thick] plot (\x,{6.1+1.4*exp(-(\x-7.1))});
\node[blue,thick] at (7,8.2) {$Y^{(t)}$};

\draw[smooth, domain = 14.1:21.1, color=red, thick] plot (\x,4.5);
\draw[smooth, domain = 21.1:28.1, color=red, thick] plot (\x,{4.5 * exp(-(1.1*(\x-21.1)))});
 \node[red,thick] at (21,8.2) {$\widehat{Y}^{(t)}$};

\end{tikzpicture}

\caption{Illustration of Phases $1$ and $2$ in the proof of \cref{lem:NtokensTo25Height}. Phase 1 decreases $Y^{(t)}$, the number of tokens with height at least $L+1$, from $n \cdot L$ to $\frac{n}{\log(n)}$. Then Phase 2 decreases $\widehat{Y}^{(t)}$, the number of tokens with height at least $L+2$, to $0$. Both phases first use $\tauglobal$ rounds for a ``global mixing'', and then $\ell:=\frac{\log(n)}{\log \log(n)}$ short epochs of length $\taulocal$.
}\label{fig:pro_illustration}

\end{figure}
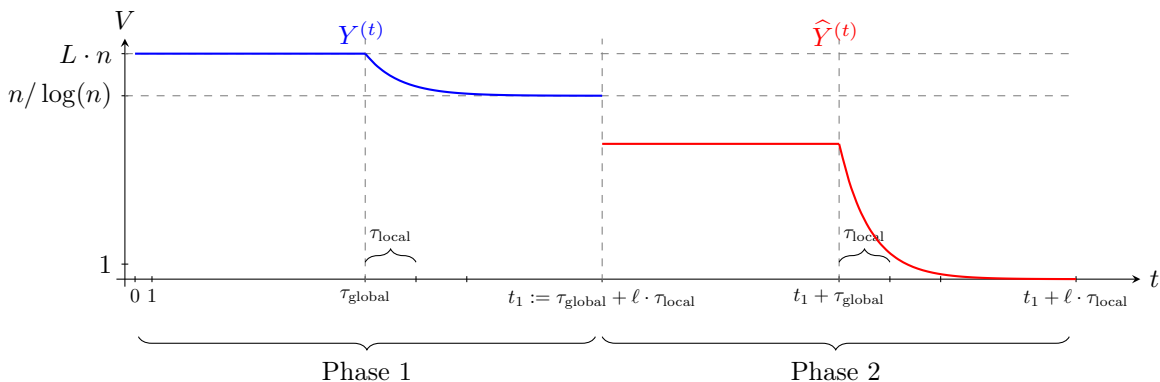

The proof of this proposition constitutes the most challenging part of our analysis.
Recall that we only have to balance $(\newL-\epsilon) \cdot n$ tokens for a constant $\epsilon > 0$ and  integer $\newL \geq 1$. Our goal is to gradually reduce the number of tokens with height $\geq \newL+1$, until no token with height $\geq \newL+2$ remains. The proposition uses two phases, which are illustrated in \cref{fig:pro_illustration}. 
\begin{itemize}\itemsep0pt
\item \textbf{Phase 1:} Reduce the number of tokens with height $\geq \newL+1$ from $(\newL-\epsilon) n$ to $n/\log (n)$ (see \cref{lem:rmlemmaphaseone}).

\item \textbf{Phase 2:} Reduce the number of tokens with height $\geq \newL+2$ further to $0$ (see \cref{lem:rmlemmaphasetwo}). 
\end{itemize}

Phases $1$ and $2$ run successively and are analyzed with the same framework, only with alternate parameters. The key difference is that after Phase $1$ we only need to cope with a sublinear number, that is, $n/\log(n)$, tokens at height at least $L+1$, and we wish to bound the number of tokens that remain at height $L+2$. This allows us to make faster progress in Phase $2$. Specifically, we prove exponential decay every $\taulocal$ rounds in Phase 1 and even super-exponential decay (factor $1/\log n$) in Phase 2. 
The analyses of both phases hinges on the key lemma stated below, which establishes a multiplicative drop on the number of tokens at height at least $L+1$ within $\taulocal$ rounds:
\begin{lemma}[simplified and informal version of \cref{lem:generallemmaone:RM} (Key Lemma)]\label{lem:keylemma}
Assume that a load vector $x^{(t)}$ has at most $(\newL-\epsilon) n$ tokens, where $t > 0$, $0<\epsilon < 1$ and $1 \leq \newL \leq \log^7(n)$ is an integer. Let $Y^{(t)}$ be the number of tokens with height at least $L+1$ in round $t$. 
Then,
\[ \Ex{ Y^{(t+\taulocal)} ~\Big|~ x^{(t)} } \le
\left( 1 - \frac{\epsilon}{\newL } + \frac{2}{\newL \cdot \log^4(n)} \right)
\cdot Y^{(t)}.
\]
\end{lemma}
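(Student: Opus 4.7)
The plan is to reduce the claim to a per-token estimate and then combine the $(\tauglobal,\taulocal)$-mixing of the random walk with the concentration bound \cref{ChernoffBound}.

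\textbf{Token-based decomposition.} By the non-increasing-height property of \cref{lem:OneTokenRW}, the set $B := \{i \in \mathcal{T} : h_i^{(t)} \ge L+1\}$ (with $|B|=Y^{(t)}$) contains every token whose height could still be $\ge L+1$ at round $t+\taulocal$, and $h_i^{(t+\taulocal)} \ge L+1$ forces $X_{W_i^{(t+\taulocal)}}^{(t+\taulocal)} \ge L+1$. Hence $Y^{(t+\taulocal)} \le \sum_{i \in B} \mathbf{1}\{X_{W_i^{(t+\taulocal)}}^{(t+\taulocal)} \ge L+1\}$, and by linearity of expectation it suffices to bound $\Pr[X_{W_i^{(t+\taulocal)}}^{(t+\taulocal)} \ge L+1 \mid x^{(t)}] \le 1 - \epsilon/L + 2/(L \log^4(n))$ for every fixed $i \in B$.

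\textbf{Per-token bound.} Fix $i \in B$ and let $u_i := W_i^{(t)}$, $a := \M_{u_i, \cdot}^{[t+1, t+\taulocal]}$. Using the height-sensitive process, \cref{lem:OneTokenRW} ensures that conditional on the matching sequence $M$ and on the excess-token decisions (which jointly determine $X^{(t+\taulocal)}$), the marginal distribution of $W_i^{(t+\taulocal)}$ is still the stochastic row vector $a$. Letting $S^{(t+\taulocal)} := \{w : X_w^{(t+\taulocal)} \ge L+1\}$, the pointwise Markov bound $\mathbf{1}\{X_w \ge L+1\} \le X_w/(L+1)$ yields
\[
\Pr\!\left[X_{W_i^{(t+\taulocal)}}^{(t+\taulocal)} \ge L+1 \,\Big|\, M, X^{(t+\taulocal)}\right] \;=\; \sum_{w \in S^{(t+\taulocal)}} a_w \;\le\; \frac{1}{L+1}\sum_{w} a_w X_w^{(t+\taulocal)}.
\]
By $(\tauglobal, \taulocal)$-goodness, $\|a\|_2^2 \le 1/\log^{10}(n)$ except on a bad event of probability $\le 1/\log^{11}(n)$. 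On the good event, \cref{ChernoffBound} applied to the stochastic vector $a$ with deviation $\delta = 1/\log^4(n)$ gives $\sum_w a_w X_w^{(t+\taulocal)} \le \overline{x} + 1/\log^4(n)$ with failure probability $2\exp(-\Omega(\log^2(n)))$. Since $\overline{x} \le L-\epsilon$,
\[
\frac{\overline{x} + 1/\log^4(n)}{L+1} \;\le\; \frac{L - \epsilon}{L+1} + \frac{1}{(L+1)\log^4(n)} \;\le\; 1 - \frac{\epsilon}{L} + \frac{1}{L \log^4(n)},
\]
where the first inequality uses $(L-\epsilon)/(L+1) \le (L-\epsilon)/L = 1 - \epsilon/L$. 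The two failure events contribute at most $1/\log^{11}(n) + 2\exp(-\Omega(\log^2(n)))$, both $\ll 1/(L \log^4(n))$ when $L \le \log^7(n)$, so the per-token probability is at most $1 - \epsilon/L + 2/(L \log^4(n))$. Summing over $i \in B$ delivers the claim.

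\textbf{Main obstacle.} The delicate point is that the token's random location $W_i^{(t+\taulocal)}$ and the heavy set $S^{(t+\taulocal)}$ are entangled through the shared shuffling randomness. The height-sensitive process decouples them cleanly: the excess-token decisions alone determine the load evolution, while the remaining sibling-swap coins are independent fair bits that do not affect loads and preserve each single-token random-walk marginal. A secondary subtlety is invoking \cref{ChernoffBound} in its right form: the ``$\overline{x}$-centered'' simplified statement requires the matching prefix from round $1$ to $t+\taulocal$ to be $(K, 1/(2n))$-smoothing, which is supplied by the $\tauglobal$-length global-mixing preamble preceding the iterated application of the Key Lemma (Phase $1$ and Phase $2$ in \cref{fig:pro_illustration}); absent such a prefix, one must use the general concentration form centered at the true mean $\mu$ and separately bound $\mu - \overline{x}$ using the $\ell_2$-mixing of $\M^{[t+1, t+\taulocal]}$ acting on $x^{(t)}$.
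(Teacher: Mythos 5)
Your overall architecture (per-token decomposition, a Markov-type bound, then the concentration inequality on a convex combination of loads) mirrors the paper's, but the central step of your per-token bound contains a genuine gap. You assert that, conditional on the matchings and on the excess-token decisions (hence on $X^{(t+\taulocal)}$), the marginal law of $W_i^{(t+\taulocal)}$ is still $a=\M_{u_i,\cdot}^{[t+1,t+\taulocal]}$, and you then compute $\Pr[X^{(t+\taulocal)}_{W_i^{(t+\taulocal)}}\ge L+1\mid M,X^{(t+\taulocal)}]=\sum_{w\in S^{(t+\taulocal)}}a_w$. This conditional independence is false. First, token $i$ can itself be the excess token (it sits at height $\ge L+1$, so it is the topmost, sibling-less token whenever its node's load equals its height); in that case its move \emph{is} the rounding decision, so conditioning on the load trajectory partially determines $W_i$. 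A two-node example makes this stark: one token on $u$, none on $v$, one matching round -- conditional on the final load vector the token's location is deterministic, not uniform. Second, and more structurally, $X_w^{(t+\taulocal)}=\sum_j \1\{W_j^{(t+\taulocal)}=w\}$ includes token $i$'s own indicator, so $W_i$ cannot be independent of the load field; the conditional law of $W_i$ given $X^{(t+\taulocal)}$ is biased \emph{toward} heavy nodes, which is exactly the adversarial direction for your inequality. Your ``decoupling'' remark (excess coins determine loads, sibling coins are independent) does not rescue this, because whether token $i$ is ever the excess token is itself determined by the sibling coins, and on that event its trajectory is driven by the excess coins.

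The paper avoids this entanglement by never conditioning on the end-of-epoch loads. It rewrites the event $\{X^{(e(k))}_{W_i}\ge L+1\}$ as ``at least $L$ \emph{other} tokens collide with $i$ at round $e(k)$'' ($Z_i^{(e(k))}\ge L$), bounds $\Ex{Z_i^{(e(k))}}$ via the negative association of distinct tokens' walks (\cref{lem:height}, yielding \cref{lem:collision}), and thereby obtains a bound of the form $\sum_w a_w\cdot X_w^{(e(k-1))}$ where the coefficient vector $a_w=\sum_v \M_{u,v}^{E(k)}\M_{w,v}^{E(k)}$ is the \emph{collision-probability} vector (not the token's own transition row) and, crucially, the load vector is the one at the \emph{start} of the epoch. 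This separation -- coefficients determined by the epoch's matchings, loads measurable before the epoch's shuffling randomness -- is what makes the application of \cref{ChernoffBound} legitimate, followed by the conditional Markov inequality $\Pr[Z_i\ge L]\le \Ex{Z_i}/L$. To repair your argument you would need to reintroduce both of these ingredients; as written, the key equality in your per-token bound does not hold.
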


In the following we will sketch the central ideas needed to establish the key lemma above.
We start with a token $i \in \mathcal{T}$ at height at least $\newL+1$, located at a node $u \in V$ in round $t$. Our goal is to lower bound the probability that after $\taulocal$ additional rounds, the token is still at this height. The location of token $i$ at time $t+ \taulocal$ is determined by a random walk with law $\M_{u,\cdot}^{[t+1,t+\taulocal]}$ (\cref{lem:OneTokenRW}); we call this (random) node $v \in V$. As heights of tokens are non-decreasing in $[t,t+\taulocal]$, the only way for token $i$ to remain at height at least $\newL+1$ is for there to be at least $L$ many other tokens $j$ which are also on node $v \in V$ at time $t+\taulocal$. 
Using the negative association lemma (\cref{lem:height}), the expected number of tokens which collide with token $i$ at round $t+\taulocal$ can be upper bounded by (see \cref{lem:collision}),
\begin{align*}
 \sum_{w \in V}  \Biggl( \underbrace{ \sum_{v \in V}  
 \M_{u,v}^{[t+1,t+\taulocal]} \cdot \M_{w,v}^{[t+1,t+\taulocal]}}_{=:a_w}  \Biggr) \cdot X_w^{(t)}.
\end{align*}
We proceed with upper bounding the right hand side. This is a convex combination of a load vector, which makes it amenable to our new concentration inequality for convex combinations of loads. Using that the matching sequences are of length $\taulocal$, we obtain with reasonably large probability over the matching sequence that $\|a\|_2^2$ is small. Once we have established this, we apply \cref{ChernoffBound} to the above sum; here, for fixed $a$ the randomness is over the matching sequences and shuffling steps in $[1,t]$, where $t \geq \tauglobal$. Taking aside some technicalities, we can then conclude that there is an expected drop in the number of tokens at height $L+1$ within $\taulocal$ rounds. 

One significant technical challenge is to iterate this argument over consecutive epochs of length $\taulocal$, as each epoch also depends on the random decisions in previous epochs (both matchings and shuffling decisions). We will overcome these dependencies by carefully defining events (corresponding to the local and global behaviour of the process; similar to $\taulocal$ and $\tauglobal)$, and then integrate these events into a submartingale which shows that the number of tokens at height $L+1$ drops.

\subsection{Putting the Pieces Together}\label{sec:together}
With the powerful \cref{lem:NtokensTo25Height} at hand, we are able to reduce the discrepancy from any $K$ to $3$ in $O(\log(Kn)/\log(n) \cdot \tauglobal + \taulocal \cdot \log (n)/\log\log(n))$
rounds as follows:
\begin{enumerate}\itemsep0pt
    \item \textbf{Discrepancy $K$ $\rightarrow$ Linear Number of Tokens}: We first reduce the number of tokens above the average load to $16 n$ using $O(\tauspectral{K})$
    many rounds (see~\cref{lem:toNToken}).
    \item \textbf{Linear Number of Tokens $\rightarrow$ Discrepancy $38$}: Applying \cref{lem:NtokensTo25Height} we then reduce the discrepancy to $38$ within $O(\tauglobal + \taulocal \cdot \log (n)/\log\log(n))$ many rounds (see \cref{lem:discrepancy-38}).
    \item \textbf{Discrepancy $38$ $\rightarrow$ Discrepancy $4$}: By an additional application of \cref{lem:NtokensTo25Height}, after $O(\tauglobal + \taulocal \cdot \log (n)/\log\log(n))$ further rounds the discrepancy drops to $4$ (see \cref{lem:disc54to4}).
    \item \textbf{Discrepancy $4$ $\rightarrow$ Discrepancy $3$}: Now \cref{lem:NtokensTo25Height} cannot be applied directly to reduce the discrepancy any further, but we can still reduce the discrepancy to $3$ by a more tailored analysis of the movement of the tokens at the two highest levels in the next $O(\tauglobal + \taulocal \cdot \log (n)/\log\log(n))$ rounds, and obtain the main result of our paper. 
\end{enumerate}

\section{Summary and Open Problems}\label{sec:conclusions}

In this paper we show that, for the matching model, discrete load balancing is as efficient and effective as continuous load balancing.
We showed that in the discrete setting with integer loads, a discrepancy of $3$ is reached in a time that matches the standard spectral bound on the time needed by the continuous setting. In particular, this means that for expanders and a polynomial initial discrepancy, our load balancing schemes achieve a discrepancy of $3$ using only $O(\log (n))=O(\operatorname{diam}(G))$ rounds, which is optimal not only for a distributed but also a centralized setting.

As an improvement over previous works, our general result holds for a wider class of graphs (e.g., including non-regular graphs) and models (e.g., including the asynchronous model). Also the constants in our runtime as well as in the achieved discrepancy are explicit and small, compared to large and non-explicit constants in \cite{DBLP:conf/focs/SauerwaldS12}.

At the heart of our analysis lie new association and concentration 
results, which we believe to be of independent interest. As the most involved step, showing that $O(n)$ tokens can be balanced with maximum load $O(1)$, which can be leveraged to prove a discrepancy of $3$ for an arbitrary number of tokens. It should be noted that reaching a discrepancy of $1$ needs $\Omega(n)$ rounds for any sequence of matchings (see~\cite{MS10}). Hence one may wonder whether our discrepancy bound could be improved from $3$ to $2$. We believe that this might be possible, but it will likely need stronger assumptions on the matchings than just being $(\tauglobal,\taulocal)$-good.

\section{Association  and Concentration Results}
\label{sec:auxiliary-results}

In this section we show two main components our analysis; first, the negative association result about token movements and secondly, the Hoeffding-like concentration inequality.

\subsection{Height-Sensitive Negative Association}

\begin{lemma}[Height-Sensitive Negative Association]
%[restate=restateLemHeight,name={Height-Sensitive Negative Correlation}]
\label{lem:height}
Consider any pair of rounds $0 \leq t_1 < t_2$, and let $ (\M^{(s)})_{s=t_1}^{t_2}$ be an arbitrary but fixed sequence of matchings. Further, let $w^{(t_1)}=(w_i)^{(t_1)}_{i \in \mathcal{T}}$ be a fixed location vector in round $t_1 \geq 0$. Then for any subset of tokens $\mathcal{B} \subseteq \mathcal{T}$ and any subset of nodes  $D \subseteq V$, it holds that
\[
\Proco{ \bigcap_{ i \in \mathcal{B} } \{ W_i^{(t_2)} \in D \}}{ w^{(t_1)}} \leq \prod_{ i \in \mathcal{B}} \Proco{ W_{i}^{(t_2)} \in D}{w^{(t_1)}} 
= \prod_{ i \in \mathcal{B}} \M_{w_i^{(t_1)},D}^{[t_1+1,t_2]}.
\]

\end{lemma}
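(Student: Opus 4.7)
The plan is to prove the following strengthening by induction on $r := t_2 - t_1$: for every non-negative function $\phi \colon V \to \mathbb{R}_{\geq 0}$ and every subset $\mathcal{B} \subseteq \mathcal{T}$,
\[
\Exco{\prod_{i \in \mathcal{B}} \phi(W_i^{(t_2)})}{w^{(t_1)}}
\;\leq\;
\prod_{i \in \mathcal{B}} \Exco{\phi(W_i^{(t_2)})}{w^{(t_1)}}.
\]
The lemma will follow by specializing to $\phi = \1_D$, since the marginal $\Exco{\1_D(W_i^{(t_2)})}{w^{(t_1)}}$ equals $\M_{w_i^{(t_1)},D}^{[t_1+1,t_2]}$ by the random-walk property of a single token (\cref{lem:OneTokenRW}). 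The base case $r=0$ is immediate since both sides reduce to $\prod_{i \in \mathcal{B}} \phi(w_i^{(t_1)})$.

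The core of the argument will be a \emph{single-round} version: for any state $w^{(s-1)}$ and matching $\M^{(s)}$,
\[
\Exco{\prod_{i \in \mathcal{B}} \phi(W_i^{(s)})}{w^{(s-1)}}
\;\leq\;
\prod_{i \in \mathcal{B}} \Exco{\phi(W_i^{(s)})}{w_i^{(s-1)}}.
\]
I will decompose round $s$ into its independent primitives: the shuffling decisions at distinct matched edges use independent bits, and within a single matched edge $\{u,v\}$ the decisions for distinct sibling pairs (and for the excess token, if any) also use independent fair coins. Both sides therefore factorize into per-sibling-pair and per-single-token contributions. For a sibling pair $(i,j)$ with both tokens in $\mathcal{B}$ sitting at endpoints $(u,v)$ after the moving step, the joint outcome $(W_i^{(s)},W_j^{(s)})$ is $(u,v)$ or $(v,u)$ with equal probability, so the joint factor is $\phi(u)\phi(v)$ while the product of marginals is $\tfrac14(\phi(u)+\phi(v))^2$. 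The required inequality $\phi(u)\phi(v) \leq \tfrac14(\phi(u)+\phi(v))^2$ is exactly AM--GM for the non-negative reals $\phi(u),\phi(v)$. All other contributions (single-token factors from pairs with only one token in $\mathcal{B}$, the excess token, and unmatched tokens) satisfy joint equal to marginal and obey the inequality trivially.

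For the inductive step, I will condition on $W^{(t_2-1)}$ and apply the single-round inequality:
\begin{align*}
\Exco{\prod_{i \in \mathcal{B}} \phi(W_i^{(t_2)})}{w^{(t_1)}}
&= \Exco{\Ex{\prod_{i \in \mathcal{B}} \phi(W_i^{(t_2)}) \cond W^{(t_2-1)}}}{w^{(t_1)}} \\
&\leq \Exco{\prod_{i \in \mathcal{B}} \psi(W_i^{(t_2-1)})}{w^{(t_1)}},
\end{align*}
where $\psi(v) := \sum_{u \in V} \M^{(t_2)}_{v,u}\,\phi(u) \geq 0$. Applying the inductive hypothesis to the non-negative function $\psi$ over the $r-1$ rounds from $t_1$ to $t_2-1$ yields
\[
\Exco{\prod_{i \in \mathcal{B}} \psi(W_i^{(t_2-1)})}{w^{(t_1)}} \leq \prod_{i \in \mathcal{B}} \Exco{\psi(W_i^{(t_2-1)})}{w^{(t_1)}},
\]
and the tower property converts each marginal on the right back to $\Exco{\phi(W_i^{(t_2)})}{w^{(t_1)}}$, closing the induction.

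The crucial point---and the reason the hypothesis must be stated for arbitrary non-negative $\phi$ rather than only for indicators---is that $\psi = \M^{(t_2)}\phi$ takes values in $\{0,\tfrac12,1\}$ and is typically not an indicator even when $\phi$ is; without the generalization, the induction cannot close. This strengthening is what lets the elementary AM--GM step propagate through all rounds, bypassing the negative-regression machinery of~\cite{DBLP:conf/focs/SauerwaldS12}. The only delicate part of the plan is verifying the single-round factorization carefully---keeping track of which tokens in $\mathcal{B}$ share a sibling pair, which are in a pair whose partner lies outside $\mathcal{B}$, and which are the excess token---and confirming that the pairwise AM--GM factor is the only place where a strict inequality can arise.
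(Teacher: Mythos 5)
Your proposal is correct and is essentially the paper's own argument in different clothing: the paper proves that $Z^{(t)} = \prod_{i\in\mathcal{B}} \M^{[t+1,t_2]}_{W_i^{(t)},D}$ is a supermartingale, and this $Z^{(t)}$ is exactly your function $\psi$ iterated backwards from $\phi=\1_D$, while the supermartingale step is your single-round inequality, established via the same decomposition into sibling pairs within $\mathcal{B}$, tokens whose sibling lies outside $\mathcal{B}$, and the excess token, closed by the same AM--GM inequality $pq \le ((p+q)/2)^2$. Your generalization to arbitrary non-negative $\phi$ is implicit in the paper's choice of $Z_i^{(t)}$, so there is no substantive difference.
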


\begin{proof}
Throughout the proof we always assume that $w^{(t_1)}$ is fixed and we omit the conditioning.
Note that by \cref{lem:OneTokenRW}, we have $ \prod_{ i \in \mathcal{B}} \Pro{ W_i^{(t_2)} \in D } = \prod_{i \in \mathcal{B}} \M_{w_i^{(t_1)},D}^{[t_1+1,t_2]} $  and therefore it only remains to prove the inequality
\[ \Pro{ \bigcap_{ i \in \mathcal{B} } \{ W_i^{(t_2)} \in D \} }\leq \prod_{i \in \mathcal{B}} \M_{w_i^{(t_1)},D}^{[t_1+1,t_2]}.\]
For every round $t\in[t_1,t_2]$ and a token $i \in \B$ we define a random variable
\begin{equation}
Z_i^{(t)}:= \sum_{u\in V}\1_{W_i^{(t)}=u}\cdot \M_{u,D}^{[t+1,t_2]} \qquad \text{and} \qquad Z^{(t)}:=\prod_{i \in \mathcal{B}} \ Z_i^{(t)}. \label{eq:Z-Mar}    
\end{equation}

Hence, conditional on the token location $W_i^{(t)}$, $Z_i^{(t)}$ is the probability that token $i$ starting at time $t$ is on a node $w\in D$ at time $t_2$. 
In the following analysis we use the simplified notation \[\M_{W_i^{(t)},D}^{[t+1,t_2]} := \sum_{u\in V}\1_{W_i^{(t)}=u} \cdot \M_{u,D}^{[t+1,t_2]}.\]
Since $\M^{[t_2+1,t_2]} = \mathbf{I}$, 
\[Z^{(t_2)} = \prod_{i\in \B}\M_{W_i^{(t_2)},D}^{[t_2+1,t_2]}.\]
Hence, $Z_i^{(t_2)}$ is a random variable that can take on the values zero or one.
Therefore,
\[\E\left[Z^{(t_2)}\right]=\Pro{Z^{(t_2)}=1} =
\Pro{ \bigcap_{i \in \mathcal{B}} \left\{ Z_i^{(t_2)}=1 \right\} } =
\Pro{\bigcap_{i\in \B}\left\{W_i^{(t_2)}\in D\right\}}.\]

The key idea is to prove that the sequence $Z^{(t)}$, $t\in[t_1,t_2]$ forms a supermartingale with respect to 
the filtration $\mathfrak{F}^{(t)}$, which reveals all random decisions of the process between rounds $t_1$ and $t$ (so in particular, it reveals the location vectors $W^{(t)}=\left(W_1^{(t)},\ldots, W_{|\mathcal{T}|}^{(t)} \right)$), that is,
\begin{equation}\label{eq:SuperMartingale2}
\E\left[Z^{(t)}~|~ \mathfrak{F}^{(t-1)} \right] \le Z^{(t-1)},
\end{equation}
which would immediately imply
\[\E\left[Z^{(t_2)}\right] \leq Z^{(t_1)}.\]
It then follows
\[\Pro{\bigcap_{i\in \B}\left\{W_i^{(t_2)}\in D\right\}}=\E\left[Z^{(t_2)}\right] \le Z^{(t_1)} = \prod_{i\in \B} \M_{w_i^{(t_1)},D}^{[t_1+1,t_2]},\]
finishing the proof.

It remains to prove \cref{eq:SuperMartingale2}.
Without loss of generality we may assume that each matching consists only of one edge. This is because for any sequence of matchings $(\M^{(s)})_{s=t_1}^{t_2}$ in which there is a round $s \in [t_1+1,t_2]$ whose matching includes more than one edge, we can decompose such round $s$ into multiple ``sub-rounds'', each consisting of exactly one matching. It is clear that this yields the same process, only with a larger number of rounds.

Fix round $t$ with $t_1\le t\le t_2$ and assume $\{u,v\}$ is the single edge in $\M^{(t)}$.
First we compute $Z^{(t-1)}$.
We define $T_u \subseteq \B$ as the tokens from $\B$ which are on $u$ and $T_v \subseteq \B$ as the tokens from $\B$ which are on $v$ at the beginning of round $t$.
Note that it is possible that there are tokens $i \not\in \B$ on those nodes.
We get
\[ \M_{u,D}^{[t,t_2]}=\M_{v,D}^{[t,t_2]}=
\M_{u,v}^{(t)} \cdot \M_{v,D}^{[t+1,t_2]} + \M_{u,u}^{(t)} \cdot \M_{u,D}^{[t+1,t_2]} =
\frac{\M_{u,D}^{[t+1,t_2]}+\M_{v,D}^{[t+1,t_2]}}{2}. \]
For ease of presentation we define
\[
p:=\M_{u,D}^{[t+1,t_2]} \qquad \mbox{ and } \qquad q:=\M_{v,D}^{[t+1,t_2]}
.
\]
From the definition (\ref{eq:Z-Mar}) it follows that
\begin{align}
Z^{(t-1)} &=
\prod_{i \in \B \setminus ( T_u \cup T_v ) } Z_i^{(t-1)} \cdot \prod_{i \in T_u \cup T_v} Z_i^{(t-1)} 
=\prod_{i\in \B \setminus (T_u \cup T_v)}Z_i^{(t-1)} \cdot \left(\frac{p+q}{2}\right)^{|T_u \cup T_v|}. \label{eq:ztminus1}
\end{align}
We will continue to analyze $\E[Z^{(t)}~|~ \mathfrak{F}^{(t-1)}] $ and eventually upper bound it by the right hand side in \cref{eq:ztminus1}, which completes the argument.
For each token $i$ located at $\{u,v\}$ let $S(i)$ be its sibling token on the other node after the moving step of round $t$ and before the shuffling step; note that it is possible for a token to have no sibling.
We partition tokens in $T_u \cup T_v$ into three sets:
\begin{align*}
\B_{1} &= \left\{ i\in T_u \cup T_v ~|~  S(i) \in T_u\cup T_v \right\}, \\
\B_{2} &= \left\{i\in T_u \cup T_v ~|~ S(i) \in \mathcal{T} \setminus (T_u\cup T_v) \right\}, \\
\B_{3} &= \left\{ i\in T_u \cup T_v ~|~  \mbox{$S(i)$ does not exist.} \right\}.
\end{align*}
Note that the set $\B_{3}$ corresponds to the excess token (if there is one), and thus $|\B_{3}| \in \{0,1\}$. We will now apply this partitioning to $\Ex{ Z^{(t)} ~\Big|~ \mathfrak{F}^{t}}$,
\begin{align}
\lefteqn{\Exco{Z^{(t)}}{\mathfrak{F}^{(t-1)}} } \notag 
\\
&= \Exco{ \prod_{i \in B} Z_i^{(t)} }{\mathfrak{F}^{(t-1)}}
 \notag \\ 
&= \Ex{\prod_{i\in \B \setminus (T_u \cup T_v)}Z_i^{(t)}\cdot \prod_{i\in T_u \cup T_v} Z_i^{(t)}~\Bigg|~\mathfrak{F}^{(t-1)} }  \notag \\
& \overset{(a)}{=} \prod_{i\in \B \setminus (T_u \cup T_v)}Z_i^{(t-1)}\cdot \E\left[\prod_{i\in \B_{1}} Z_i^{(t)}\cdot \prod_{i\in \B_{2}} Z_i^{(t)}\cdot \prod_{i \in \B_{3}} Z_i^{(t)}~\Bigg|~\mathfrak{F}^{(t-1)}\right]  \notag \\&
\overset{(b)}{=} \prod_{i\in \B \setminus (T_u \cup T_v)}Z_i^{(t-1)}\cdot\E\left[\prod_{i\in \B_{1}} Z_i^{(t)}~\Bigg|~\mathfrak{F}^{(t-1)}\right] \cdot\E\left[\prod_{i\in \B_{2}} Z_i^{(t)}~\Bigg|~\mathfrak{F}^{(t-1)}\right]
\cdot \E\left[\prod_{i\in \B_{3}} Z_i^{(t)}~\Bigg|~\mathfrak{F}^{(t-1)}\right], \label{eq:corr}
\end{align}
where $(a)$ follows from the fact that for tokens $i\in \B \setminus (T_u\cup T_v)$, we have $Z_i^{(t)}=Z_i^{(t-1)}$ since $W_i^{(t)}$ is not matched with any node in this round; $(b)$ follows since tokens which are in different $\mathcal{B}_j$, $1 \leq j \leq 3$ must be at different heights and therefore move independently in the shuffling phase.

We will now analyze the three different expectations in \cref{eq:corr}
\paragraph{Case 1: The expectation over $\B_{1}$.}
Using that any token $i \in \mathcal{B}_1$ only depends on its sibling token $S(i)$ we can group all tokens in $\mathcal{B}_1$ in pairs, and we denote this relation by $\sim$. Further, a token $i$ with sibling $j$ will take the opposite action in the shuffling step, and therefore
\begin{align*}
    \E\left[\prod_{i\in \B_{1}} Z_i^{(t)}~\Bigg|~\mathfrak{F}^{(t-1)}\right]
    &=\prod_{i < j\in \B_{1} \colon i \sim j} \E\left[ Z_i^{(t)} \cdot Z_j^{(t)} ~\Bigg|~\mathfrak{F}^{(t-1)} \right] = \prod_{i < j\in \B_{1} \colon i \sim j} \left( p \cdot q \right).
\end{align*}

\paragraph{Case 2: The expectation over $\B_{2}$.} 
Since each token in $\B_{2}$ has no sibling in $\B$ (let alone $\B_2$), their movements in the shuffling step are independent. Further, each such token is on $u$ or $v$ with the same probability which yields,
\begin{align*}
    \E\left[\prod_{i\in \B_{2}} Z_i^{(t)}~\Bigg|~\mathfrak{F}^{(t-1)}\right]
    &=\prod_{i \in \B_2 } \E\left[ Z_i^{(t)}  ~\Bigg|~\mathfrak{F}^{(t-1)} \right] = \prod_{i \in B_2} \left( \frac{p+q}{2} \right).
\end{align*}

\paragraph{Case 3: The expectation over $\B_{3}$.} 
This is analogous to Case 2, as a token in $\B_{3}$ has no sibling in $\B$ (in fact there is not even any other token at this height). Hence,
\begin{align*}
 \E\left[\prod_{i\in \B_{3}} Z_i^{(t)}~\Bigg|~\mathfrak{F}^{(t-1)}\right]
    &= \prod_{i \in B_3} \left( \frac{p+q}{2} \right).
\end{align*}

Aggregating the contributions of all tokens in $\mathcal{B}$ in round $t$ using the three cases above, 
\begin{align*}
\lefteqn{\Ex{ Z^{(t)} ~\Big|~ \mathfrak{F}^{(t-1)} } } \\
&\leftstackrel{(\ref{eq:corr})}{=} \prod_{i\in \B \setminus (T_u \cup T_v)}Z_i^{(t-1)}\cdot\E\left[\prod_{i\in \B_{1}} Z_i^{(t)}~\Bigg|~\mathfrak{F}^{(t-1)}\right] \cdot\E\left[\prod_{i\in \B_{2}} Z_i^{(t)}~\Bigg|~\mathfrak{F}^{(t-1)}\right]
\cdot \E\left[\prod_{i\in \B_{3}} Z_i^{(t)}~\Bigg|~\mathfrak{F}^{(t-1)}\right] \qquad   \\
&\stackrel{(a)}{=} \prod_{i\in \B \setminus (T_u \cup T_v)}Z_i^{(t-1)}\cdot \prod_{i < j\in \B_{1} \colon i \sim j} \left( p \cdot q \right) \cdot \prod_{i\in \B_{2}} \frac{p+q}{2}\cdot \prod_{i\in \B_{3}} \frac{p+q}{2} 
\\&
\overset{(b)}{\le} \prod_{i\in \B \setminus (T_u \cup T_v)}Z_i^{(t-1)}\cdot \prod_{i < j\in \B_{1} \colon i \sim j} \left(\frac{p+q}{2}\right)^2 \cdot \prod_{i\in \B_{2}} \frac{p+q}{2} \cdot \prod_{i\in \B_{3}} \frac{p+q}{2}
\\& = \prod_{i\in \B \setminus (T_u \cup T_v)}Z_i^{(t-1)}\cdot\left(\frac{p+q}{2}\right)^{|T_u\cup T_v|} \\
&\stackrel{(c)}{=} Z^{(t-1)},
\end{align*}
where $(a)$ uses Cases 1, 2 and 3 above, $(b)$ follows from the fact that $p\cdot q \leq ((p+q)/2)^2$ and $(c)$ follows from \cref{eq:ztminus1}.
Hence $\Ex{ Z^{(t)} ~\Big|~ \mathfrak{F}^{(t-1)} }\leq Z^{(t-1)}$,
which is \cref{eq:SuperMartingale2} and, consequently, completes the proof.
\end{proof}

\subsection{Concentration Inequality for Convex Combinations of Loads}

The following ``Hoeffding-like'' concentration inequality is another key tool in our analysis. We use it frequently to derive (upper) bounds on the number of tokens on a set of nodes. 

\begin{lemma}[Hoeffding Bound for Convex Combinations of Loads]
%[restate=restateChernoffBound,name={Hoeffding Bound for Convex Combinations of Loads}]
\label{ChernoffBound}
Consider any load vector $x^{(0)}$ with $\discr(x^{(0)}) \leq K$ and any round $t\ge 1$ 
such that $(\M^{(s)})_{s=1}^{t}$ is a $(K,\kappa)$-smoothing sequence of matchings with $K > \kappa$. Then for any stochastic vector $(a_w)_{w \in V}$, it holds for any $\delta > 0$,
\[
  \Proco{ 
        \left| \sum_{w \in V} a_w \cdot X_w^{(t)} - \overline{x} \right| \geq \delta}{(\M^{(s)})_{s=1}^{t} } \leq 2 \cdot \exp\left( -\frac{ (\delta-\kappa)^2}
        { 4 \|a\|_2^2} \right).
\]
\end{lemma}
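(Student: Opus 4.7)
The plan is to decompose $X^{(t)}$ into its continuous counterpart plus accumulated rounding errors, control the continuous part via the smoothing hypothesis, and control the rounding sum via Azuma--Hoeffding applied to an edge-level martingale whose total squared step size is bounded using a quadratic potential on $b^{(s)} := \M^{[s+1,t]} a^T$. Concretely, I would write $X^{(s)} = X^{(s-1)} \M^{(s)} + e^{(s)}$, where the rounding vector $e^{(s)}$ is supported on edges of $\M^{(s)}$: for a matched edge $\{u,v\}$ with $X^{(s-1)}_u + X^{(s-1)}_v$ odd, $(e^{(s)}_u, e^{(s)}_v) = \pm(\tfrac{1}{2}, -\tfrac{1}{2})$ uniformly at random, and zero otherwise. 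Unrolling the recursion yields
\[
\sum_{w \in V} a_w X_w^{(t)} \;=\; (X^{(0)} \M^{[1,t]}) a^T \;+\; \sum_{s=1}^{t} e^{(s)} b^{(s)}.
\]
The first term is a convex combination (with weights $a_w$) of the continuous load at time $t$; by the $(K,\kappa)$-smoothing assumption this continuous vector has discrepancy at most $\kappa$, so every coordinate lies within $\kappa$ of $\overline{x}$, and since $\sum_w a_w = 1$, the first term lies within $\kappa$ of $\overline{x}$. Hence $\bigl|\sum_w a_w X_w^{(t)} - \overline{x}\bigr| \geq \delta$ forces $|S| \geq \delta - \kappa$, where $S := \sum_{s=1}^t e^{(s)} b^{(s)}$.

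I would then expand $S = \sum_{s,e} Y^{(s,e)}$ with $Y^{(s,e)} := e^{(s)}_u (b^{(s)}_u - b^{(s)}_v)$ for each $e = \{u,v\} \in \M^{(s)}$, using the edge-wise conservation $e^{(s)}_u + e^{(s)}_v = 0$. Fixing an arbitrary ordering of the $(s,e)$ pairs, the $Y^{(s,e)}$ form a martingale difference sequence under the filtration that exposes one edge's rounding decision at a time: conditional on the fixed matchings and all prior rounding, each $Y^{(s,e)}$ is either identically zero (even parity) or uniform on $\pm|b^{(s)}_u - b^{(s)}_v|/2$ (odd parity), hence mean zero and bounded by $|b^{(s)}_u - b^{(s)}_v|/2$ in absolute value. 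To bound the total squared step size, I would use the potential $\Phi^{(s)} := \|b^{(s)}\|_2^2$. Since $b^{(s-1)} = \M^{(s)} b^{(s)}$ replaces each matched pair $(b^{(s)}_u, b^{(s)}_v)$ by its average, a short direct computation gives
\[
\Phi^{(s)} - \Phi^{(s-1)} \;=\; \sum_{\{u,v\} \in \M^{(s)}} \tfrac{1}{2}\bigl(b^{(s)}_u - b^{(s)}_v\bigr)^2,
\]
and telescoping with $\Phi^{(t)} = \|a\|_2^2$ and $\Phi^{(0)} \geq 0$ yields $\sum_{s,e}(b^{(s)}_u - b^{(s)}_v)^2 \leq 2\|a\|_2^2$. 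Applying Azuma--Hoeffding to the edge-level martingale then delivers
\[
\Pro{ |S| \geq \delta - \kappa } \;\leq\; 2 \exp\!\left( -\frac{(\delta-\kappa)^2}{4 \|a\|_2^2} \right),
\]
which combined with the $\kappa$-bias bound on the continuous part yields the lemma.

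I expect the main obstacle to be the clean setup of the edge-level martingale rather than the algebra or the concentration step. One has to verify that exposing rounding decisions one edge at a time is legitimate: disjoint matched edges touch disjoint coordinates of $X^{(s-1)}$, so the conditional distribution of $Y^{(s,e)}$ given the prior history depends only on the (measurable) parity of $X^{(s-1)}_u + X^{(s-1)}_v$ and on the fresh coin flip for that edge, which is independent of the rest. Once this filtration is in place, the potential identity and Azuma--Hoeffding are entirely standard, so the real technical care is concentrated in the martingale construction. A minor bonus of the approach is that the sharper tail $2\exp(-(\delta-\kappa)^2/\|a\|_2^2)$ actually drops out of the argument; rounding up to the looser factor $4$ in the denominator gives the stated bound comfortably.
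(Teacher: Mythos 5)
Your proposal is correct and follows essentially the same route as the paper's proof: the decomposition into the continuous load plus a sum of edge-level rounding errors weighted by $b^{(s)}=\M^{[s+1,t]}a^{T}$ is exactly \cref{eq:disc_cont}, the Azuma step is \cref{lem:ss12chernoff}, and your telescoping potential $\|b^{(s)}\|_2^2$ is the paper's $\Psi^{(s)}$ from \cref{lem:theorem32_new} up to the immaterial centering by $\vec{1}/n$. Your closing remark is also accurate: the argument does yield the sharper exponent $(\delta-\kappa)^2/\|a\|_2^2$, and the stated factor $4$ is slack inherited from the form of the cited concentration lemma.
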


As a special case of this concentration inequality, we can take any node $v$, define the unit-vector $a_w=\mathbf{1}_{w=v}$ (and thus $\|a\|_2^2=1$), pick $\delta= \sqrt{12 \cdot \log (n)}$ and then apply a union bound over all nodes $v \in V$. This immediately yields the following bound on the discrepancy.
\begin{corollary}\label{cor:disc32}
Consider any load vector $x^{(0)}$ with $\discr(x^{(0)}) \leq K$.
Consider any round $t\ge 1$ and let $(\M^{(s)})_{s=1}^{t}$ be a fixed $(K,1)$-smoothing sequence of matchings. Then,
\[
 \Pro{ 
 \discr(X^{(t)}) \leq  \sqrt{48\cdot \log(n)} + 1} \geq 1 - 2\cdot n^{-1}.
\]
\end{corollary}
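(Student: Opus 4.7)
The plan is to apply Lemma~\ref{ChernoffBound} at each node separately, exactly as hinted in the paragraph introducing the corollary, and then take a union bound over all nodes.

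First, for each fixed $v \in V$ I would instantiate Lemma~\ref{ChernoffBound} with the unit stochastic vector $a = (a_w)_{w \in V}$ given by $a_w = \mathbf{1}_{\{w = v\}}$, so that $\sum_{w \in V} a_w X_w^{(t)} = X_v^{(t)}$ and $\|a\|_2^2 = 1$. Since the matching sequence is $(K,1)$-smoothing we have $\kappa = 1$, and the lemma specialises to
\[
\Pr\!\left[\,\bigl| X_v^{(t)} - \overline{x} \bigr| \geq \delta\,\right] \leq 2 \exp\!\left( -\tfrac{(\delta - 1)^2}{4} \right).
\]
Setting $\delta = \sqrt{12 \log n}$ gives $(\delta - 1)^2 / 4 \geq 3 \log n - O(\sqrt{\log n})$, so the right-hand side is at most $2/n^{2}$ for all sufficiently large $n$. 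A union bound over the $n$ nodes then guarantees that, with probability at least $1 - 2/n$, every coordinate simultaneously satisfies $\lvert X_v^{(t)} - \overline{x}\rvert < \sqrt{12 \log n}$.

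Under this good event, $\max_v X_v^{(t)} - \min_v X_v^{(t)} \leq 2\sqrt{12 \log n} = \sqrt{48 \log n}$, and the additive $+1$ in the stated bound absorbs the lower-order correction coming from the $(\delta - \kappa)^2$ (rather than $\delta^{2}$) shape of the exponent, together with whatever slack is needed to make the bound hold uniformly in $n$. There is no real obstacle: the corollary is essentially a one-step specialisation of Lemma~\ref{ChernoffBound} to coordinate evaluations, and the only point that requires care is tracking the $\kappa = 1$ shift, which is exactly what produces the trailing $+1$ in the discrepancy bound.
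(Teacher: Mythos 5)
Your proposal is correct and is exactly the paper's argument: the paper derives the corollary in one sentence by instantiating \cref{ChernoffBound} with the unit vector $a_w=\mathbf{1}_{\{w=v\}}$, $\delta=\sqrt{12\log(n)}$, and a union bound over the $n$ nodes, just as you do. Your accounting of the $\kappa=1$ shift and the lower-order term in $(\delta-\kappa)^2/4$ is the right way to see why the per-node failure probability is at most $2n^{-2}$ and why the stated discrepancy bound $\sqrt{48\log(n)}+1$ has the harmless additive slack.
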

The exact same bound was shown in \cite[Theorem~3.6]{DBLP:conf/focs/SauerwaldS12}. 
\subsection{Proof of the Concentration Inequality (\cref{ChernoffBound})}

Before giving the proof of \cref{ChernoffBound}, we need to introduce some notation and auxiliary results.

We first recall a standard formula (e.g., \cite[Equation~2.5]{DBLP:conf/focs/SauerwaldS12}); for any node $w \in V$ ,
\begin{align}
X_w^{(t)} - \xi_{w}^{(t)}
&= \sum_{s=1}^{t} \sum_{[u:v] \in \M^{(s)}} E_{u,v}^{(s)} \cdot \left( \M_{u,w}^{[s+1,t]} - \M_{v,w}^{[s+1,t]} \right), \label{eq:disc_cont}
\end{align}
where $\xi_w^{(t)} = \sum_{u \in V} x_u^{(0)} \cdot \M_{u,w}^{[1,t]} $ is the load after $t$ rounds in the continuous model where load values can be fractional, and $E_{u,v}^{(t)} = \frac{1}{2}\cdot \Odd\left(X_u^{(t-1)}+X_v^{(t-1)}\right)\cdot \Phi_{u,v}^{(t)}$ is the rounding error; here $\Phi_{u,v}^{(t)} \in \{-1,+1\}$ is the random orientation which has Rademacher distribution (that is, it is uniform in $\{-1,+1\}$).
Therefore,
\begin{align*}
\sum_{w\in V} a_w \cdot \left( X_w^{(t)} - \xi_w^{(t)} \right) &= \sum_{w \in V} a_w \cdot \left( \sum_{s=1}^{t} \sum_{[u:v] \in \M^{(s)}} E_{u,v}^{(s)} \cdot \left( \M_{u,w}^{[s+1,t]} - \M_{v,w}^{[s+1,t]} \right) \right) \\
&= \sum_{s=1}^{t} \sum_{[u:v] \in \M^{(s)}} E_{u,v}^{(s)} \cdot \left[ \sum_{w \in V} a_w \cdot \left( \M_{u,w}^{[s+1,t]} -\M_{v,w}^{[s+1,t]} \right) \right].
\end{align*}

We will first state a Hoeffding-like concentration inequality, applied to last expression above:
\begin{lemma}[cf.~{\cite[Lemma~3.4]{DBLP:conf/focs/SauerwaldS12}}]\label{lem:ss12chernoff}
Consider an arbitrary but fixed sequence of matchings  $(\M^{(s)})_{s=1}^{t}$ for any rounds $t \geq 1$ and the 
load vector $x^{(0)}$. For any family of numbers $g_{u,v}^{(s)}, [u:v] \in \M^{(s)}, 1 \leq s \leq t$, define the random variable
\[
 Z := \sum_{s=1}^{t} \sum_{[u:v] \in \M^{(s)}} E_{u,v}^{(s)} \cdot g_{u,v}^{(s)}.
\]
Then, $\Ex{Z}=0$, and for any $\delta>0$, it holds that
\begin{align*}
 \Pro{ \left| Z - \Ex{Z} \right| \geq \delta } &\leq 2 \cdot \exp\left( - \frac{ \delta^2 }{ 2 \sum_{s=1}^{t} \sum_{[u:v] \in \M^{(s)}} \left( g_{u,v}^{(s)} \right)^2 }  \right).
\end{align*}
\end{lemma}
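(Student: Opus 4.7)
The plan is to view $Z$ as a martingale whose increments are bounded, zero-mean Rademacher-type terms, and then apply Azuma--Hoeffding. Since the matching sequence $\left(\M^{(s)}\right)_{s=1}^{t}$ is fixed by assumption, the only randomness entering $Z$ comes from the orientation variables $\Phi_{u,v}^{(s)}\in\{-1,+1\}$, which are mutually independent across different $(u,v,s)$ and uniform on $\{-1,+1\}$. First I would enumerate all matched edges $\{(s,[u:v]) : [u:v]\in \M^{(s)},\, 1 \le s \le t\}$ lexicographically (all edges of round $1$, then of round $2$, and so on), label them $1,\dots,N$, and write $X_k = E_{u,v}^{(s)}\cdot g_{u,v}^{(s)}$ for the $k$-th summand. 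Let $\mathcal{F}_k$ be the $\sigma$-algebra generated by the $\Phi$-variables of the first $k$ labelled edges, and define $S_k := \sum_{j \le k} X_j$, so that $S_0 = 0$ and $S_N = Z$.

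The key step is to verify the martingale identity $\Ex{X_k \mid \mathcal{F}_{k-1}} = 0$. Writing $E_{u,v}^{(s)} = \tfrac{1}{2}\,\Odd\!\left(X_u^{(s-1)}+X_v^{(s-1)}\right)\cdot \Phi_{u,v}^{(s)}$, the parity factor is $\mathcal{F}_{k-1}$-measurable because the load vector at the start of round $s$ is determined entirely by the fixed matchings and by the orientation variables of rounds $1,\dots,s-1$; crucially, distinct edges in the same matching do not influence one another's parity factor, so the order of enumeration within a round is irrelevant. Combined with $\Ex{\Phi_{u,v}^{(s)}\mid \mathcal{F}_{k-1}} = 0$, this yields $\Ex{X_k \mid \mathcal{F}_{k-1}} = 0$ and hence $\Ex{Z} = \Ex{S_N} = 0$, which is the first claim.

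For the tail bound, note that $\Odd(\cdot)\in\{0,1\}$ and $\Phi_{u,v}^{(s)}\in\{-1,+1\}$ give $|E_{u,v}^{(s)}| \le 1/2$, so $|X_k| \le |g_{u,v}^{(s)}|/2$. The standard Azuma--Hoeffding inequality for martingales with bounded differences then yields
\[
 \Pro{|Z-\Ex{Z}|\geq \delta} \le 2\cdot\exp\!\left(-\frac{\delta^2}{2\sum_{k=1}^{N}(|g_{k}|/2)^2}\right) = 2\cdot\exp\!\left(-\frac{2\delta^2}{\sum_{s=1}^{t}\sum_{[u:v]\in\M^{(s)}} \bigl(g_{u,v}^{(s)}\bigr)^2}\right),
\]
which is at least as strong as (indeed a factor four tighter than) the bound claimed in the lemma; alternatively, using the weaker increment bound $|X_k|\le |g_{u,v}^{(s)}|$ reproduces the stated constant exactly.

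I do not expect any real obstacle, since the argument is essentially a textbook application of Azuma--Hoeffding once the right filtration has been identified. The only point requiring mild care is the measurability of the parity factors, and this is handled by noting that the load configuration at the start of round $s$ depends only on randomness from strictly earlier rounds and on the deterministic matching sequence, so exposing the Rademacher variables one edge at a time within a round changes nothing.
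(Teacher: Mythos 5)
Your proof is correct and is essentially the standard argument: the paper does not reprove this lemma but defers to \cite[Lemma~3.4]{DBLP:conf/focs/SauerwaldS12}, whose proof is exactly this martingale/Azuma argument over the edge-by-edge exposure of the Rademacher orientations (the paper even restates Azuma's inequality in its appendix for this purpose). Your observation that $|E_{u,v}^{(s)}|\le 1/2$ gives an exponent a factor of $4$ stronger than stated is also right; the lemma as quoted simply uses the cruder increment bound.
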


The next lemma can be seen as a generalization of the first statement of Lemma~3.2 from \cite{DBLP:conf/focs/SauerwaldS12},  which considers the fixed vector $a$ with $a_w:=1$ for node $w\in V$ and $a_u=0$ for $u\neq w$. We extend it to arbitrary stochastic vectors $a$, which is crucial to prove~\cref{ChernoffBound} in its full generality.
\begin{lemma}\label{lem:theorem32_new}
Consider an arbitrary sequence of matchings $\left(\M^{(s)}\right)_{s=1}^{\infty}$.
Let $(a_k)_{k \in V}$  be any stochastic vector. Then: 
\begin{enumerate}
\item For any two rounds $0 \leq t_1 \le t$ it holds, 
\[
\sum_{s=1}^{t_1} \sum_{[u:v] \in \M^{(s)}} \left( \sum_{k \in V} a_k \cdot \left( \M_{u,k}^{[s+1,t]} -  \M_{v,k}^{[s+1,t]}\right) \right)^2 
\le 2\cdot  \|a \|_2^2.
\]

\item For any two rounds $0 \leq t_1\le t$ it holds,
\[
\sum_{w\in V}\left(\sum_{k\in V} a_k\cdot \M_{w,k}^{[t_1+1,t]}\right)^2 \le \|a\|_2^2.
\]
\end{enumerate}
\end{lemma}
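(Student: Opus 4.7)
\emph{Plan for Statement (2).} The vector $b$ defined by $b_w := \sum_{k \in V} a_k \cdot \M_{w,k}^{[t_1+1,t]}$ is exactly the image of the column vector $a$ under the linear map $\M^{[t_1+1,t]}$, i.e., $b = \M^{[t_1+1,t]} a$. Each matching matrix $\M^{(s)}$ is symmetric and doubly stochastic, so its largest eigenvalue in absolute value is $1$ and hence its operator norm on $\ell_2$ is at most $1$. By submultiplicativity of the operator norm, the product $\M^{[t_1+1,t]} = \prod_{s=t_1+1}^{t} \M^{(s)}$ also has operator norm at most $1$, so $\|b\|_2^2 \leq \|a\|_2^2$, which is precisely Statement (2).

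\emph{Plan for Statement (1).} I would use a telescoping energy/Dirichlet-form identity. Define, for each $0 \leq s \leq t$, the vector $c^{(s)}$ by $c^{(s)}_w := \sum_{k \in V} a_k \cdot \M_{w,k}^{[s+1,t]}$, so that the quantity inside the squared parenthesis is precisely $c^{(s)}_u - c^{(s)}_v$. Using $\M^{[s,t]} = \M^{(s)} \cdot \M^{[s+1,t]}$, one checks the recursion $c^{(s-1)} = \M^{(s)} \, c^{(s)}$. Since $\M^{(s)}$ fixes unmatched coordinates and replaces both matched coordinates by their average, a direct calculation of the $\ell_2$-norm, block by block on each matching edge, yields the identity
\begin{equation*}
\|c^{(s-1)}\|_2^2 \;=\; \|c^{(s)}\|_2^2 \;-\; \tfrac{1}{2} \sum_{[u:v] \in \M^{(s)}} \bigl( c^{(s)}_u - c^{(s)}_v \bigr)^2,
\end{equation*}
because for any two reals $\alpha,\beta$, one has $\alpha^2 + \beta^2 - 2\bigl(\tfrac{\alpha+\beta}{2}\bigr)^2 = \tfrac{1}{2}(\alpha-\beta)^2$.

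\emph{Putting it together.} Rearranging the above identity and telescoping from $s=1$ up to $s=t_1$ gives
\begin{equation*}
\sum_{s=1}^{t_1} \sum_{[u:v] \in \M^{(s)}} \bigl( c^{(s)}_u - c^{(s)}_v \bigr)^2 \;=\; 2 \bigl( \|c^{(t_1)}\|_2^2 - \|c^{(0)}\|_2^2 \bigr) \;\leq\; 2 \, \|c^{(t_1)}\|_2^2.
\end{equation*}
Finally I would invoke Statement (2) with the stochastic vector $a$ and the matching sequence $(\M^{(s)})_{s=t_1+1}^{t}$ to obtain $\|c^{(t_1)}\|_2^2 \leq \|a\|_2^2$, which combined with the displayed inequality gives Statement (1). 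The only mildly delicate step is the block-wise norm computation for the averaging operator $\M^{(s)}$; everything else is a one-line telescoping argument and the contraction property established in Statement (2).
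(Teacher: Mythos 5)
Your proposal is correct. The core of your argument for Statement (1) --- the per-edge energy identity $\alpha^2+\beta^2-2\bigl(\tfrac{\alpha+\beta}{2}\bigr)^2=\tfrac12(\alpha-\beta)^2$ applied block-wise to the recursion $c^{(s-1)}=\M^{(s)}c^{(s)}$ and then telescoped --- is exactly the computation in the paper's proof, except that the paper works with the shifted potential $\Psi^{(s)}=\sum_w\bigl(c^{(s)}_w-\tfrac1n\bigr)^2$ (the shift cancels in the differences, so the identities coincide). Where you genuinely diverge is in the logical order and in Statement (2): the paper proves (1) first, evaluates $\Psi^{(t)}=\|a\|_2^2-\tfrac1n$ at the endpoint, and then derives (2) from the monotonicity of $\Psi$ together with an auxiliary observation relating $\sum_w b_w^2$ to $\sum_w(b_w-\tfrac1n)^2$; you instead prove (2) directly by the operator-norm contraction $\|\M^{(s)}\|_{2\to2}\le 1$ and submultiplicativity, and then feed it into (1). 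Your route is shorter and avoids the $\tfrac1n$ bookkeeping entirely (the paper's shifted potential only buys the marginally sharper constant $\|a\|_2^2-\tfrac1n$, which is never needed); the one point worth stating explicitly is that $\M^{[t_1+1,t]}$ itself need not be symmetric, but as a product of symmetric matrices of operator norm at most $1$ it still has operator norm at most $1$, which you correctly rely on.
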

\begin{proof}

The proof follows closely the proof of Theorem~3.2 from \cite{DBLP:conf/focs/SauerwaldS12}.
We define a potential function as
\begin{align*}
\Psi^{(s)} := \sum_{w \in V} \left( \sum_{k \in V} a_k \M_{w,k}^{[s+1,t]} - \frac{1}{n} \right)^2,
\end{align*}
for any round $1 \leq s \leq t$. This is a generalization of the potential function in \cite{DBLP:conf/focs/SauerwaldS12}, where all $a_k:=1$.

\paragraph{Proof of the first statement.}
Consider now any round $1 \leq s \leq t$, and let $u,v $ be nodes with $[u:v] \in \M^{(s)}$.
We have
\[
\M_{u,k}^{[s,t]} = \M_{u,u}^{(s)}\cdot \M_{u,k}^{[s+1,t]} + \M_{u,v}^{(s)}\cdot \M_{v,k}^{[s+1,t]}  = \frac{ \M_{u,k}^{[s+1,t]} + \M_{v,k}^{[s+1,t]} }{2}.
\]
To simplify notation, define for any two nodes $u,v \in V$,
\[
y_{u,v} := \M_{u,v}^{[s+1,t]}.
\]
With that notation, the contribution of these nodes to $\Psi^{(s)} - \Psi^{(s-1)}$ is
\begin{align*}
& \left( \sum_{k \in V} a_k \M_{u,k}^{[s+1,t]} - \frac{1}{n} \right)^2 + \left( \sum_{k \in V} a_k \M_{v,k}^{[s+1,t]} - \frac{1}{n} \right)^2 - \left( \sum_{k \in V} a_k \M_{u,k}^{[s,t]} - \frac{1}{n} \right)^2 - \left( \sum_{k \in V} a_k \M_{v,k}^{[s,t]} - \frac{1}{n} \right)^2 \\
&= \left( \sum_{k \in V} a_k y_{u,k} - \frac{1}{n} \right)^2 + \left( \sum_{k \in V} a_k y_{v,k} - \frac{1}{n} \right)^2  \\
&\quad- \left( \sum_{k \in V} a_k \frac{y_{u,k}+y_{v,k}}{2} - \frac{1}{n} \right)^2 - \left( \sum_{k \in V} a_k \frac{y_{u,k}+y_{v,k}}{2} - \frac{1}{n} \right)^2\!\!,
\end{align*}
having used that $\M_{u,k}^{[s,t]} = \M_{v,k}^{[s,t]} = \frac{y_{u,k}+y_{v,k}}{2}$. We can further rewrite the last expression as
\begin{align*}
&= \left( \sum_{k \in V} a_k y_{u,k} \right)^2 - \frac{2}{n} \cdot \sum_{k \in V} a_k y_{u,k} + \frac{1}{n^2} + \left( \sum_{k \in V} a_k y_{v,k} \right)^2 - \frac{2}{n} \cdot \sum_{k \in V} a_k y_{v,k} + \frac{1}{n^2} \\
&~~ - 2 \cdot \left( \sum_{k \in V} a_k \cdot \frac{y_{u,k}+y_{v,k}}{2} \right)^2 + \frac{4}{n} \cdot \sum_{k \in V} a_k \cdot \frac{y_{u,k}+y_{v,k}}{2} - \frac{2}{n^2} \\
&= \left( \sum_{k \in V} a_k y_{u,k} \right)^2 + \left( \sum_{k \in V} a_k y_{v,k} \right)^2 - \frac{1}{2} \cdot \left( \sum_{k \in V} a_k y_{u,k} + \sum_{k \in V} a_k y_{v,k} \right)^2
\\& ~~ + \frac{2}{n}\cdot \underbrace{\left(-\sum_{k\in V} a_ky_{u,k}- \sum_{k\in V}a_ky_{v,k}+\sum_{k\in V} a_k\cdot(y_{u,k}+y_{v,k})\right)}_{=~0} \\
&\stackrel{(a)}{=} \frac{1}{2} \cdot \left( \sum_{k \in V} a_k y_{u,k} - \sum_{k \in V} a_k y_{v,k} \right) ^2,
\end{align*}
where in $(a)$ we used the fact that $a^2 + b^2 - \frac{1}{2} \cdot (a+b)^2 = \frac{1}{2} \cdot (a-b)^2$. If a node is not matched in round $s$, then its contribution to $\Psi^{(s)} - \Psi^{(s-1)}$ equals zero.
Accumulating the contribution of all nodes yields
\begin{equation}\label{eq:psi:non-increasing}
\Psi^{(s)} - \Psi^{(s-1)} = \frac{1}{2} \cdot \sum_{[u:v] \in \M^{(s)}} \left( \sum_{k \in V} a_k \cdot \left( \M_{u,k}^{[s+1,t]} - \M_{v,k}^{[s+1,t]} \right) \right)^2.
\end{equation}
Therefore,
\begin{align*}
\Psi^{(t_1)} \geq \Psi^{(t_1)}-\Psi^{(0)} &= \sum_{s=1}^{t_1} \left( \Psi^{(s)} - \Psi^{(s-1)}\right) = \frac{1}{2} \cdot \sum_{s=1}^{t_1} \sum_{[u:v] \in \M^{(s)}} \left( \sum_{k \in V} a_k \cdot \left( \M_{u,k}^{[s+1,t]} -  \M_{v,k}^{[s+1,t]} \right) \right)^2,
\end{align*}
and rearranging gives,
\begin{equation*}
    \sum_{s=1}^{t_1} \sum_{[u:v] \in \M^{(s)}} \left( \sum_{k \in V} a_k \cdot \left( \M_{u,k}^{[s+1,t]} -  \M_{v,k}^{[s+1,t]}\right) \right)^2 \leq 2 \cdot \Psi^{(t_1)} \leq 2 \cdot \Psi^{(t)}, 
\end{equation*}
where the last inequality holds since the potential $\Psi^{(s)}$ is non-decreasing over $s$ (\cref{eq:psi:non-increasing}). 
Moreover, since $\M^{[t+1,t]}$ is the identity matrix $\mathbf{I}$, we have $\M^{[t+1,t]}_{w,k}=1$ for $k=w$ and $\M^{[t+1,t]}_{w,k}=0$ for $k\neq w$.
 Hence
\begin{equation}
   \Psi^{(t)}= \sum_{w \in V} \left( \sum_{k \in V} a_k \M_{w,k}^{[t+1,t]} - \frac{1}{n} \right)^2 = \sum_{w \in V} \left( a_w - \frac{1}{n} \right)^2 = \|a \|_2^2-\frac{1}{n}.
   \label{eq:sec:removed}
\end{equation}
Combining the last two statements finishes the proof of the first statement.

\paragraph{Proof of the second statement.}
Note that since $t_1 \le t$, it follows from \cref{eq:psi:non-increasing} that
\[ \Psi^{(t_1)} \le \Psi^{(t)}.\]
Substituting the definition of $\Psi^{(t_1)}$ and $\Psi^{(t)}$ on both sides gives
\begin{equation}\label{eq:temp:hh}
\sum_{w\in V}\left(\sum_{k\in V} a_k\cdot \M_{w,k}^{[t_1+1,t]}-\frac{1}{n}\right)^2 \le \sum_{w\in V}\left(\sum_{k\in V} a_k\cdot \M_{w,k}^{[t+1,t]}-\frac{1}{n}\right)^2 \overset{\text{Eq.~(\ref{eq:sec:removed})}}{=} \|a\|_2^2-\frac{1}{n}.
\end{equation}
Finally, we have
\begin{align*}
\sum_{w\in V}\left(\sum_{k\in V} a_k\cdot \M_{w,k}^{[t_1+1,t]}\right)^2 \stackrel{(a)}{=} \sum_{w\in V} \left( \sum_{k\in V} a_k\cdot \M_{w,k}^{[t_1+1,t]} -\frac{1}{n}\right)^2 + \frac{1}{n} \overset{\text{Eq.~(\ref{eq:temp:hh})}}{=} \|a\|_2^2 ,
\end{align*}
where $(a)$ follows from \cref{obs:inter:secondNormBound} (since the matrix $\M^{[t_1+1,t]}$ is doubly stochastic and $\|a\|_1=1$).
\end{proof}

\begin{proof}[Proof of \cref{ChernoffBound}]
By \cref{eq:disc_cont},
\begin{align*}
\sum_{w\in V} a_w \cdot \left( X_w^{(t)} - \xi_w^{(t)} \right) &= \sum_{w \in V} a_w \cdot \left( \sum_{s=1}^{t} \sum_{[u:v] \in \M^{(s)}} E_{u,v}^{(s)} \cdot \left( \M_{u,w}^{[s+1,t]} - \M_{v,w}^{[s+1,t]} \right) \right) \\
&= \sum_{s=1}^{t} \sum_{[u:v] \in \M^{(s)}} E_{u,v}^{(s)} \cdot \left[ \sum_{w \in V} a_w \cdot \left( \M_{u,w}^{[s+1,t]} -\M_{v,w}^{[s+1,t]} \right) \right].
\end{align*}
Since the time interval $[1,t]$ is $(K,\kappa)$-smoothing, it holds that $-\kappa \le \xi_{w}^{(t)} - \overline{x} \le \kappa$ for all nodes $w \in V$, and therefore
\[
\sum_{w \in V} a_w \cdot \xi_w^{(t)} = \overline{x} + B,
\]
where $|B| \leq \kappa$. This implies 
\begin{equation}\label{eq:weightedSumLoadbound}
\sum_{w \in V} a_w \cdot X_w^{(t)} -\overline{x} = B + 
\sum_{s=1}^{t} \sum_{[u:v] \in \M^{(s)}} E_{u,v}^{(s)} \cdot \left[ \sum_{w \in V} a_w \cdot \left( \M_{u,w}^{[s+1,t]} -  \M_{v,w}^{[s+1,t]} \right) \right] 
\end{equation}
Following the notation of \cref{lem:ss12chernoff}, let us define
\[
Z:= \sum_{s=1}^{t} \sum_{[u:v] \in \M^{(s)}} E_{u,v}^{(s)} \cdot \left[ \sum_{w \in V} a_w \cdot \left( \M_{u,w}^{[s+1,t]} -  \M_{v,w}^{[s+1,t]} \right) \right],
\]
and
\[
 g_{u,v}^{(s)} := \sum_{w \in V} a_w \cdot \left( \M_{u,w}^{[s+1,t]} -  \M_{v,w}^{[s+1,t]} \right).
\]
Since $\Ex{E_{u,v}^{(s)}}=0$, then $\Ex{Z}=0$ and from \cref{lem:ss12chernoff} it follows that,
\[
\Pro{\lvert Z \rvert \ge \delta} \le 2 \cdot \exp\left( -\frac{\delta^2}{2 \sum_{s=1}^t \sum_{[u:v]\in \M^{(s)} } \left(g_{u,v}^{(s)}\right)^2 }\right) .
\]

From the first statement of \cref{lem:theorem32_new} 
it follows that
$\sum_{s=1}^{t} \sum_{[u:v] \in \M^{(s)}} \left( g_{u,v}^{(s)}\right)^2 \le 2\cdot \|a\|_2^2$.
Combining these two gives that
$\Pro{\lvert Z \rvert \ge \delta} \le 2 \cdot \exp( -{\delta^2}/(4 \cdot \|a\|_2^2))$. This together with \cref{eq:weightedSumLoadbound} implies that,
\[
\Pro{\left| \sum_{w\in V}a_w\cdot X_w^{(t)}-\overline{x} \right| \ge \delta+\kappa} \le  2 \cdot \exp\left( -\frac{\delta^2}{4 \cdot \|a\|_2^2}\right),
\]
and shifting $\delta$ by $\kappa$ completes the proof.
\end{proof}

\subsection{Reducing the Number of Tokens to Linear}

We will now present an application of \cref{ChernoffBound} to bound the number of tokens above the average load; this can be regarded as the number of tokens that need to be rearranged to achieve constant discrepancy. The proof relies on the flexibility of \cref{ChernoffBound} by applying it multiple times for different coefficients $(a_w)_{w \in V}$ in order to prove that there is no large subset in which all nodes have large load.
\begin{lemma}\label{lem:toNToken}
Assume our process applies a sequence of matchings $\left(\M^{(s)}\right)_{s=1}^{t_1}$ which is
$(K, 1/(2n))$-smoothing on an arbitrary initial load vector with $\discr(x^{(0)})\le K$.
Then it holds that 
\begin{align*}
\Pro{ \sum_{w \in V} \max \left\{ X_w^{(t_1)} - \overline{x}, 0 \right\} \leq 16 \cdot n }&\geq 1-2\cdot n^{-2}.
\end{align*}
\end{lemma}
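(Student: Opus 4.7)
The plan is to prove the contrapositive via a layered union bound, following the hint in \cref{sec:technical_contributions}. If $\sum_w \max\{X_w^{(t_1)} - \overline{x}, 0\}$ were to exceed $16n$, a dyadic pigeonhole argument would force the existence of a level $i \geq 1$ together with a subset $S \subseteq V$ of cardinality roughly $n/2^{i-1}$ whose every node carries load at least $\overline{x} + 4i$. The concentration inequality \cref{ChernoffBound}, applied with the coefficient vector uniform on $S$, makes each such configuration individually tiny, and a union bound over $i$ and the choice of $S$ delivers the claimed bound.

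For the pigeonhole step, set $N_i := \lvert\{w \in V : X_w^{(t_1)} - \overline{x} \geq 4i\}\rvert$ for $i \geq 0$. The trivial inequality $\max\{y, 0\} \leq 4\bigl(1 + \lfloor \max\{y,0\}/4 \rfloor\bigr)$ yields $\sum_w \max\{X_w^{(t_1)} - \overline{x}, 0\} \leq 4n + 4 \sum_{i \geq 1} N_i$, so the bad event $\{\,\text{sum} > 16n\,\}$ forces $\sum_{i \geq 1} N_i > 3n$. Because $\sum_{i \geq 1} n/2^{i-1} = 2n$, some level $i \geq 1$ must satisfy $N_i > n/2^{i-1}$, and hence there exists $S \subseteq V$ of size $s_i := \max\{1, \lceil n/2^{i-1} \rceil\}$ on which every node carries load $\geq \overline{x} + 4i$.

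For such a fixed $i$ and $S$, I would apply \cref{ChernoffBound} with $a_w := \mathbf{1}[w \in S]/s_i$, so that $\|a\|_2^2 = 1/s_i$ and the bad-for-$S$ event is contained in $\bigl\{\sum_w a_w X_w^{(t_1)} - \overline{x} \geq 4i\bigr\}$. With the $(K, 1/(2n))$-smoothing hypothesis the lemma bounds the per-set probability by $2\exp\!\bigl(-(4i - 1/(2n))^2 s_i /4\bigr) \leq 2 \exp(-3 i^2 s_i)$ for $i \geq 1$. Multiplying by $\binom{n}{s_i} \leq (e \cdot 2^{i-1})^{s_i}$ bounds the contribution of level $i$ to the failure probability by $2\exp\!\bigl((1 + (i-1)\ln 2 - 3 i^2) s_i\bigr) \leq 2 \exp(-2 i^2 s_i)$, which is at most $2\exp(-2n)$ at $i = 1$ and is geometrically smaller in $n$ at every subsequent level with $s_i \geq n/2^{i-1}$.

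The only step that needs care is the endgame for $i > \lceil \log_2 n \rceil$, where the pigeonhole forces $N_i \geq 1$ and a union bound over single nodes gives the tail estimate $n \cdot 2\exp(-3 i^2)$, already $n^{-\omega(1)}$ at $i = \lceil \log_2 n \rceil + 1$ (and the range of $i$ is finite because the discrepancy is non-increasing and hence bounded by the initial $K$). Summing both regimes keeps the total failure probability comfortably within $2 n^{-2}$. The only calibration is choosing the constants $4$ (in the threshold $4i$) and $2$ (in $n/2^{i-1}$) so that the quadratic gain $(4i)^2 s_i$ in the concentration exponent beats both $\ln \binom{n}{s_i} \leq s_i \ln(e \cdot 2^{i-1})$ and the residue $16n - 4n = 12n$ exceeding the geometric tail $\sum_{i \geq 1} n/2^{i-1} = 2n$; with these constants the rest is routine summation.
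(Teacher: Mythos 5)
Your proposal is correct and follows essentially the same route as the paper's proof: dyadic level sets of size roughly $n/2^{i}$ at excess threshold $4i$, a union bound over candidate subsets $S$, and an application of \cref{ChernoffBound} with the coefficient vector uniform on $S$. The only cosmetic differences are that you argue the contrapositive via pigeonhole rather than establishing the good events $\mathcal{E}_i$ directly, and that you handle the large-$i$ tail with single-node union bounds instead of the paper's separate max-load event $\mathcal{E}_0$; both variants are sound and your constants check out.
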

\begin{proof}
For any integer $i=1,2,\ldots,2\cdot \sqrt{\log(n)} $, we define the event
\[
\mathcal{E}_i := \biggl\{ \left| \left\{ u \in V \colon X_u^{(t_1)} - \overline{x} \geq 4\cdot i \right\} \right| \le n \cdot 2^{-i} \biggr\} \qquad \text{and} \qquad \mathcal{E}_0 := \left\{  \max_{w\in V}X_{w}^{(t_1)} -\overline{x} \le 6 \sqrt{\log(n)} \right \}.
\]
From \cref{ChernoffBound}, second statement (with $\delta:=\sqrt{36\log(n)}$, $a_w=1$ and $a_u=0$ for $u\neq w$ and union bound over all nodes $w\in V$) we get
\[
\Pro{ \mathcal{E}_0 } \geq 1 - n^{-2}.
\] 
It remains to analyze the probability of the other $\mathcal{E}_i$. To this end, fix any $1 \leq i \leq 2\sqrt{\log(n)} $. By the union bound,
\begin{align}
\Pro{ \overline{\mathcal{E}_i }} &\leq \sum_{S \subseteq V \colon |S| = n \cdot 2^{-i}}
\Pro{ \bigcap_{u \in S} \left\{ X_u^{(t_1)} \geq \overline{x} + 4 \cdot i \right\} }. \label{eq:union_bound}
\end{align}
To calculate this probability we apply \cref{ChernoffBound} as follows.
We define for a fixed subset $S \subseteq V$ a vector $(a_w)_{w \in V}$ by $a_w := 1/|S|$ for $w \in S$ and $a_w := 0$ otherwise. Then $\sum_{w \in V} a_w =1$ and $\|a\|_2^2 = |S| \cdot (1/|S|)^2 = 1/|S|$, $\kappa=1/n$, and we obtain that
\begin{align*}
\Pro{ \bigcap_{u \in S} \left\{ X_u^{(t_1)} \geq \overline{x} \!+\! 4 \cdot i \right\} } \!&\leq \Pro{\sum_{u \in S} X_u^{(t_1)} \geq |S| \cdot \left( \overline{x} + 4 \cdot i \right) } \leq
\Pro{ \left| \sum_{w \in V} a_w \cdot X_w^{(t_1)} - \overline{x} \right| \geq 4 \cdot i} \\
&\stackrel{(a)}{\leq} 2 \cdot \exp\left( - \frac{ (3 i)^2}{ 4 \| a \|_2^2} \right) \leq 2 \cdot \exp\left( - (9/4) \cdot  i^2 \cdot |S| \right) 
\\&= 2 \cdot \exp\left( - (9/4) \cdot  i^2 \cdot n \cdot 2^{-i} \right),
\end{align*}
where $(a)$ follows from \cref{ChernoffBound}. Plugging this into \cref{eq:union_bound} yields
\begin{align*}
\Pro {\overline{\mathcal{E}_i }} &\leq \binom{n}{n \cdot 2^{-i}} \cdot 2 \cdot \exp\left( - (9/4)\cdot  i^2 \cdot n \cdot 2^{-i} \right) \\&\stackrel{(a)}{\leq} \left( e \cdot 2^{i} \right)^{ n \cdot 2^{-i}} \cdot 2 \cdot \exp\left( - (9/4)\cdot  i^2 \cdot n \cdot 2^{-i} \right)
= 2 \cdot \left( \frac{e \cdot 2^{i}}{ e^{(9/4)\cdot i^2} } \right)^{ n \cdot 2^{-i}} \stackrel{(b)}{\leq} n^{-3},
\end{align*}
where $(a)$ used the estimate $\binom{n}{k} \leq (en/k)^k$ and $(b)$ used that $i \in [1, 2\sqrt{\log(n)} ]$.
By another union bound over $i=0,1,\ldots,2\sqrt{\log(n)} $,
\[
\Pro{ \bigcap_{i=0}^{2\sqrt{\log(n)}} \mathcal{E}_i } \geq 1 - n^{-2} - 2\sqrt{\log(n)} \cdot n^{-3} \geq 1 - 2\cdot n^{-2}.
\]
We define a sequence $a_0 := -\infty$, $a_i:= 4 \cdot i$ for $1 \leq i \leq 2\sqrt{\log(n)}$ and assume the event $\bigcap_{i=0}^{2\sqrt{\log(n)}} \mathcal{E}_i $ occurs and get
\begin{align*}
\MoveEqLeft\sum_{w \in V} \max \left\{ X_w^{(t_1)} - \overline{x}, 0 \right\} \leq
\sum_{i=1}^{2\sqrt{\log(n)}} \left| \left\{ w \in V \colon X_w^{(t_1)} - \overline{x} \in (a_{i-1}, a_{i} ] \right\} \right| \cdot a_i \\
&\leq
\sum_{i=1}^{2\sqrt{\log(n)}} \left| \left\{ w \in V \colon X_w^{(t_1)} - \overline{x} \geq a_{i-1} \right\} \right| \cdot a_i \leq \sum_{i=1}^{\infty} \left( 4n \cdot 2^{-i+1} \cdot i \right)
\leq 8 n\cdot \sum_{i=1}^{\infty} \left( i \cdot 2^{-i} \right)
= 16 n.
\end{align*}
This completes the proof.
\end{proof}

\section{Balancing a Linear Number of Tokens}\label{sec:spreading}

In this section we establish the most challenging step in our analysis. 
We consider an initial load vector $x^{(0)}$ with at most $(\newL-\epsilon) n$ tokens for $\frac{4}{\log^4(n)} <\epsilon<\newL$ and integer $\newL \geq 1$ (this is more general than requiring a linear number of tokens, which is what we will need in the final proof).

The main result of this section is given in \cref{lem:NtokensTo25Height}, which states that after $O(\tauglobal + \taulocal\cdot \log(n)/\log\log(n)$) rounds the maximum load is $L+1$. An illustration of the proof method can be found in \cref{fig:pro_illustration}.

\begin{proposition}
%[restate=restateMaxLoadTwo]
\label{lem:NtokensTo25Height}
Let $t_2 :=2 \cdot \tauglobal+ {6\cdot \taulocal \cdot \log (n)}/{\log\log(n)}$, let $L$ be any integer with $1 \leq L \leq \log^7 (n)$ and let $\frac{4}{\log^4 (n)} \leq \epsilon< \newL$ (not necessarily constant).
Assume our process applies a $(\tauglobal,\taulocal)$-good sequence of matchings $\left(\M^{(s)}\right)_{s=1}^{\infty}$ to an arbitrary initial load vector $x^{(0)}$ with at most $(\newL-\epsilon) n$ tokens.
Then
\[
\Pro{ \max_{w \in V}X_w^{(t_2)} \leq \newL+1 } 
\geq 1-\exp\left(-\frac{\epsilon\cdot \log (n)}{2\cdot \newL\cdot \log\log (n)}
+ 8\cdot\log \log (n)\right) - 2\cdot n^{-2}.
\]
\end{proposition}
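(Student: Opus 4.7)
The plan is to prove this proposition by decomposing the time interval into two phases of equal length, following the outline illustrated in \cref{fig:pro_illustration}. Let $t_1 := \tauglobal + \ell \cdot \taulocal$ with $\ell := 3\log(n)/\log\log(n)$, so that $t_2 = 2 t_1$. Let $Y^{(t)} := |\{i \in \mathcal{T} \colon H_i^{(t)} \geq L+1\}|$ denote the number of tokens at height at least $L+1$ in round $t$, and let $\widehat{Y}^{(t)}$ be the analogue for height $L+2$. Each phase will use its first $\tauglobal$ rounds as a global mixing step, followed by $\ell$ short epochs of length $\taulocal$ during which the Key Lemma (\cref{lem:generallemmaone:RM}) is applied iteratively.

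In Phase~1 (rounds $1$ to $t_1$) the goal is to establish $Y^{(t_1)} \leq n/\log(n)$ with high probability; this is exactly the content of \cref{lem:rmlemmaphaseone}. Starting from $Y^{(0)} \leq (L-\epsilon)n$, I would first invoke the global-mixing bound \cref{eq:taus:eq-1} to guarantee that $\left(\M^{(s)}\right)_{s=1}^{\tauglobal}$ is $(K,1/(2n))$-smoothing, which feeds \cref{ChernoffBound} in the analysis of each subsequent epoch and is the source of the $2 n^{-2}$ slack. Then the Key Lemma is applied inside each of the $\ell$ epochs of length $\taulocal$. Since $\epsilon \geq 4/\log^4(n)$, the error term $2/(L\log^4(n))$ in the Key Lemma is absorbed into $\epsilon/(2L)$, yielding an expected multiplicative drop per epoch of at most $\rho := 1 - \epsilon/(2L)$. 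Packaging $(Y^{(\tauglobal + k \taulocal)} \rho^{-k})_{k=0}^{\ell}$ as a supermartingale (or working with $\log Y$ and Azuma's inequality) together with a union bound over the per-epoch failure events of \cref{def:taus} gives $Y^{(t_1)} \leq Ln \cdot \rho^{\ell} \leq Ln \cdot \exp(-3\epsilon \log(n)/(2L\log\log(n)))$ with probability at least $1 - \exp(-\epsilon\log(n)/(2L\log\log(n)) + O(\log\log(n)))$.

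Phase~2 (rounds $t_1+1$ to $t_2$) runs the same template starting from the state at time $t_1$, now targeting $\widehat{Y}^{(t)}$ and exploiting the fact that $Y^{(t_1)} \leq n/\log(n)$ is already sub-linear. This is the regime of \cref{lem:rmlemmaphasetwo}: because the effective token-density above height $L$ is a $1/\log(n)$ factor smaller, an appropriate refinement of the Key Lemma gives a super-exponential drop of factor $1/\log(n)$ per epoch. Within a few epochs $\Ex{\widehat{Y}^{(t_2)}}$ falls below $1$, and integrality together with Markov's inequality delivers $\widehat{Y}^{(t_2)} = 0$ with high probability, so $\max_w X_w^{(t_2)} \leq L+1$ as claimed. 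A final union bound over the failure events of Phase~1, Phase~2, and the smoothing certificate for the initial $\tauglobal$ rounds produces the stated probability bound.

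The main obstacle is chaining the Key Lemma across the $\ell$ dependent epochs. Its hypothesis requires the invariant ``at most $(L-\epsilon)n$ tokens'', so the iteration must either build a supermartingale robust to the random state at the start of each epoch or run a bootstrap argument maintaining the invariant throughout with high probability. Each epoch's multiplicative drop is furthermore contingent on the local coarse-mixing event \cref{eq:taus:eq-2} holding for that epoch; coordinating these $\ell$ events via a union bound, while keeping a clear separation between where \cref{eq:taus:eq-1} versus \cref{eq:taus:eq-2} is invoked, is the technical crux. The $8\log\log(n)$ overhead in the exponent of the stated bound is precisely the budget needed to absorb these union-bound costs without spoiling the main decay $\exp(-\epsilon\log(n)/(2L\log\log(n)))$ coming from the iterated Key Lemma.
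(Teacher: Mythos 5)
Your proposal is correct and follows essentially the same route as the paper: a Phase~1 invocation of \cref{lem:rmlemmaphaseone} to drive the count of tokens at height at least $L+1$ down to $n/\log(n)$, followed by a Phase~2 invocation of \cref{lem:rmlemmaphasetwo} on the shifted/auxiliary load vector to eliminate all tokens at height at least $L+2$, with the final probability obtained by a union bound (the paper handles the dependent-epoch issue you flag via the indicator-truncated variable $\tilde{Y}^{(e(k))}$, which is the supermartingale device you describe). The only differences are cosmetic: the paper splits the $6\taulocal\log(n)/\log\log(n)$ budget asymmetrically between the two phases rather than as $t_2 = 2t_1$.
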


The proposition makes uses of \cref{lem:rmlemmaphaseone} and \cref{lem:rmlemmaphasetwo}, which concern Phase 1 and Phase 2, respectively. Both are stated and proved at the end of this section.

\begin{proof}[Proof of \cref{lem:NtokensTo25Height}]
\textbf{Phase 1.} Let $t_1:= \tauglobal + \taulocal\cdot\log(n)/\log \log (n) $ and define $Y^{(t_1)} := \sum_{u \in V} \max\{ X_u^{(t_1)} - \newL , 0 \}$ as the number of tokens at height at least $\newL+1$. Define the event
$
\mathcal{G}_1 := \{Y^{(t_1)} \le {n}/{\log(n)}\}.
$
From \cref{lem:rmlemmaphaseone} (Phase 1 Lemma) it follows that 
\[
\Pro{\mathcal{G}_1} \ge 1-\exp\left(-\frac{\epsilon\cdot \log (n)}{2\cdot \newL\cdot \log\log (n)}
+ 8\cdot\log \log (n)\right)- n^{-2}.
\]
\textbf{Phase 2.} For any $t \geq t_1$, we define an auxiliary load vector $\widehat{X}^{(t)}$ with
$\widehat{X}_u^{(t)}:= \max\{X_u^{(t)}-\newL,0\}$.
From \cref{obs:subAddAppl} it follows that for any $t \geq t_1$,
$X_u^{(t)} \leq \widehat{X}_u^{(t)} + \newL$.
Hence it is sufficient to show that $\max_{u \in V} \widehat{X}_u^{(t_2)} \leq 1$ for some suitable round $t_2$. Note that from  Phase 1 we get  $\sum_{u \in V} \widehat{X}_u^{(t_1)} \leq n/\log(n)$, i.e. the load vector
$\widehat{X}^{(t_1)}$ has at most $n/\log(n)$ tokens.
This time we define, for  any round $t \geq t_1$, 
$\widehat{Y}^{(t)} := \sum_{u\in V} \max\{\widehat{X}_u^{(t)}-1,0\}$,
which is equal to the number of tokens with height at least $2$ in $\widehat{X}^{(t)}$. Let $t_2:=t_1+\tauglobal+4\taulocal\cdot \frac{\log(n)}{-\log(\frac{1}{\log(n)} +\frac{ 2}{\log^4 (n)})}$.
We define a second event
$\mathcal{G}_2 : =\{ \widehat{Y}^{(t_2)} = 0\}$.
From \cref{lem:rmlemmaphasetwo} (Phase 2 Lemma) with $\epsilon:=1-\frac{1}{\log(n)}$ it follows that $\Pro{\mathcal{G}_2 ~|~ \mathcal{G}_1} \ge 1 - n^{-2},$
and by the definition of conditional probability we get
\begin{align*}
\Pro{\mathcal{G}_2} &\ge
\Pro{\mathcal{G}_2 ~|~ \mathcal{G}_1} \cdot \Pro{\mathcal{G}_1} \ge 1- n^{-2} - \Pro{\overline{\mathcal{G}_1}} 
\\& \geq 1 -\exp\left(-\frac{\epsilon\cdot \log (n)}{2\cdot \newL\cdot \log\log (n)}
+ 8\cdot\log \log (n)\right)- 2\cdot n^{-2}.
\end{align*}
From the definition of $t_2$ it follows that
\begin{align*}
    t_2 &= t_1 + \tauglobal +\frac{4\cdot \log (n)}{-\log\left(\frac{1}{\log(n)}+\frac{2}{\log^4 (n)}\right)}\cdot \taulocal \le t_1+ \tauglobal + \frac{5\cdot \log (n)}{\log\log(n)} \cdot \taulocal
    \\& = 2\cdot \tauglobal +  \frac{ 6\cdot\log(n)}{\log\log(n)}\cdot \taulocal,  
\end{align*}
finishing the proof.
\end{proof}
In the rest of this section we will prove \cref{lem:rmlemmaphaseone} and \cref{lem:rmlemmaphasetwo}. To do so, we have to provide several additional definitions and statements, which ultimately leads towards the key lemma (\cref{lem:generallemmaone:RM}), which proves a drop in the number of tokens with height at least $L+1$ within $\taulocal$ rounds. From this,  \cref{lem:rmlemmaphaseone} and \cref{lem:rmlemmaphasetwo} can be derived relatively easily.

First of all, we define two events corresponding to $\tauglobal$ (global mixing) and $\taulocal$ (local mixing).
\begin{definition}\label{taus}
For any round $t\ge \tauglobal$ we define
\[
\Gamma_g^{(t)}:= \left\{
\bigcap_{u \in V} \left\| \M_{u,.}^{[1,t]} - \vec{ \frac{1}{n}} \right\|_2^2 \leq \frac{1}{n^7} \right\}.
\]
\noindent For any round $t\ge \taulocal$ and arbitrary node $u\in V$ we define
\[
\Gamma_{\ell}^{(t)}(u):= \left\{ \left\| 
\M_{u,.}^{[t-\taulocal+1,t]}\right\|_{2}^{2} \le \frac{1}{\log^{10}(n)} \right\}.
\]
\end{definition}

Note that the event $\Gamma_g^{(t)} $ depends on the matching matrices in the time interval $[1,t]$. The event implies that the balancing matrices applied during rounds $[1,t]$ are $(n\cdot \log^{7}(n),1/n)$-smoothing
(see \cref{obs:tauGtauCRelation}).
We call that property \emph{globally mixing}, since it implies that a token starting from any node will reach each node in round $t$ with probability $(1\pm o(1))/n$.
In contrast, the event $\Gamma_{\ell}^{(t)}(u)$ adopts the viewpoint of a single node $u$.
It implies that the matchings chosen during the last $\taulocal$ rounds are locally smoothing from the viewpoint $u$. $\Gamma_{\ell}^{(t)}(u)$  depends on the matchings applied in the time interval $[t-\taulocal+1,t]$. 
We remark that for any round $t \geq 1$, the events $\Gamma_{g}^{(t)}$ and $\Gamma_{\ell}^{(t+\taulocal)}(u)$ are statements over disjoint time-intervals, hence for the random matching model they are independent events.
In the following we first calculate the probabilities that these events occur for a sequence of $(\tauglobal,\taulocal)$-$good$ matchings, the proof is a straightforward calculation.

\begin{lemma}
%[restate=restateLemGammaGL]
\label{lem:Gamma:GL}
For any process generating $(\tauglobal,\taulocal)$-$good$ matchings the following holds.
\begin{enumerate}\itemsep0pt
\item For any round $t\ge \tauglobal$,
$\displaystyle
\Pro{ \Gamma_g^{(t)} } \geq 1-1/n^3.
$
\item For any round $t \geq \taulocal$ and arbitrary node $u\in V$,
$\displaystyle
\Pro{ \Gamma_{\ell}^{(t)}(u) } \geq 1 - 1/\log^{11} n.
$ 
\end{enumerate}
\end{lemma}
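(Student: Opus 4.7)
The plan is to derive both parts of \cref{lem:Gamma:GL} essentially directly from \cref{def:taus}, with part 1 requiring one additional ingredient: the $\ell_2$-contractivity of doubly stochastic matrices toward the uniform distribution.

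For part 2, I plan to invoke the second estimate of \cref{def:taus} with offset $t' := t - \taulocal$, which is non-negative under the hypothesis $t \geq \taulocal$. With this choice, the interval $[t'+1,\,t'+\taulocal]$ coincides exactly with $[t-\taulocal+1,\,t]$, so the defining inequality becomes $\Pro{\Gamma_\ell^{(t)}(u)} \geq 1 - 1/\log^{11}(n)$ verbatim.

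For part 1, applying the first estimate of \cref{def:taus} with offset $t' := 0$ immediately yields that, with probability at least $1 - 1/n^3$, the bound $\|\M_{u,\cdot}^{[1,\tauglobal]} - \vec{\tfrac{1}{n}}\|_2^2 \leq 1/n^7$ holds simultaneously for every $u \in V$. To upgrade this from the interval $[1,\tauglobal]$ to $[1,t]$ for arbitrary $t \geq \tauglobal$, I will factor $\M^{[1,t]} = \M^{[1,\tauglobal]} \cdot \M^{[\tauglobal+1,t]}$. Each matching matrix is symmetric and doubly stochastic by construction, and the product of doubly stochastic matrices is again doubly stochastic; in particular, $\vec{\tfrac{1}{n}} \cdot \M^{[\tauglobal+1,t]} = \vec{\tfrac{1}{n}}$. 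A short Cauchy--Schwarz argument using both the row- and column-sums of a doubly stochastic $P$ shows that $\|v \cdot P\|_2 \leq \|v\|_2$ for every row vector $v$. Applying this non-expansion to $v := \M_{u,\cdot}^{[1,\tauglobal]} - \vec{\tfrac{1}{n}}$ gives, pathwise on the good event,
\[
\left\|\M_{u,\cdot}^{[1,t]} - \vec{\tfrac{1}{n}}\right\|_2 = \left\|\left(\M_{u,\cdot}^{[1,\tauglobal]} - \vec{\tfrac{1}{n}}\right) \cdot \M^{[\tauglobal+1,t]}\right\|_2 \leq \left\|\M_{u,\cdot}^{[1,\tauglobal]} - \vec{\tfrac{1}{n}}\right\|_2 \leq n^{-7/2},
\]
which, intersected over all $u \in V$, establishes $\Pro{\Gamma_g^{(t)}} \geq 1 - 1/n^3$.

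I do not anticipate any substantive obstacle: both parts reduce to a careful alignment of the relevant time interval in \cref{def:taus} combined, in the first case, with the standard non-expansion of doubly stochastic matrices in $\ell_2$ around the uniform distribution. The one point worth making explicit is that this non-expansion holds for every realization of the (possibly random) matching sequence and not merely in expectation, which is precisely why the probability estimate from the definition transfers unchanged to the longer interval $[1,t]$.
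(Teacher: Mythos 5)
Your proposal is correct and follows essentially the same route as the paper: both parts are read off from \cref{def:taus} with the appropriate time offset, and the extension of part 1 from the interval $[1,\tauglobal]$ to $[1,t]$ uses exactly the monotonicity of $\bigl\| \M_{u,\cdot}^{[1,t]} - \vec{\tfrac{1}{n}} \bigr\|_2^2$ in $t$, which the paper cites as \cref{monotone} and which you re-derive via the $\ell_2$-non-expansion of doubly stochastic matrices.
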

\begin{proof}
    First note that
\begin{align}
\Pro{\overline{\Gamma_g^{(\tauglobal)}}} &= \Pro{\bigcup_{u \in V} \left\{\left\| \M_{u,.}^{[1,\tauglobal]} - \vec{ \frac{1}{n}} \right\|_2^2 > \frac{1}{n^7}\right\}} . \label{eq:notGammaGlobal}
\end{align}
Applying the definition of $\tauglobal$ (\cref{def:taus}), we get for 
$ t=\tauglobal$ that $\Pro{\overline{\Gamma_g^{(\tauglobal)}}}\le 1/n^3$.
The first statement follows from the basic fact that $ \left\| \M_{u,.}^{[1,t]} - \vec{ \frac{1}{n}} \right\|_2^2$ is non-increasing in $t$ (see 
\cref{monotone}).
For the second statement, consider an arbitrary round $t\ge \taulocal$ and a fixed node $u\in V$. Here, we get
\begin{align*}
\Pro{\overline{\Gamma_{\ell}^{(t)}(u)}} &= \Pro{ \left\{\left\| \M_{u,.}^{[t-\taulocal+1,t]} \right\|_2^2 > \frac{1}{\log^{10}(n)}\right\}} \le \frac{1}{\log^{11}(n)},
\end{align*}
where the last inequality follows from the definition of $\taulocal$ (\cref{def:taus}). As above, the proof follows from \cref{monotone}.
\end{proof}

Recall that  Phase 1 starts at round $\tauglobal$ and each of our two phases is subdivided into $\log (n)/\log\log(n)$ \emph{epochs} of $\taulocal$ many rounds. We refer to the last round of an epoch as \emph{milestone}. For any epoch $k$ we denote by $e: \N_{0} \rightarrow \N_{0}$ the function $e(k):=\tauglobal+k \cdot \taulocal$ which returns the last round of $k$-th epoch, i.e., $e(k)$ returns the $k$-th milestone. We define $E(k):=[e(k-1)+1,\ldots,e(k)]$ for the interval of rounds constituting the $k$-th epoch.

Let the random variable $Y^{(t)}$ denote the number of tokens of height at least $\newL+1$ at round~$t$, i.e.,
\begin{align}
Y^{(t)} := \sum_{u \in V} \max\left\{ X_u^{(t)}- \newL ,0 \right\} = \sum_{j \in \mathcal{T}} \1_{H_j^{(t)} \geq \newL+1}. 
\end{align} We wish to prove that in each epoch $Y^{(t)}$ drops by a constant factor (see \cref{{lem:generallemmaone:RM}}) such that at the end of Phase 1 the number of tokens of height at least $\newL+1$ is at most $n/\log(n)$ (\cref{lem:rmlemmaphaseone}). Since the
the height of the token is non-increasing this number will not increase for the rest of the process.

Let us first focus on the $k$-th milestone, for $k\in \N$. For now, let us fix an arbitrary location vector $w^{(e(k-1))}$. Consider an arbitrary token $i \in \mathcal{T}$ which has height at least $L+1$ at round $e(k-1)$.
Our goal is to prove that the expected number of tokens that collide with $i$ at round $e(k)$ is smaller than $L$.
To this end, we define an indicator random variable $Z_{i,j}^{(e(k)}
(v)$ for any token $j \in \mathcal{T}, j \neq i$ and any node $v \in V$ as
\[ Z_{i,j}^{(e(k))}(v) :=
\1_{ W_i^{(e(k))}=v \ \cap \ W_j^{(e(k))}=v } \quad \text{ and } \quad Z_i^{(e(k))}:=\sum_{v\in V} \sum_{j\in \mathcal{T} \colon j \neq i} Z_{ij}^{(e(k))}(v),
\]
i.e., $Z_{i,j}^{(e(k))}(v)=1$
if and only if tokens $i$ and $j$ are both on node $v$ at the $k$-th milestone.
Hence $Z_i^{(e(k))}$ counts the number of tokens colliding with token $i$ at that time.

The next lemma bounds the expected number of collisions.
Note that we assume in the lemma that the applied matrices are fixed. The randomness is due to the random decisions in the shuffling step.

\medskip
\begin{lemma}
%[restate=restateLemCollision]
\label{lem:collision}
Let $\left(\M^{(t)}\right)_{t=1}^{e(k)}$ be an arbitrary but fixed sequence of matchings.
For any milestone $k \in \N$ and any token $i \in \mathcal{T}$
it holds that
\begin{align*}
\MoveEqLeft
\Exco{ Z_i^{(e(k))}}{W^{(e(k-1))},\left(\M^{(t)}\right)_{t=1}^{e(k)}}
\le
\sum_{w\in V} \left(\sum_{v\in V} \M_{W_i^{(e(k-1))},v}^{[e(k-1)+1,e(k)]}\cdot \M_{w,v}^{[e(k-1)+1,e(k)]}\right) \cdot
X_w^{(e(k-1))}.
\end{align*}
\end{lemma}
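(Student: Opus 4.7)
The plan is a direct application of negative association (\cref{lem:height}) together with the random walk characterisation of token movements (\cref{lem:OneTokenRW}), followed by a regrouping of tokens by their starting location.

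First, using linearity of expectation and the definition of $Z_i^{(e(k))}$,
\[
\Exco{Z_i^{(e(k))}}{W^{(e(k-1))},(\M^{(t)})_{t=1}^{e(k)}}
= \sum_{v \in V} \sum_{j \in \mathcal{T}, j \neq i}
\Proco{W_i^{(e(k))} = v \cap W_j^{(e(k))} = v}{W^{(e(k-1))},(\M^{(t)})_{t=1}^{e(k)}}.
\]
Next, I would invoke \cref{lem:height} with $t_1 := e(k-1)$, $t_2 := e(k)$, $\mathcal{B} := \{i,j\}$ and $D := \{v\}$. Since the location vector at time $e(k-1)$ and the sequence of matchings are conditioned on, the hypotheses are satisfied. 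This yields
\[
\Proco{W_i^{(e(k))} = v \cap W_j^{(e(k))} = v}{\cdot}
\le \M_{W_i^{(e(k-1))},v}^{[e(k-1)+1,e(k)]} \cdot \M_{W_j^{(e(k-1))},v}^{[e(k-1)+1,e(k)]},
\]
where the right-hand side uses \cref{lem:OneTokenRW}$(ii)$ to identify the marginal hitting probabilities of tokens $i$ and $j$ with entries of the matching product matrix.

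The final step is to rewrite the sum over tokens $j \neq i$ as a sum over nodes weighted by the load. Grouping tokens by their current location gives $\sum_{j \in \mathcal{T}} \M_{W_j^{(e(k-1))},v}^{[e(k-1)+1,e(k)]} = \sum_{w \in V} X_w^{(e(k-1))} \cdot \M_{w,v}^{[e(k-1)+1,e(k)]}$, since exactly $X_w^{(e(k-1))}$ tokens sit on node $w$ at time $e(k-1)$. Removing token $i$ from this sum only subtracts a non-negative term, so we retain an upper bound after swapping the order of summation:
\[
\Exco{Z_i^{(e(k))}}{\cdot}
\le \sum_{v \in V} \M_{W_i^{(e(k-1))},v}^{[e(k-1)+1,e(k)]} \cdot \sum_{w \in V} X_w^{(e(k-1))} \cdot \M_{w,v}^{[e(k-1)+1,e(k)]},
\]
which after interchanging the $v$- and $w$-sums is precisely the claimed bound.

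There is no real obstacle here beyond correctly setting up the conditioning: the subtle point is that \cref{lem:height} needs the matchings to be fixed (which they are, by conditioning), and it is applied separately to each unordered pair $\{i,j\}$ and each singleton $D = \{v\}$. Dropping the $j = i$ term is the small lossy step; everything else is an identity.
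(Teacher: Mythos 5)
Your proposal is correct and follows essentially the same route as the paper's proof: linearity of expectation, pairwise application of \cref{lem:height} with $\mathcal{B}=\{i,j\}$ and $D=\{v\}$, identification of marginals via \cref{lem:OneTokenRW}, and then regrouping tokens by location (with the inclusion of the $j=i$ term as the only lossy step). No gaps.
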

\begin{proof}
First we focus on a fixed location vector $w^{(e(k-1))}$ and compute,
\begin{align}
\MoveEqLeft\Ex{Z_i^{(e(k))}~\Big|~ W^{(e(k-1))}= w^{(e(k-1))}, \left(\M^{(t)}\right)_{t=1}^{e(k)} }\notag \\& \stackrel{(a)}{=} \sum_{v \in V} \sum_{j\in \mathcal{T} \colon j \neq i}\E\left[Z_{ij}^{(e(k))}(v)~\Big|~W^{(e(k-1))}= w^{(e(k-1))},\left(\M^{(t)}\right)_{t=1}^{e(k)} \right] \notag \\ &\stackrel{(b)}{=}
\sum_{v \in V} \sum_{j\in \mathcal{T} \colon j \neq i}\Pro{Z_{ij}^{(e(k))}(v) = 1
~\Big|~ W^{(e(k-1))}= w^{(e(k-1))},\left(\M^{(t)}\right)_{t=1}^{e(k)} } \notag \\
&= \sum_{v \in V} \sum_{j\in \mathcal{T} \colon j \neq i} \Pro{W_i^{(e(k))} = v \cap W_j^{(e(k))} = v~\Big|~ W^{(e(k-1))}=w^{(e(k-1))},\left(\M^{(t)}\right)_{t=1}^{e(k)} },\label{eq:8:lem4.7:RM}
\end{align}
where $(a)$ uses linearity of conditional expectation and $(b)$ uses the fact that the $Z_{i,j}^{(e(k))}(v)$ are indicator random variables. Recall the definition $E(k)=[e(k-1)+1,\ldots,e(k)]$. Crucially, we can now apply our negative correlation result (\cref{lem:height}) to \cref{eq:8:lem4.7:RM} to obtain
\begin{align}
\MoveEqLeft \E\left[ Z_i^{(e(k))}~\Big|~ W^{(e(k-1))}= w^{(e(k-1))},\left(\M^{(t)}\right)_{t=1}^{e(k)} \right] \notag \\ &\leq \sum_{v \in V} \sum_{j\in \mathcal{T} \colon j \neq i} \Pro{ W_i^{(e(k))}=v~\Big|~ W^{(e(k-1))}=w^{(e(k-1))},\left(\M^{(t)}\right)_{t=1}^{e(k)} }\notag
\\& \qquad\qquad \qquad \cdot \Pro{W_j^{(e(k))}=v~\Big|~ W^{(e(k-1))}=w^{(e(k-1))},\left(\M^{(t)}\right)_{t=1}^{e(k)} } \notag \\
&= \sum_{v \in V} \M_{w_i^{(e(k-1))},v}^{E(k)} \cdot
\sum_{j\in \mathcal{T}} \M_{w_j^{(e(k-1))},v}^{E(k)} \notag \\
&=
\sum_{v \in V} \M_{w_i^{(e(k-1))},v}^{E(k)} \cdot \sum_{w \in V} \M_{w,v}^{E(k)} \cdot
x_w^{(e(k-1))}
\notag \\
&=
\sum_{w\in V} \left(\sum_{v\in V} \M_{w_i^{(e(k-1))},v}^{E(k)}\cdot \M_{w,v}^{E(k)}\right) \cdot x_w^{(e(k-1))}.\notag
\end{align}

Since the above estimate holds for all location vectors $w^{(e(k-1))}$, it follows 
\begin{align*}
\MoveEqLeft
\Ex{ Z_i^{(e(k))} ~\Big|~ W^{(e(k-1))},\left(\M^{(t)}\right)_{t=1}^{e(k)}}
\le
\sum_{w\in V} \left(\sum_{v\in V} \M_{W_i^{(e(k-1))},v}^{E(k)}\cdot \M_{w,v}^{E(k)}\right) \cdot X_w^{(e(k-1))}. 
\qed
\end{align*}
    
\end{proof}
We now define two more events which we use to track the decrease of $Y^{(t)}$ from epoch to epoch.

\begin{definition}\label{eq:LEvent}
Let $\frac{4}{\log^4 (n)} \leq \epsilon<\newL$. Fix a node $u$ and let $i$ be a token located on $u$ at milestone $e(k-1)$.

\begin{enumerate}
\item $ \displaystyle \cL^{(e(k))}(u):= \left\{ \Ex{ Z_i^{(e(k))} ~\Big|~ W^{(e(k-1))},\Gamma_{\ell}^{(e(k))}(u) } \leq \newL-\epsilon + \frac{1}{\log^4(n)}\right\} $.
\item
$\displaystyle
\cE^{(e(k))}(u):=\left\{
\Gamma_{g}^{(e(k-1))} \cap \left( \overline{\Gamma_{\ell}^{(e(k))}(u)}
\cup
\cL^{(e(k))}(u) \right)\right\}
\quad \text{and} \quad \cE^{(e(k))}:= \bigcap_{u\in V} \cE^{(e(k))}(u)
$.
\end{enumerate}
\end{definition}

The idea behind the events defined above is as follows (see also \cref{fig:events} for an illustration).
In the definition of $\cL^{(e(k))}(u)$ we condition on $\Gamma_{\ell}^{(e(k))}(u) $ which means that, from the viewpoint of node $u$ (or token $i$), the randomly chosen matchings in epoch $k$ ensure that token $i$ mixes ``locally''. If now the matchings chosen during the time interval $[1,e(k-1)]$ also suffice for a ``global mixing'', then the expected number of tokens colliding with token $i$ on node $v$ (the location of token $i$ at time $e(k)$) is less than $L$.
The event $\cE^{(e(k))}(u)$ occurs when the matchings chosen during the time interval $[1,e(k-1)]$ are globally mixing and the last epoch $k$ was ``locally mixing'' for token $i$.

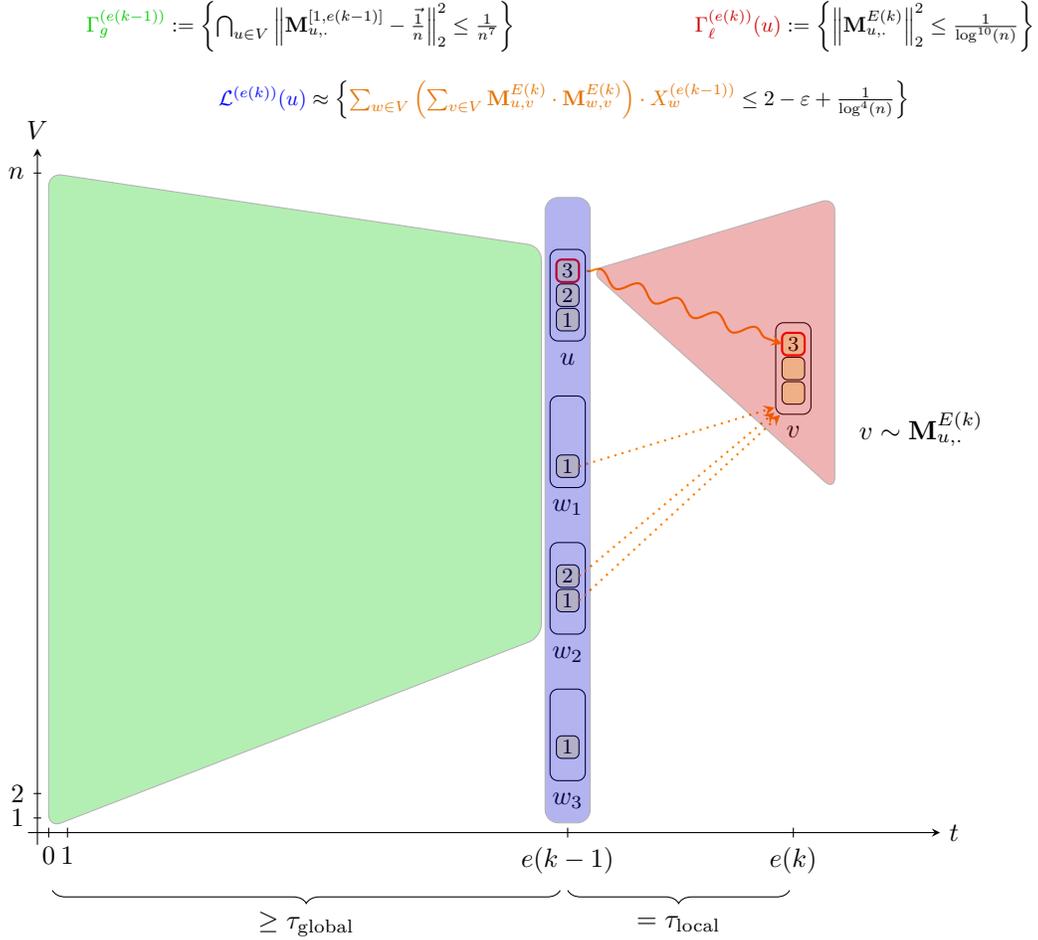
\begin{figure}[H]
\begin{center}
\begin{tikzpicture}[xscale=0.5,yscale=0.65,
win/.style={fill=green,opacity=0.4},
los/.style={fill=red,opacity=0.4},
r/.style={draw=black},
knoten/.style={rectangle,yscale=2.6,rounded corners=3pt,scale=2,draw=black, fill=white}]

\draw[-stealth] (0,-0.25) to node[pos=1.0,above] {$V$} (0,14);
\draw[-stealth] (-0.25,0) to node[pos=1.0,right] {$t$} (24,0);

\draw (0.3,-0.1) to node[pos=0.0,below] {$0$} (0.3,0.1);
\draw (0.8,-0.1) to node[pos=0.0,below] {$1$} (0.8,0.1);

\draw (-0.1,0.3) to node[pos=0.0,left] {$1$} (0.1,0.3);
\draw (-0.1,0.8) to node[pos=0.0,left] {$2$} (0.1,0.8);
\draw (-0.1,13.5) to node[pos=0.0,left] {$n$} (0.1,13.5);

\draw (14.1,-0.1) to node[pos=0.0,below] {$e(k-1)$} (14.1,0.1);

\draw (20.1,-0.1) to node[pos=0.0,below] {$e(k)$} (20.1,0.1);

\draw [decorate,decoration={brace,amplitude=5pt,mirror,raise=4ex}]
  (0.4,-0.25) -- (13.9,-0.25) node[midway,yshift=-3em]{$\geq \tauglobal$};
  \draw [decorate,decoration={brace,amplitude=5pt,mirror,raise=4ex}]
  (14.1,-0.25) -- (20,-0.25) node[midway,yshift=-3em]{$=\taulocal$};

\node[knoten] at (14.1,11) [label=below:$u$]{};
\token {14.1}{10.5}{1} 
\token {14.1}{11}{2} 
\ttoken {14.1}{11.5}{3} 

\node[knoten] at (14.1,8) [label=below:$w_1$]{};
\token {14.1}{7.5}{1} 

\node[knoten] at (14.1,5) [label=below:$w_2$]{};
\token {14.1}{4.75}{1} 
\token {14.1}{5.25}{2} 

\node[knoten] at (14.1,2) [label=below:$w_3$]{};
\token {14.1}{1.75}{1}

\node[knoten] at (20.1,9.5) [label=below:$v$]{};
\ttoken {20.1}{10}{3} 
\token {20.1}{9.5}{ } 
\token {20.1}{9}{ } 

\draw [-stealth,orange,thick,decorate,decoration=snake, segment length=5.5mm] (14.6,11.5) -- (19.8,10);

\draw [dotted,thick,orange,-stealth] (14.4,7.5) to (19.6,8.7);
\draw [dotted,thick,orange,-stealth] (14.4,5.25) to (19.6,8.6);
\draw [dotted,thick,orange,-stealth] (14.4,4.75) to 
(19.75,8.55);

\filldraw[fill=red!80!black, opacity=0.3, rounded corners=5pt] (14.7,11.5) to (21.2,13) to (21.2,7)  -- cycle;

\filldraw[fill=green!80!black, opacity=0.3, rounded corners=5pt] (13.4,12) to (0.3,13.5) to (0.3,0.1) to (13.4,4) -- cycle;

\filldraw[fill=blue!80!black, opacity=0.3, rounded corners=5pt] (13.5,0.2) rectangle (14.7,13);

\node[scale=0.8] at (22,16.5) {$ \textcolor{red!80!black}{\Gamma_{\ell}^{(e(k))}(u)}:= \left\{  \left\| \M_{u,.}^{E(k)} \right\|_2^2 \leq \frac{1}{\log^{10}(n)} \right\}$};

\node[scale=0.8] at (14,15) {$ \textcolor{blue}{\mathcal{L}^{(e(k))}(u)} \approx \left\{
\textcolor{orange!90!black}{\sum_{w\in V} \left(\sum_{v\in V} \M_{u,v}^{E(k)}\cdot \M_{w,v}^{E(k)}\right) \cdot X_w^{(e(k-1))}} \leq 2 - \epsilon + \frac{1}{\log^4(n)} \right\}$};

\node[scale=0.8] at (7,16.5) {$\textcolor{green!80!black}{\Gamma_{g}^{(e(k-1))}} := \left\{  \bigcap_{u \in V} \left\| \M_{u,.}^{[1,e(k-1)]} - \vec{\frac{1}{n}} \right\|_2^2 \leq \frac{1}{n^7} \right\}$};

\node () at (23.5,9) [label=below:$v \sim \M_{u,.}^{E(k)}$] {};

\end{tikzpicture}

\caption{Illustration of the events $\textcolor{red!80!black}{\Gamma_{\ell}^{(e(k)}(u)}$, $\textcolor{green!80!black}{\Gamma_{g}^{(e(k-1))}}$ and $ \textcolor{blue}{\mathcal{L}^{(e(k))}(u)}$, where $L=2$. The token $i$, marked in red, is located at node $u$ in round $e(k-1)$ and has height $L+1=3$. 
Its location $v$ in round $e(k)$ is random and chosen according to $\M_{u,.}^{E(k)}$. In order to keep its height $L+1=3$, there must be at least $L=2$ other tokens on $v$ in round $e(k)$. \newline
Roughly speaking, the event $ \textcolor{blue}{\mathcal{L}^{(e(k))}(u)}$ corresponds to a certain upper bound on a convex combination of loads at round $e(k-1)$ holding; this event depends both on the matchings and shuffling decisions in $[0,e(k-1)]$ and the matchings in $E(k)$.
The coefficients of the load vector are collision probabilities so that the \textcolor{orange}{left-hand side} equals the expected number of tokens that reach the (random) location of token $i$ in round $e(k)$.
\newline
We prove in \cref{lem:crucialfirststatement} that if $\textcolor{red!80!black}{\Gamma_{\ell}^{(e(k))}(u)}$ and $\textcolor{green!80!black}{\Gamma_{g}^{(e(k-1)}}$ occur, then also $ \textcolor{blue}{\mathcal{L}^{(e(k))}(u)}$ occurs (with high probability). Then, conditional on this implication, we prove in our key lemma (\cref{lem:generallemmaone:RM}) that the number of tokens with height $L+1=3$ drops within $E(k)$ by a suitable factor.
}\label{fig:events}
\end{center}
\end{figure}

\begin{lemma}
%[restate=restateLemCrucialFirstStatement]
\label{lem:crucialfirststatement}
Assume that the load vector $x^{(0)}$ has at most $(\newL-\epsilon) n$ tokens for $1 \leq \newL \leq \log^7 (n)$ being an integer and any $0<\epsilon<\newL$. Then for any milestone $k \geq 1$ we have $\Pro{ \cE^{(e(k))}}\geq 1- \frac{2}{n^3}$.
\end{lemma}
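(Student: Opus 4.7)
The plan is to decompose the failure event by a union bound and handle each piece using the two mixing events together with the concentration inequality~\cref{ChernoffBound}. Writing
\[
\overline{\mathcal{E}^{(e(k))}} \subseteq \overline{\Gamma_g^{(e(k-1))}} \;\cup\; \bigcup_{u \in V} \Bigl(\Gamma_g^{(e(k-1))} \cap \Gamma_\ell^{(e(k))}(u) \cap \overline{\mathcal{L}^{(e(k))}(u)}\Bigr),
\]
the first piece contributes at most $1/n^3$ by~\cref{lem:Gamma:GL}(1). For a fixed $u \in V$ with the token $i$ sitting on $u$ at round $e(k-1)$, I invoke the collision bound~\cref{lem:collision}: for every realization of the matchings $\mathbf{M}=(\M^{(s)})_{s \in E(k)}$, setting $a_w(\mathbf{M}) := \bigl(\M^{E(k)}(\M^{E(k)})^\top\bigr)_{u,w}$,
\[
\Ex{Z_i^{(e(k))} \mid W^{(e(k-1))}, \mathbf{M}} \leq \sum_{w \in V} a_w(\mathbf{M}) \cdot X_w^{(e(k-1))}.
\]
Because $\M^{E(k)}(\M^{E(k)})^\top$ is doubly stochastic, each $a(\mathbf{M})$ is a stochastic vector, and since the matchings in $E(k)$ are independent of everything up to round $e(k-1)$ in all three considered models, averaging over $\mathbf{M}$ conditional on $\Gamma_\ell^{(e(k))}(u)$ yields a \emph{deterministic} stochastic vector $\tilde a_w := \Ex{a_w(\mathbf{M}) \mid \Gamma_\ell^{(e(k))}(u)}$ with
\[
\Ex{Z_i^{(e(k))} \mid W^{(e(k-1))}, \Gamma_\ell^{(e(k))}(u)} \leq \sum_{w \in V} \tilde a_w \cdot X_w^{(e(k-1))};
\]
using $\overline x \leq L - \epsilon$, the event $\overline{\mathcal{L}^{(e(k))}(u)}$ is therefore contained in $\bigl\{\sum_w \tilde a_w X_w^{(e(k-1))} > \overline{x} + 1/\log^4(n)\bigr\}$.

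The pivotal quantitative estimate is $\|\tilde a\|_2^2 \leq 1/\log^{10}(n)$. Writing $a(\mathbf{M}) = \M^{E(k)}\bigl((\M^{E(k)})^\top e_u\bigr)$ and using that every doubly stochastic matrix has $\ell_2$-operator norm at most $1$ (a direct consequence of Birkhoff's decomposition into permutation matrices), I get
\[
\|a(\mathbf{M})\|_2 \leq \|(\M^{E(k)})^\top e_u\|_2 = \|\M^{E(k)}_{u,\cdot}\|_2,
\]
which on $\Gamma_\ell^{(e(k))}(u)$ gives $\|a(\mathbf{M})\|_2^2 \leq 1/\log^{10}(n)$ pointwise; a coordinate-wise Jensen inequality then promotes this to $\|\tilde a\|_2^2 \leq \Ex{\|a(\mathbf{M})\|_2^2 \mid \Gamma_\ell^{(e(k))}(u)} \leq 1/\log^{10}(n)$. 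With this in hand I apply~\cref{ChernoffBound} to $\tilde a$: on $\Gamma_g^{(e(k-1))}$ the matching sequence on $[1, e(k-1)]$ is $(K, 1/n)$-smoothing for $K \leq (L-\epsilon) n \leq n \log^7(n)$, and taking $\delta = 1/\log^4(n)$ produces
\[
\Pro{\sum_w \tilde a_w X_w^{(e(k-1))} > \overline{x} + 1/\log^4(n) \,\Big|\, \Gamma_g^{(e(k-1))}} \leq 2 \exp\!\left(-\tfrac{\log^2(n)}{16}\right),
\]
which is far smaller than $n^{-4}$ for large $n$. A union bound over $u \in V$ absorbs a further $\leq 1/n^3$, so combined with the $\Gamma_g$ contribution, $\Pro{\overline{\mathcal{E}^{(e(k))}}} \leq 2/n^3$.

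The main obstacle I anticipate is managing the layered conditioning in the definition of $\mathcal{L}^{(e(k))}(u)$: its inner conditional expectation is taken against the ``future'' event $\Gamma_\ell^{(e(k))}(u)$, which depends on the matchings in $E(k)$ rather than on $W^{(e(k-1))}$, so the argument relies on the across-epoch independence of both matching and shuffling choices to legitimately separate the expectation over $\mathbf{M}$ from the randomness in $X^{(e(k-1))}$ and reduce the claim to a tail bound on a deterministic convex combination of loads. The quantitative leverage then comes entirely from operator-norm submultiplicativity on doubly stochastic matrices, which promotes the $\ell_2$-smallness of the single row $\M^{E(k)}_{u,\cdot}$ (granted by $\Gamma_\ell$) to the same smallness of the whole collision-probability vector $a(\mathbf{M})$; without this reduction a direct Cauchy--Schwarz bound on $\|a(\mathbf{M})\|_2$ would lose a factor of $\sqrt{\log^5 n}$ and ruin the estimate.
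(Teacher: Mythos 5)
Your proof is correct and follows essentially the same route as the paper's: the same union-bound decomposition, the collision bound of \cref{lem:collision}, the estimate $\|a\|_2^2 \le \|\M^{E(k)}_{u,\cdot}\|_2^2 \le 1/\log^{10}(n)$ (the paper obtains this from the second part of \cref{lem:theorem32_new}, you from non-expansiveness of doubly stochastic matrices --- the same fact), and an application of \cref{ChernoffBound} with $\delta$ of order $1/\log^4(n)$ after conditioning on $\Gamma_g^{(e(k-1))}$. The only deviation is technical: the paper (via \cref{cl:good}) fixes the epoch-$E(k)$ matchings pointwise and applies the concentration bound to each resulting fixed vector $a$, whereas you first average $a(\mathbf{M})$ over the matchings conditional on $\Gamma_{\ell}^{(e(k))}(u)$ and use Jensen to carry the $\ell_2$ bound over to the deterministic vector $\tilde a$; both orderings are legitimate given the across-epoch independence you explicitly invoke.
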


\begin{proof}
Consider a node $u\in V$; recall token $i$ is located on $u$ at round $e(k-1)$.  First let us negate the two events from \cref{eq:LEvent} which gives 
\begin{align*}
 \overline{\cL^{(e(k))}(u)} &= \left\{ \Ex{ Z_i^{(e(k))} ~\Big|~ W^{(e(k-1))},\Gamma_{\ell}^{(e(k))}(u) }> \newL-\epsilon + \frac{1}{\log^4(n)}\right\}\quad\mbox{and}\\
\overline{\cE^{(e(k))}(u)} &= \left\{
\overline{\Gamma_{g}^{(e(k-1))}} \cup \left( \Gamma_{\ell}^{(e(k))}(u)
\cap
\overline{\cL^{(e(k))}(u)} \right)\right\}.
\end{align*}
For brevity, let us write $\alpha:=\Pro{\overline{\Gamma_{g}^{(e(k-1))}}}$ and note that $\alpha \leq 1/n^3$ by \cref{lem:Gamma:GL}. Then,
\begin{align}
& \Pro{\overline{\cE^{(e(k))}(u)} } \notag \\
&\leq \alpha+\Pro{\Gamma_{\ell}^{(e(k))}(u) \cap \overline{\cL^{(e(k))}(u)}} 
\notag \\
&\leq \alpha+\Pro{\overline{\cL^{(e(k))}(u)}}  \notag\\
&= \alpha+\Pro{
\Ex{ Z_i^{(e(k))}~\Big|~W^{(e(k-1))},\Gamma_{\ell}^{(e(k))}(u) } > \newL -\epsilon + \frac{1}{\log^4(n)}
}  \notag\\
&=\alpha+\mathbf{Pr} \left[
\Ex{ Z_i^{(e(k))}~\Big|~W^{(e(k-1))},\Gamma_{\ell}^{(e(k))}(u), \Gamma_{g}^{(e(k-1))}} \cdot \Pro{\Gamma_{g}^{(e(k-1))}} + \right. \notag\\
& \hspace*{11ex}\left. \Ex{ Z_i^{(e(k))}~\Big|~W^{(e(k-1))},\Gamma_{\ell}^{(e(k))}(u) , \overline{\Gamma_{g}^{(e(k-1))}}}\cdot  \Pro{\overline{\Gamma_{g}^{(e(k-1))}}}> \newL -\epsilon + \frac{1}{\log^4(n)} \right] \notag
\\& \leq \alpha+\Pro{
\Ex{ Z_i^{(e(k))}~\Big|~W^{(e(k-1))},\Gamma_{\ell}^{(e(k))}(u), \Gamma_{g}^{(e(k-1))}} > \newL -\epsilon + \frac{1}{\log^4(n)} - \frac{\log^7 (n)}{n^2}}, \notag
\end{align}
\noindent
where the last inequality holds since first $\Pro{\Gamma_{g}^{(e(k-1))}} \leq 1$,
and secondly, since (deterministically) $Z_i^{(e(k))}\le \newL \cdot n \leq \log^7(n) \cdot n$ we get,
\[
  \Ex{ Z_i^{(e(k))}~\Big|~W^{(e(k-1))},\Gamma_{\ell}^{(e(k))}(u) , \overline{\Gamma_{g}^{(e(k-1))}}}\cdot  \Pro{\overline{\Gamma_{g}^{(e(k-1))}}} \le  \frac{\log^7 (n)}{n^2}.
\]
We thus have
\begin{align}
\lefteqn{ 
\Pro{\overline{\cE^{(e(k))}(u)} } } \notag \\
&\leq
\alpha + \Pro{ 
\Exco{ Z_i^{(e(k))}}{W^{(e(k-1))},\Gamma_{\ell}^{(e(k))}(u), \Gamma_{g}^{(e(k-1))}} > \newL -\epsilon + \frac{1}{\log^4(n)} - \frac{\log^7 (n)}{n^2}}. \label{eq:E(e(k)):bound}
\end{align}

To upper bound \cref{eq:E(e(k)):bound} we apply  \cref{cl:good} below. 
Since \cref{cl:good} holds for any fixed sequence
$\left(\M^{(t)}\right)_{t=1}^{e(k)}$ satisfying $\Gamma_{\ell}^{(e(k))}(u) \cap \Gamma_{g}^{(e(k-1))}$, we have
\[
\Pro{\Ex{ Z_i^{(e(k))}~\Big|~W^{(e(k-1))},\Gamma_{\ell}^{(e(k))}(u),\Gamma_g^{(e(k-1))} } > \newL -\epsilon + \frac{1}{2\log^4(n)} } \le \frac{1}{n^5}.
\]
Applying the union bound twice
implies that 
\[
\Pro{\overline{\mathcal{E}^{(e(k))}}} = 
\Pro{ \overline{\Gamma_g^{(e(k-1))}} \cup \bigcup_{u \in V} \left(\Gamma_{\ell}^{(e(k))}(u) \cap \overline{\cL^{(e(k))}(u)} \right) } \leq \alpha + n \cdot \frac{1}{n^5} \leq
\frac{2}{n^3}.
\]
\end{proof}

\begin{claim}\label{cl:good}
Fix a node $u$ and let $\left(\M^{(t)}\right)_{t=1}^{e(k)} $ be a fixed sequence of matchings
satisfying
$$\Gamma_{\ell}^{(e(k))}(u) \cap \Gamma_{g}^{(e(k-1))}.$$  Then
\begin{align*}
\Pro{\left\{ \Ex{ Z_i^{(e(k))}~\Big|~W^{(e(k-1))},\left(\M^{(t)}\right)_{t=1}^{e(k)} } > \newL -\epsilon + \frac{1}{2\log^4(n)}\right\} } \le \frac{1}{n^5}.
\end{align*}
\end{claim}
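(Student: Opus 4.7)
The plan is to combine the collision-expectation bound of \cref{lem:collision}, the norm bound from the second statement of \cref{lem:theorem32_new}, and the Hoeffding-type inequality of \cref{ChernoffBound}. Since the matchings are fixed and token $i$ is located at $u$ at round $e(k-1)$ (so $W_i^{(e(k-1))}=u$), \cref{lem:collision} yields the deterministic upper bound
\[
\Ex{Z_i^{(e(k))} \,\Big|\, W^{(e(k-1))},(\M^{(t)})_{t=1}^{e(k)}} \le \sum_{w\in V} a_w\cdot X_w^{(e(k-1))}, \qquad a_w := \sum_{v\in V}\M_{u,v}^{E(k)}\cdot \M_{w,v}^{E(k)}.
\]
Because $\M^{E(k)}$ is doubly stochastic, $\sum_w a_w = \sum_v \M_{u,v}^{E(k)}=1$, so $a$ is a stochastic vector. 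Setting $b_v:=\M_{u,v}^{E(k)}$, we can write $a_w=\sum_v b_v\cdot \M_{w,v}^{E(k)}$, so the second part of \cref{lem:theorem32_new}, applied with matching-product $\M^{E(k)}$ and input vector $b$, gives $\|a\|_2^2\le \|b\|_2^2=\|\M_{u,\cdot}^{E(k)}\|_2^2$. The hypothesis $\Gamma_{\ell}^{(e(k))}(u)$ then yields $\|a\|_2^2\le 1/\log^{10}(n)$.

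Next I would derive from $\Gamma_g^{(e(k-1))}$ that $(\M^{(s)})_{s=1}^{e(k-1)}$ is $(K,1/(2n))$-smoothing for $K:=\log^7(n)\cdot n$. Since the hypothesis gives a total load of at most $(\newL-\epsilon)n\le \log^7(n)\cdot n$, the initial discrepancy is at most $K$; a Cauchy--Schwarz plus Frobenius-sum argument (essentially \cref{obs:tauGtauCRelation}) yields
\[
|\xi_w^{(t)}-\overline{x}|\le \|x^{(0)}-\overline{x}\|_2\cdot \left\|\M_{\cdot,w}^{[1,e(k-1)]}-\vec{\tfrac{1}{n}}\right\|_2\le \sqrt{n}\cdot K\cdot \frac{1}{n^3}\le \frac{1}{4n},
\]
for $n$ large enough, so $\discr\!\bigl(x^{(0)}\cdot \M^{[1,e(k-1)]}\bigr)\le 1/(2n)$ as required. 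With $a$ as above, $\kappa=1/(2n)$, $\delta:=1/(2\log^4(n))$, and $\overline{x}\le \newL-\epsilon$, \cref{ChernoffBound} yields
\[
\Pro{\sum_{w\in V} a_w X_w^{(e(k-1))} > \newL-\epsilon + \frac{1}{2\log^4(n)}} \le 2\exp\!\left(-\frac{(1/(2\log^4(n))-1/(2n))^2}{4/\log^{10}(n)}\right) \le 2\exp\!\left(-\frac{\log^2(n)}{64}\right)\le \frac{1}{n^5},
\]
for $n$ sufficiently large; combining with the pointwise bound from \cref{lem:collision} completes the proof.

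The main obstacle I anticipate is bookkeeping the conditioning cleanly: \cref{ChernoffBound} conditions only on the matchings, yet the claim's probability space is implicitly restricted to outcomes with $W_i^{(e(k-1))}=u$, which could perturb the joint distribution of $(X_w^{(e(k-1))})_{w\in V}$ through correlated shuffling decisions. I would handle this by using the pointwise upper bound from \cref{lem:collision} to replace the event $\{\Ex{Z_i\mid W,\M}>\textup{threshold}\}$ by the $X^{(e(k-1))}$-measurable event $\{\sum_w a_w(u) X_w^{(e(k-1))}>\textup{threshold}\}$, to which \cref{ChernoffBound} applies unconditionally over the shuffling; any slack from further conditioning on $\{W_i^{(e(k-1))}=u\}$, an event of probability at least $(1-o(1))/n$ under $\Gamma_g^{(e(k-1))}$, can be absorbed by tightening the concentration constant from $\log^2(n)/64$ to $\log^2(n)/64-O(\log n)$, leaving the $1/n^5$ conclusion intact.
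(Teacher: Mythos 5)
Your proposal is correct and follows essentially the same route as the paper's proof: bound the conditional expectation via \cref{lem:collision}, verify $\|a\|_1=1$ and $\|a\|_2^2\le\|\M_{u,\cdot}^{E(k)}\|_2^2\le 1/\log^{10}(n)$ via the second statement of \cref{lem:theorem32_new} and $\Gamma_\ell^{(e(k))}(u)$, deduce the smoothing property from $\Gamma_g^{(e(k-1))}$ (the paper cites \cref{obs:tauGtauCRelation} where you recompute it), and apply \cref{ChernoffBound} with the same $\delta=1/(2\log^4(n))$. Your closing remark on the conditioning on $\{W_i^{(e(k-1))}=u\}$ is a reasonable way to discharge a technicality the paper passes over silently, and the concentration bound has more than enough slack to absorb the resulting factor of $n$.
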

\begin{proof}
As before, $E(k)=[e(k-1)+1,\ldots,e(k)]$ is the interval of rounds constituting the $k$-th epoch.
Let $W_i^{(e(k-1))}=u$ and note
\begin{align}
\MoveEqLeft \Pro{\left\{ \Ex{ Z_i^{(e(k))}~\Big|~W^{(e(k-1))},\left(\M^{(t)}\right)_{t=1}^{e(k)} } > \newL -\epsilon + \frac{1}{2\log^4(n)}\right\} } \notag\\
& \stackrel{(a)}{\leq} \Pro{\sum_{w\in V} \left( \sum_{v\in V} \M _{u,v}^{E(k)}\cdot \M_{w,v}^{E(k)}\right)\cdot X_w^{(e(k-1))} \geq \newL-\epsilon + \frac{1}{2\log^4(n)} }, \label{eq:NotLu:bound}
\end{align}
where $(a)$ follows from \cref{lem:collision}.
To bound the probability of \cref{eq:NotLu:bound} we will apply our concentration inequality (\cref{ChernoffBound}). Note that, since the matchings are fixed, the only randomness remaining in \cref{ChernoffBound} are the movements of tokens in the shuffling step.
We define an $n$-dimensional vector $\left(a_w\right)_{w\in V}$ as
$a_w:= \sum_{v\in V} \M_{u,v}^{E(k)}\cdot \M_{w,v}^{E(k)}$.
To apply \cref{ChernoffBound} we need to (1) show the time interval $[1,e(k-1)]$ is $(K,1/n)$-smoothing for some $K \le n\cdot \newL \le n\cdot \log^{7}(n)$, (2) show $\|a\|_1=1$ and (3) compute $\|a\|_2^2$.
Since the sequence of matchings $\left(\M^{(t)}\right)_{t=1}^{e(k-1)}$ satisfies the event $\Gamma_g^{(e(k-1))}$, it follows from \cref{obs:tauGtauCRelation} that this sequence is $(n\cdot \log^7 (n),1/n)$-smoothing.
Since the matrix $\M^{E(k)}:= \prod_{s=e(k-1)+1}^{e(k)} \M^{(s)}$ is doubly stochastic we get
\begin{align*}
\left\|a\right\|_1\!=\!\sum_{w\in V}a_w &= \sum_{w\in V}\sum_{v\in V} \M_{u,v}^{E(k)}\cdot \M_{w,v}^{E(k)} = \sum_{v\in V}\M_{u,v}^{E(k)}\cdot \sum_{w\in V}\M_{w,v}^{E(k)} = \sum_{v\in V}\M_{u,v}^{E(k)} = 1.
\end{align*}
Furthermore,
\begin{align*}
\left\|a \right\|_2^2 &= \sum_{w\in V} \left(\sum_{v\in V} \M_{u,v}^{E(k)}\cdot \M_{w,v}^{E(k)} \right)^2 \leftstackrel{(a)}{\le}
\left\|\M_{u,.}^{E(k)}\right\|_2^2 
\leftstackrel{(b)}{\leq} \frac{1}{\log^{10}(n)},
\end{align*}
where $(a)$ follows from the second statement of \cref{lem:theorem32_new}; $(b)$ follows from the definition of the event $\Gamma_{\ell}^{(e(k))}(u)$.
Since we have at most $(\newL -\epsilon) n$ many tokens the average load $\xbar$ satisfies $\xbar\le \newL -\epsilon$.
From \cref{ChernoffBound} with  $t:=e(k-1)$, $\kappa:=1/n$ 
, $\delta:=1/(2\log^{4}(n))$, and the bound on
$\|a\|_2^2$ from above, we have 
\begin{align}
\Pro{\sum_{w\in V} a_w\cdot X_w^{(e(k-1))} \geq  \newL -\epsilon +\frac{1}{2\log^4(n)}} \leq 2 \cdot \exp\left( -\frac{ \left(\frac{1}{2\log^4(n)} - \frac{1}{n}\right)^2}{ 4 \cdot \left( \frac{1}{\log^{10}(n)} \right) } \right) \le \frac{1}{n^5} \label{eq:lemma_e1-RM}.
\end{align}
Combining \cref{eq:NotLu:bound,eq:lemma_e1-RM} finishes the proof of the claim. 
\end{proof}

Next we define two random variables which will be used in the remainder of the proof.
Recall that $\mathcal{T}$ is the set of tokens, $Z_i^{(e(k))}$ counts the number of tokens colliding with token $i$ at the $k$-th milestone and for any $t \geq 0$, $Y^{(t)}$ is the number of tokens with height at least $\newL+1$ at round $t$. For any milestone $k\in \N$ we define, 
\begin{align}
\tilde{Y}^{(e(k))} := Y^{(e(k))} \cdot \1_{\cap_{i=0}^{k} \mathcal{E}^{(e(i))}}. \label{eq:ytilde}
\end{align}
Later we will show how to use $Z_i^{(e(k))}$ to bound $\tilde{Y}^{(e(k))}$, and then eventually $Y^{(e(k))}$.
In the next lemma we bound the expected value of $\tilde{Y}^{(e(k))}$.
We will denote by $\left(\mathfrak{F}^{(t)}\right)_{t\ge 0}$ the filtration of the random process; note that in particular, $\mathfrak{F}^{(t)}$ determines not only the current load vector and previous load vectors $X^{(t)},X^{(t-1)},\ldots, X^{(1)}$, but also all location vectors $W^{(t)}, W^{(t-1)},\ldots, W^{(1)}$. In case of randomly generated matchings $\left(\M^{(s)}\right)_{s=1}^{t}$ is determined by $\left(\mathfrak{F}^{(t)}\right)_{t\ge 0}$, too. In the following, to keep the notation brief, we use the convention that for any round $t \geq 0$, any random variable $X$ and event $\mathcal{E}$,
\[
 \Exf{t}{ X } := \Ex{ X ~\Big|~ \mathfrak{F}^{(t)}} \quad \mbox{ and } \quad   \Prof{t}{ \mathcal{E} } := \Pro{ \mathcal{E} ~\Big|~ \mathfrak{F}^{(t)}}.
\]

\begin{lemma}[Key Lemma -- Expected Drop in one Epoch]
%[restate=restateLemGeneralLemmaOne,name={Key Lemma -- Expected Drop in one Epoch}]
\label{lem:generallemmaone:RM}
Assume that the initial load vector $x^{(0)}$ has at most $(\newL-\epsilon) n$ tokens for $1 \leq \newL \leq \log^7 n$ being an integer and $0<\epsilon < 1$. 
Then for any milestone $k\ge 1$,
\[
\Exf{e(k-1)}{ \tilde{Y}^{(e(k))}} \le
\left( 1 - \frac{\epsilon}{\newL } + \frac{2}{\newL \cdot \log^4 n} \right)
\cdot \tilde{Y}^{(e(k-1))}.
\]
\end{lemma}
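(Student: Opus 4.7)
The plan is to bound the expected drop of $\tilde Y$ token-by-token using Markov's inequality applied to the collision count $Z_i^{(e(k))}$ from \cref{lem:collision}. First, observe that $\mathbf{1}_{\cap_{j=0}^{k-1}\cE^{(e(j))}}$ is $\mathfrak{F}^{(e(k-1))}$-measurable; if it vanishes, then $\tilde Y^{(e(k-1))} = 0$ and also $\tilde Y^{(e(k))} = 0$ (since $\cap_{j=0}^{k}\cE^{(e(j))} \subseteq \cap_{j=0}^{k-1}\cE^{(e(j))}$), so the inequality is trivial. Otherwise $\tilde Y^{(e(k-1))} = Y^{(e(k-1))}$ and $\tilde Y^{(e(k))} \leq Y^{(e(k))}\cdot \mathbf{1}_{\cE^{(e(k))}}$. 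I expand $Y^{(e(k))} = \sum_{i \in \cT} \mathbf{1}_{H_i^{(e(k))}\geq L+1}$ and use the height-monotonicity from \cref{lem:OneTokenRW} to restrict the sum to the $Y^{(e(k-1))}$ tokens $i$ which already satisfy $H_i^{(e(k-1))}\geq L+1$.

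Fix such a token $i$ and set $u_i := W_i^{(e(k-1))}$, which is $\mathfrak{F}^{(e(k-1))}$-measurable. Since $\cE^{(e(k))}\subseteq \cE^{(e(k))}(u_i) \subseteq \overline{\Gamma_\ell^{(e(k))}(u_i)}\cup \cL^{(e(k))}(u_i)$, I split
\[
\Exf{e(k-1)}{\mathbf{1}_{H_i^{(e(k))}\geq L+1}\cdot \mathbf{1}_{\cE^{(e(k))}}} \leq \Prof{e(k-1)}{\overline{\Gamma_\ell^{(e(k))}(u_i)}} + \Exf{e(k-1)}{\mathbf{1}_{H_i^{(e(k))}\geq L+1}\cdot \mathbf{1}_{\cL^{(e(k))}(u_i)}}.
\]
The first term is at most $1/\log^{11}(n)$ by \cref{lem:Gamma:GL}. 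For the second, the crucial observation is that height monotonicity forces $\{H_i^{(e(k))}\geq L+1\}\subseteq\{Z_i^{(e(k))}\geq L\}$: token $i$ can only end up at height at least $L+1$ on its random end-location if at least $L$ other tokens share that node at time $e(k)$. On any realisation of matchings in $E(k)$ that makes $\cL^{(e(k))}(u_i)$ true, the defining inequality of $\cL^{(e(k))}(u_i)$ gives $\Ex{Z_i^{(e(k))}\mid \mathfrak{F}^{(e(k-1))}, \text{matchings in } E(k)} \leq L - \epsilon + 1/\log^4(n)$, so Markov's inequality bounds the second term by $(L-\epsilon+1/\log^4(n))/L = 1-\epsilon/L+1/(L\log^4(n))$.

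Combining the two contributions yields a per-token bound of $1-\epsilon/L+1/(L\log^4(n))+1/\log^{11}(n)$. The hypothesis $L\leq \log^7(n)$ is tailored precisely so that $1/\log^{11}(n) \leq 1/(L\log^4(n))$, whence each surviving token contributes at most $1-\epsilon/L+2/(L\log^4(n))$ in expectation; summing over the $\tilde Y^{(e(k-1))} = Y^{(e(k-1))}$ relevant tokens completes the argument.

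The main obstacle I anticipate is the careful handling of the two layers of conditioning in the second term. The event $\cL^{(e(k))}(u)$ is specified via a conditional expectation given $W^{(e(k-1))}$ and the \emph{event} $\Gamma_\ell^{(e(k))}(u)$, whereas for Markov I need the pointwise statement that, for almost every realisation of matchings in $E(k)$ realising $\cL^{(e(k))}(u_i)$, the expectation of $Z_i^{(e(k))}$ over the remaining shuffling randomness in $E(k)$ (with $W^{(e(k-1))}$ fixed by the filtration) stays below $L-\epsilon+1/\log^4(n)$. Establishing this equivalence rigorously — essentially by a tower-property argument showing that $\cL^{(e(k))}(u_i)$ is measurable with respect to the join of $\mathfrak{F}^{(e(k-1))}$ and the matchings in $E(k)$ — is the most delicate step.
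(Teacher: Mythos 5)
Your proposal is correct and follows essentially the same route as the paper's proof: a token-by-token decomposition using height monotonicity to reduce to the collision event $\{Z_i^{(e(k))}\geq L\}$, a split driven by the inclusion $\cE^{(e(k))}\subseteq\overline{\Gamma_\ell^{(e(k))}(u_i)}\cup\cL^{(e(k))}(u_i)$ with the bad term bounded by $1/\log^{11}(n)$, and a (conditional) Markov argument on $Z_i^{(e(k))}$ via the defining inequality of $\cL^{(e(k))}(u_i)$, combined using $L\leq\log^7(n)$. The measurability subtlety you flag about $\cL^{(e(k))}(u)$ is genuine but is exactly the point the paper itself negotiates (somewhat informally) in its Claims 5.9 and 5.10, so your more direct bookkeeping, which avoids the explicit conditioning on $\Gamma_g^{(e(k-1))}$, is a legitimate streamlining rather than a gap.
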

We emphasize that in the above lemma, it is possible to have an $\epsilon$ that depends on $n$.

\begin{proof}
    Using the definition of $\tilde{Y}^{(e(k))}$ we get
\begin{align}
\lefteqn{ \Exf{e(k-1)}{\tilde{Y}^{(e(k))} } } \notag \\ &=
\Exf{e(k-1)}{ Y^{(e(k))} \cdot \1_{\cap_{i=0}^{k} \mathcal{E}^{(e(i))}} }\notag \\
&\stackrel{(a)}{=} \1_{\cap_{i=0}^{k-1} \mathcal{E}^{(e(i))}} \cdot \Exf{e(k-1)}{ Y^{(e(k))} \cdot
\1_{ \mathcal{E}^{(e(k))}} } \notag \\
&\stackrel{(b)}{=} \1_{\cap_{i=0}^{k-1} \mathcal{E}^{(e(i))}} \cdot \biggl( \Prof{e(k-1)}{ \Gamma_g^{(e(k-1))}} \cdot
\Exf{e(k-1)}{ Y^{(e(k))} \cdot \1_{ \mathcal{E}^{(e(k))}} ~\Big|~ \Gamma_g^{(e(k-1))} } \notag \\ &\quad \qquad \quad \qquad \mbox{} \quad +
\Prof{e(k-1)}{ \overline{\Gamma_g^{(e(k-1))}}} \cdot
0
\biggr) \notag\\
&\leq \1_{\cap_{i=0}^{k-1} \mathcal{E}^{(e(i))}} \cdot \Exf{e(k-1)}{ Y^{(e(k))} \cdot
\1_{ \mathcal{E}^{(e(k))} }~\Big|~ \Gamma_g^{(e(k-1))} }, \label{eq:last_with_y}
\end{align}
where $(a)$ is an application of the ``take-out-what-is-known'' rule in conditional expectations, since
$\mathfrak{F}^{(e(k-1))} $ determines the random variable $\1_{\cap_{i=0}^{k-1} \mathcal{E}^{(e(i))}}$ (\cref{lem:take-out-what-is-known}), and $(b)$ used that $\overline{ \Gamma_g^{(e(k-1))}}$ implies $\1_{\mathcal{E}^{(e(k))}}=0$.
Further, we have
\begin{align}
Y^{(e(k))} = \sum_{j\in \mathcal{T}} \1_{H_j^{(e(k))} \ge \newL+1} \leq
\sum_{j \in \mathcal{T}} \1_{H_j^{(e(k-1))} \geq \newL+1} \cdot \1_{Z_j^{(e(k))} \geq \newL },
\label{eq:y_z_relation-RM}
\end{align}
where the equality follows from the definition of $Y^{(e(k)}$.
To see the inequality, observe that by properties of the height-sensitive process the height of a token never increases and therefore, in order for a token $j$ to be at height at least $L+1$ in round $e(k)$, it must have had height at least $L+1$ in round $e(k-1)$ and there must be at least $L$ other tokens at its location in round $e(k)$.

Applying \cref{eq:y_z_relation-RM} to \cref{eq:last_with_y},
\begin{align}
\MoveEqLeft \Exf{e(k-1)}{ \tilde{Y}^{(e(k))} } \notag \\
&\leq \1_{\cap_{i=0}^{k-1} \mathcal{E}^{(e(i))}} \cdot \Exf{e(k-1)}{ \sum_{j \in \mathcal{T}} \1_{H_j^{(e(k-1))} \geq \newL+1} \cdot \1_{Z_j^{(e(k))}\ge \newL} \cdot
\1_{ \mathcal{E}^{(e(k))} }~\Big|~   \Gamma_g^{(e(k-1))} } \notag \\
&\stackrel{(a)}{=}
\1_{\cap_{i=0}^{k-1} \mathcal{E}^{(e(i))}} \cdot \sum_{j \in \mathcal{T}} \Exf{e(k-1)}{ \1_{H_j^{(e(k-1))} \geq \newL+1} \cdot \1_{Z_j^{(e(k))}\ge \newL} \cdot
\1_{ \mathcal{E}^{(e(k))} }~\Big|~   \Gamma_g^{(e(k-1))} } \notag
\\
&\stackrel{(b)}{=} \1_{\cap_{i=0}^{k-1} \mathcal{E}^{(e(i))}} \cdot \sum_{j \in \mathcal{T}} \1_{H_j^{(e(k-1))} \geq \newL+1} \cdot \Exf{e(k-1)}{ \1_{Z_j^{(e(k))}\ge \newL} \cdot
\1_{ \mathcal{E}^{(e(k))} }~\Big|~   \Gamma_g^{(e(k-1))} }, \label{eq:almost_finished}
\end{align}
where $(a)$ holds by linearity of conditional expectation and $(b)$ holds since $\mathfrak{F}^{(e(k-1))}$ reveals $\1_{H_j^{(e(k-1))}\ge \newL+1}$ (``take-out-what-is-known'', \cref{lem:take-out-what-is-known}).
To simplify the notation we will write $u$ instead of $w_j^{(e(k-1))}$ in what follows.
Conditioning on whether $\Gamma_{\ell}^{(e(k))}(u)$ holds, we can bound the expectation from \cref{eq:almost_finished} by
\begin{align*}
\MoveEqLeft \Exf{e(k-1)}{ \1_{Z_{j}^{(e(k))}\ge \newL} \cdot
\1_{\mathcal{E}^{(e(k))}} \cond  \Gamma_g^{(e(k-1))} } \\
&= \Prof{e(k-1)}{ \overline{\Gamma_{\ell}^{(e(k))}(u)} ~\Big|~ \Gamma_g^{(e(k-1))}} \cdot \Exf{e(k-1)}{ \1_{Z_{j}^{(e(k))}\ge \newL} \cdot
\1_{\mathcal{E}^{(e(k))}} ~\Big|~  \Gamma_g^{(e(k-1))}, \overline{ \Gamma_{\ell}^{(e(k))}(u)} } \\
&\quad + \Prof{e(k-1)}{ \Gamma_{\ell}^{(e(k))}(u) ~\Big|~ \Gamma_g^{(e(k-1))}} \cdot \Exf{e(k-1)}{ \1_{Z_{j}^{(e(k))}\ge \newL} \cdot
\1_{\mathcal{E}^{(e(k))}} ~\Big|~  \Gamma_g^{(e(k-1))}, \Gamma_{\ell}^{(e(k))}(u) } \\
&\leq \Prof{e(k-1)}{ \overline{\Gamma_{\ell}^{(e(k))}(u)} ~\Big|~ \Gamma_g^{(e(k-1))}} \cdot 1 \!+\! 
1 \cdot \Exf{e(k-1)}{ \1_{Z_{j}^{(e(k))}\ge \newL} \cdot
\1_{\mathcal{E}^{(e(k))}} ~\Big|~  \Gamma_g^{(e(k-1))}, \Gamma_{\ell}^{(e(k))}(u) }  .
\end{align*}
We upper bound the two remaining terms in the last line separately using the following two claims, whose proofs are given after the conclusion of the proof of this lemma.
\begin{claim}\label{clm:abcd_one}
$\displaystyle\Prof{e(k-1)}{ \overline{\Gamma_{\ell}^{(e(k))}(u)} ~\Big|~ \Gamma_g^{(e(k-1))}}
\leq
\frac{1}{\log^{11} (n) }
$.
\end{claim}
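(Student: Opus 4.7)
The plan is to exploit that $\Gamma_\ell^{(e(k))}(u)$ depends on the matchings in the epoch $E(k) = [e(k-1)+1, e(k)]$, which lies strictly after round $e(k-1)$, whereas the conditioning event $\Gamma_g^{(e(k-1))}$ and the filtration $\mathfrak{F}^{(e(k-1))}$ depend only on matchings and shuffling decisions up to and including round $e(k-1)$. Once this time-separation is observed, the conditional probability collapses to an unconditional one, which is then bounded by the second half of the $(\tauglobal,\taulocal)$-good definition.

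First I would unfold the definitions. By \cref{taus},
\[
\Gamma_\ell^{(e(k))}(u) = \left\{\left\|\M_{u,\cdot}^{[e(k)-\taulocal+1,\,e(k)]}\right\|_2^2 \leq \tfrac{1}{\log^{10}(n)}\right\},
\]
and since $e(k) = e(k-1) + \taulocal$, the underlying time interval is exactly $E(k)$. In contrast, $\Gamma_g^{(e(k-1))}$ is measurable with respect to $\M^{(1)},\dots,\M^{(e(k-1))}$, and hence with respect to $\mathfrak{F}^{(e(k-1))}$. Therefore adding $\Gamma_g^{(e(k-1))}$ to the conditioning contributes no information beyond $\mathfrak{F}^{(e(k-1))}$, so
\[
\Prof{e(k-1)}{\overline{\Gamma_\ell^{(e(k))}(u)} \,\Big|\, \Gamma_g^{(e(k-1))}} = \Prof{e(k-1)}{\overline{\Gamma_\ell^{(e(k))}(u)}}.
\]

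To finish, I will observe that in each of the three models considered---balancing circuit (matchings are deterministic), random matching, and asynchronous (matchings are generated independently across rounds)---the matchings $\M^{(e(k-1)+1)},\dots,\M^{(e(k))}$ are independent of $\mathfrak{F}^{(e(k-1))}$. Hence the above conditional probability coincides with the unconditional probability $\Pro{\overline{\Gamma_\ell^{(e(k))}(u)}}$, and an application of the second part of \cref{lem:Gamma:GL}---which is itself the content of \eqref{eq:taus:eq-2} of the $(\tauglobal,\taulocal)$-good property instantiated at $t := e(k-1)$---bounds this by $1/\log^{11}(n)$, as required. I do not foresee any genuine obstacle: the argument is a bookkeeping check of time intervals followed by one direct appeal to the good-matching assumption.
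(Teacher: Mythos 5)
Your proposal is correct and follows essentially the same route as the paper's proof: both observe that $\Gamma_{\ell}^{(e(k))}(u)$ is determined by the matchings in the epoch $E(k)$, which is disjoint from the time interval governing $\Gamma_g^{(e(k-1))}$ and the filtration $\mathfrak{F}^{(e(k-1))}$, so the conditional probability reduces to the unconditional one, which is then bounded via the second part of \cref{lem:Gamma:GL} (i.e., \cref{eq:taus:eq-2} of the goodness definition). Your write-up is in fact slightly more explicit than the paper's about why the conditioning collapses (measurability of $\Gamma_g^{(e(k-1))}$ with respect to the filtration, plus independence of the epoch's matchings from the past), but there is no substantive difference.
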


\begin{claim}
%[restate=restateClmABCDTwo]
\label{clm:abcdtwo}
$\displaystyle\Exf{e(k-1)}{ \1_{Z_{j}^{(e(k))}\ge \newL} \cdot
\1_{\mathcal{E}^{(e(k))}} ~\Big|~  \Gamma_g^{(e(k-1))}, \Gamma_{\ell}^{(e(k))}(u) }
\leq
1 - \frac{\epsilon}{\newL} + \frac{1}{\newL \cdot \log^4 (n)}
$.
\end{claim}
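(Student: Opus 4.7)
The plan is to use the conditioning on $\Gamma_g^{(e(k-1))}$ and $\Gamma_\ell^{(e(k))}(u)$ to replace $\1_{\mathcal{E}^{(e(k))}}$ by $\1_{\mathcal{L}^{(e(k))}(u)}$, extract the latter from the conditional expectation (since it will turn out to be $\mathfrak{F}^{(e(k-1))}$-measurable), and then apply Markov's inequality to the resulting conditional expectation of $Z_j^{(e(k))}$, invoking the defining bound from $\mathcal{L}^{(e(k))}(u)$.

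First I would observe that $\mathcal{E}^{(e(k))} = \bigcap_{v \in V} \mathcal{E}^{(e(k))}(v) \subseteq \mathcal{E}^{(e(k))}(u)$ and that $\mathcal{E}^{(e(k))}(u) = \Gamma_g^{(e(k-1))} \cap \bigl(\overline{\Gamma_\ell^{(e(k))}(u)} \cup \mathcal{L}^{(e(k))}(u)\bigr)$. Once we condition on both $\Gamma_g^{(e(k-1))}$ and $\Gamma_\ell^{(e(k))}(u)$, the first clause is automatically satisfied and the complement $\overline{\Gamma_\ell^{(e(k))}(u)}$ is excluded, so pointwise under this conditioning $\1_{\mathcal{E}^{(e(k))}} \le \1_{\mathcal{L}^{(e(k))}(u)}$. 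Next, I would argue that the conditional expectation appearing in the definition of $\mathcal{L}^{(e(k))}(u)$ is a deterministic function of $W^{(e(k-1))}$ (the matchings in $E(k)$ being independent of $\mathfrak{F}^{(e(k-1))}$ in the random matching model, and fixed otherwise), so $\mathcal{L}^{(e(k))}(u) \in \mathfrak{F}^{(e(k-1))}$. By take-out-what-is-known (\cref{lem:take-out-what-is-known}),
\[
\Exf{e(k-1)}{\1_{Z_j^{(e(k))} \ge L} \cdot \1_{\mathcal{L}^{(e(k))}(u)} \cond \Gamma_g^{(e(k-1))}, \Gamma_\ell^{(e(k))}(u)}
= \1_{\mathcal{L}^{(e(k))}(u)} \cdot \Exf{e(k-1)}{\1_{Z_j^{(e(k))} \ge L} \cond \Gamma_\ell^{(e(k))}(u)},
\]
the conditioning on $\Gamma_g^{(e(k-1))}$ being redundant once $\mathfrak{F}^{(e(k-1))}$ is given.

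Finally I would apply Markov's inequality to the non-negative integer random variable $Z_j^{(e(k))}$ and use that, conditional on $\mathfrak{F}^{(e(k-1))}$ and $\Gamma_\ell^{(e(k))}(u)$, its distribution depends on $\mathfrak{F}^{(e(k-1))}$ only through $W^{(e(k-1))}$, to obtain
\[
\Exf{e(k-1)}{\1_{Z_j^{(e(k))} \ge L} \cond \Gamma_\ell^{(e(k))}(u)} \le \frac{1}{L}\cdot \Ex{Z_j^{(e(k))} \cond W^{(e(k-1))}, \Gamma_\ell^{(e(k))}(u)}.
\]
On the event $\mathcal{L}^{(e(k))}(u)$ the right-hand side is at most $(L - \epsilon + 1/\log^4(n))/L = 1 - \epsilon/L + 1/(L \log^4(n))$ by the very definition of $\mathcal{L}^{(e(k))}(u)$, and multiplying by $\1_{\mathcal{L}^{(e(k))}(u)} \le 1$ (and noting the bound is non-negative off the event) yields the claim. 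The only real subtlety---and the point where care is needed---is parsing the definition of $\mathcal{L}^{(e(k))}(u)$: although it is phrased in terms of a conditional expectation that involves the future matchings in $E(k)$, the resulting indicator is a function of $W^{(e(k-1))}$ only and hence $\mathfrak{F}^{(e(k-1))}$-measurable. Once that is clarified, the rest is essentially a single application of Markov's inequality.
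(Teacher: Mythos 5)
Your proposal is correct and follows essentially the same route as the paper: the core steps in both are that, on the conditioning event $\Gamma_g^{(e(k-1))}\cap\Gamma_{\ell}^{(e(k))}(u)$, the event $\mathcal{E}^{(e(k))}$ forces $\mathcal{L}^{(e(k))}(u)$ to hold, followed by a conditional Markov inequality and the defining bound of $\mathcal{L}^{(e(k))}(u)$. The only difference is bookkeeping --- you pull out $\1_{\mathcal{L}^{(e(k))}(u)}$ via its $\mathfrak{F}^{(e(k-1))}$-measurability, whereas the paper conditions on $\mathcal{E}^{(e(k))}$ and takes a supremum over outcomes --- and your observation that the conditional expectation defining $\mathcal{L}^{(e(k))}(u)$ is a function of $W^{(e(k-1))}$ alone is the right justification for that step.
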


Together \cref{clm:abcd_one} and \cref{clm:abcdtwo} yield, using the assumption that $L \leq \log^7(n)$,
\[
\Exf{e(k-1)}{ \1_{Z_{j}^{(e(k))}\ge \newL} \cdot
\1_{\mathcal{E}^{(e(k))}} ~\Big|~  \Gamma_g^{(e(k-1))} }
\leq
1 - \frac{\epsilon}{\newL} + \frac{1}{\newL \cdot \log^4 (n)} + \frac{1}{\log^{11} (n)}
\leq 1 - \frac{\epsilon}{\newL} + \frac{2}{\newL \cdot \log^4 (n)}.
\]
Applying this to~\cref{eq:almost_finished} gives us
\begin{align*}
\Exf{e(k-1)}{ \tilde{Y}^{(e(k))}   } &\leq \1_{\cap_{i=1}^{k-1} \mathcal{E}^{(e(i))}} \cdot \sum_{j\in \mathcal{T}} \1_{H_j^{(e(k-1))} \geq \newL+1} \cdot \left( 1 - \frac{\epsilon}{\newL} + \frac{2}{\newL \cdot \log^4 (n)} \right) \notag\\
&= \1_{\cap_{i=1}^{k-1} \mathcal{E}^{(e(i))}} \cdot Y^{(e(k-1))} \cdot \left( 1 - \frac{\epsilon}{\newL} + \frac{2}{\newL \cdot \log^4 (n)} \right)
\\& = \tilde{Y}^{(e(k-1))}\cdot \left( 1 - \frac{\epsilon}{\newL} + \frac{2}{\newL \cdot \log^4 (n)} \right). \qed
\end{align*}
\end{proof}

\begin{proof}[Proof of \cref{clm:abcd_one}]
As before we define $E(k)=[e(k-1)+1,\ldots,e(k)]$ for the interval of rounds constituting the $k$-th epoch.
Note that
$\Gamma_{\ell}^{(e(k))}(u)$ depends only on the matchings in the time interval $E(k)$. Moreover, as $\Gamma_g^{(e(k-1))}$ and $\Gamma_{\ell}^{(e(k))}(u)$ refer to disjoint time-intervals, we have 
\[
\Prof{e(k-1)}{\overline{\Gamma_{\ell}^{(e(k))}(u)} ~\Big|~ \Gamma_g^{(e(k-1))} } = \Prof{e(k-1)}{\overline{\Gamma_{\ell}^{(e(k))}(u)}} \stackrel{(a)}{\leq} \frac{1}{\log^{11}(n)},
\]
where $(a)$ follows \cref{lem:Gamma:GL}
(recall that $e(k)-e(k-1)=\taulocal$).
\end{proof}

\begin{proof}[Proof of \cref{clm:abcdtwo}]
Recall that $w_j^{(e(k-1))}=u$. 
Conditioning on whether $\mathcal{E}^{(e(k))}$ happens or not gives us

\begin{align*}
\MoveEqLeft \Exf{e(k-1)}{ \1_{Z_j^{(e(k))}\ge \newL} \cdot \1_{\mathcal{E}^{(e(k))}} ~\Big|~ \Gamma_{g}^{(e(k-1))}, \Gamma_{\ell}^{(e(k))}(u) } \notag
\\& = \Prof{e(k-1)}{ \mathcal{E}^{(e(k))}~\Big|~ \Gamma_{g}^{(e(k-1))}, \Gamma_{\ell}^{(e(k))}(u) } \\ &\qquad\cdot
\Exf{e(k-1)}{ \1_{Z_j^{(e(k))}\ge \newL} \cdot \1_{\cE^{(e(k))}} ~\Big|~  \Gamma_{g}^{(e(k-1))}, \Gamma_{\ell}^{(e(k))}(u),\mathcal{E}^{(e(k))} } \notag
\\&\quad + \Prof{e(k-1)}{ \overline{\mathcal{E}^{(e(k))}}~\Big|~ \Gamma_{g}^{(e(k-1))}, \Gamma_{\ell}^{(e(k))}(u) } \\
&\qquad \cdot
\Exf{e(k-1)}{ \1_{Z_j^{(e(k))}\ge \newL}\cdot \1_{\cE^{(e(k))}} ~\Big|~ \Gamma_{g}^{(e(k-1))}, \Gamma_{\ell}^{(e(k))}(u),\overline{\mathcal{E}^{(e(k))}} }, \notag
\intertext{and since $\overline{\cE^{(e(k))}}$ implies that $1_{\cE^{(e(k))}}=0$, the above is}
& =\Prof{e(k-1)}{ \mathcal{E}^{(e(k))}~\Big|~ \Gamma_{g}^{(e(k-1))}, \Gamma_{\ell}^{(e(k))}(u) } \\ &\qquad\cdot
\Exf{e(k-1)}{ \1_{Z_j^{(e(k))}\ge \newL}\cdot \1_{\cE^{(e(k))}} ~\Big|~  \Gamma_{g}^{(e(k-1))}, \Gamma_{\ell}^{(e(k))}(u),\mathcal{E}^{(e(k))} }
\notag
\\&
\leq 1 \cdot
\Exf{e(k-1)}{ \1_{Z_j^{(e(k))}\ge \newL} ~\Big|~  \Gamma_{g}^{(e(k-1))}, \Gamma_{\ell}^{(e(k))}(u),\mathcal{E}^{(e(k))} }
\notag
\\ &\stackrel{(a)}{=} \frac{\Exf{e(k-1)}{ \1_{Z_j^{(e(k))}\ge \newL}
\cdot \1_{\Gamma_{g}^{(e(k-1))}\cap \Gamma_{\ell}^{(e(k))}(u) \cap\mathcal{E}^{(e(k))}}
 }}{\Pro{  \Gamma_{g}^{(e(k-1))} \cap \Gamma_{\ell}^{(e(k))}(u) \cap \mathcal{E}^{(e(k))} }}  
 \\ &= \frac{\Exf{e(k-1)}{ \1_{Z_j^{(e(k))}\ge \newL \cap \Gamma_{g}^{(e(k-1))}\cap \Gamma_{\ell}^{(e(k))}(u) \cap\mathcal{E}^{(e(k))}}
 }}{\Pro{  \Gamma_{g}^{(e(k-1))} \cap \Gamma_{\ell}^{(e(k))}(u) \cap \mathcal{E}^{(e(k))} }} 
 \\&= \frac{ \Prof{e(k-1)}{ (Z_j^{(e(k))} \geq L) \cap \Gamma_{\ell}^{(e(k))}(u) \cap \mathcal{E}^{(e(k))}}}{\Pro{  \Gamma_{g}^{(e(k-1))} \cap \Gamma_{\ell}^{(e(k))}(u) \cap \mathcal{E}^{(e(k))} }} 
 \\&= \frac{ \Prof{e(k-1)}{ Z_j^{(e(k))} \cdot \1_{\Gamma_{g}^{(e(k-1))} \cap \Gamma_{\ell}^{(e(k))}(u) \cap \mathcal{E}^{(e(k))}} \geq \newL }}{\Pro{  \Gamma_{g}^{(e(k-1))} \cap \Gamma_{\ell}^{(e(k))}(u) \cap \mathcal{E}^{(e(k))} }} 
\\
&  \stackrel{(b)}{\leq} \frac{\frac{1}{\newL} \cdot \Exf{e(k-1)}{Z_{j}^{(e(k))} \cdot \1_{\Gamma_{g}^{(e(k-1))} \cap \Gamma_{\ell}^{(e(k))}(u) \cap \mathcal{E}^{(e(k))}} }} {\Pro{  \Gamma_{g}^{(e(k-1))} \cap \Gamma_{\ell}^{(e(k))}(u) \cap \mathcal{E}^{(e(k))} }}
\\
&\stackrel{(c)}{=}  \frac{1}{\newL} \cdot 
\Ex{Z_{j}^{(e(k))}  ~\Big|~  \Gamma_{g}^{(e(k-1))}, \Gamma_{\ell}^{(e(k))}(u),\cE^{(e(k))} },
\end{align*}
where $(a)$ and $(c)$
uses the definition of  expectations conditional on events $\mathcal{E}$, $\Ex{Z ~\Big|~ \mathcal{E}}=\frac{ \Ex{Z \cdot \1_{\mathcal{E}}}}{\Pro{ \mathcal{E}}}$; while 
$(b)$ follows from applying the so-called ``conditional Markov's inequality'' 
(Exercise 8.2.5 in \cite{DBLP:books/daglib/0019830}). 
Recall that $\Exf{e(k-1)}{Z_{j}^{(e(k))}  ~\Big|~  \Gamma_{g}^{(e(k-1))}, \Gamma_{\ell}^{(e(k))}(u),\cE^{(e(k))} }$ is not a number but a random function over $\mathfrak{F}^{(e(k-1))}$-measurable events. In the following, for any such random function $f$, let us write $\sup_{\omega} f$ for the largest value $f$ could attain over its arguments. 
With this notation, we obtain the bound
\begin{align*}
\lefteqn{ \Exf{e(k-1)}{ \1_{Z_j^{(e(k))}\ge \newL} \cdot \1_{\mathcal{E}^{(e(k))}} ~\Big|~  \Gamma_{g}^{(e(k-1))}, \Gamma_{\ell}^{(e(k))}(u) }} \notag\\
&\leq  \frac{1}{\newL} \cdot \sup_{\omega} \Ex{Z_{j}^{(e(k))}  ~\Big|~  \Gamma_{g}^{(e(k-1))}, \Gamma_{\ell}^{(e(k))}(u),\cE^{(e(k))} } \\
&\leftstackrel{(a)}{=} \frac{1}{\newL} \cdot \sup_{\omega}
 \Ex{Z_{j}^{(e(k))}  ~\Big|~  \Gamma_{g}^{(e(k-1))}, \Gamma_{\ell}^{(e(k))}(u),\cE^{(e(k))},\cL^{(e(k))}(u) } \\
&\leq \frac{1}{\newL} 
\cdot \sup_{\omega}
 \Ex{Z_{j}^{(e(k))}  ~\Big|~ \Gamma_{\ell}^{(e(k))}(u),\cL^{(e(k))}(u) } \\
&\leftstackrel{\text{Def}~(\ref{eq:LEvent})}{\leq} 
\frac{1}{L} \cdot 
\left( L - \epsilon + \frac{1}{\log^4(n)} \right)
=1 - \frac{\epsilon}{\newL} + \frac{1}{\newL \cdot \log^4(n)} \notag,
\end{align*}
where $(a)$ holds, since by definition, when the events $\Gamma_{g}^{(e(k-1))}$, $\Gamma_{\ell}^{(e(k))}(u)$, $\mathcal{E}^{(e(k))}$ hold, the event $\cL^{(e(k))}(u)$ must also hold.
This completes the proof of the second statement and the proof of the claim.
\end{proof}

By repeatedly applying \cref{{lem:generallemmaone:RM}} over subsequent epochs, we can complete the analysis of Phase 1 and Phase 2.
\begin{lemma}[Phase 1]
%[restate=restateRmLemmaPhaseOne,name={Phase 1}]
\label{lem:rmlemmaphaseone}
      Let $ \epsilon\ge 4/\log^4 (n)$ and $1 \leq \newL \leq \log^7(n)$ be an integer. We assume that $x^{(0)}$ has at most $(L-\epsilon)n$ tokens and choose $t_1:=\tauglobal+\frac{
      \log(n)}{\log \log (n)} \cdot \taulocal$. Then,
\[
\Pro{ Y^{(t_1)} \le \frac{n}{\log(n)}} \ge 1-\exp\left(-\frac{\epsilon}{2\newL} \cdot \frac{\log (n)}{\log \log (n)} 
+ 8\cdot \log \log (n)\right)- 2 n^{-2}.
\]
\end{lemma}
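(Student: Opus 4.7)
The plan is to iterate the Key Lemma (\cref{lem:generallemmaone:RM}) over the $K := \log(n)/\log\log(n)$ epochs that compose the interval $[\tauglobal+1, t_1]$, and then pass from the auxiliary random variable $\tilde{Y}^{(e(k))}$ back to $Y^{(t_1)}$ via a union bound over the events $\cE^{(e(k))}$.

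First, I would apply the tower property of conditional expectation to the Key Lemma and iterate over the $K$ epochs to get
\[
\Ex{\tilde{Y}^{(e(K))}} \leq \left(1 - \frac{\epsilon}{L} + \frac{2}{L\log^4(n)}\right)^{K} \cdot \Ex{\tilde{Y}^{(e(0))}}.
\]
Since $\tilde{Y}^{(e(0))} \le Y^{(\tauglobal)}$ is deterministically bounded by the total number of tokens $(L-\epsilon)n \le Ln$, and since the assumption $\epsilon \ge 4/\log^4(n)$ yields $2/(L\log^4(n)) \le \epsilon/(2L)$, the standard inequality $1 - x \le e^{-x}$ produces
\[
\Ex{\tilde{Y}^{(t_1)}} \le Ln \cdot \exp\left(-\frac{K\epsilon}{2L}\right).
\]

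Next, I would apply Markov's inequality to conclude $\Pro{\tilde{Y}^{(t_1)} \ge n/\log(n)} \le L\log(n) \cdot \exp(-K\epsilon/(2L))$, and use the assumption $L \le \log^7(n)$ to bound the prefactor $L\log(n) \le \log^8(n) = \exp(8\log\log(n))$; substituting $K = \log(n)/\log\log(n)$ then gives exactly the first term of the target bound. To bridge the gap between $\tilde{Y}^{(t_1)}$ and $Y^{(t_1)}$, I would appeal to \cref{lem:crucialfirststatement}: a union bound over the $K+1$ milestones $k=0,1,\ldots,K$ shows that $\bigcap_{k=0}^{K} \cE^{(e(k))}$ holds with probability at least $1 - 2(K+1)/n^3 \ge 1 - 2n^{-2}$, and on this event $\tilde{Y}^{(t_1)} = Y^{(t_1)}$.

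I do not anticipate a serious obstacle here: the Key Lemma does all the substantial work, and what remains is routine bookkeeping. The only subtlety is that the iteration of conditional expectations telescopes cleanly only because $\tilde{Y}^{(e(k))}$ already incorporates the indicator $\1_{\bigcap_{i=0}^k \cE^{(e(i))}}$, which allows us to invoke the Key Lemma in each epoch without having to re-condition on the good events from previous epochs; this is precisely the purpose of the truncation introduced in~\cref{eq:ytilde}.
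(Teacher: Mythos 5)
Your proposal is correct and follows essentially the same route as the paper: iterate the Key Lemma (\cref{lem:generallemmaone:RM}) over the $\log(n)/\log\log(n)$ epochs using the truncated variable $\tilde{Y}$, absorb the error term via $\epsilon \ge 4/\log^4(n)$, apply Markov's inequality at threshold $n/\log(n)$, and union bound over the events $\cE^{(e(k))}$ with \cref{lem:crucialfirststatement}. The only cosmetic difference is that you start from the tighter bound $\tilde{Y}^{(e(0))} \le Ln$ and fold the prefactor $L\log(n) \le \log^8(n)$ into the exponent after Markov, whereas the paper starts from $n\log^7(n)$ and introduces the shift $\beta$ explicitly; the resulting bound is identical.
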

\begin{proof}
Recall that $e(0)=\tauglobal$. Hence $t_1=e(\log(n)/\log\log(n))$.
Note that $\widetilde{Y}^{(e(0))}\le Y^{(e(0))} \leq n\cdot \log^7(n) $.
Applying \cref{lem:generallemmaone:RM} for $\ell:=\frac{\log (n)}{\log \log(n)}$ epochs gives us
\begin{align*}
\Ex{ \tilde{Y}^{(e(\ell))}} &= \Ex{\Ex{ \tilde{Y}^{(e(\ell))} ~\Big|~ \mathfrak{F}^{(e(\ell-1))}} }
\leq \left(1 - \frac{\epsilon}{\newL} + \frac{2}{\newL \cdot \log^4(n)} \right) \cdot \Ex{\tilde{Y}^{(e(\ell-1))}}\nonumber \\& \stackrel{(a)}{\le} \left(
1-\frac{\epsilon}{2\newL}\right) \cdot \Ex{\tilde{Y}^{(e(\ell-1))}} \leq e^{-\epsilon/(2\newL)}\cdot \Ex{\tilde{Y}^{(e(\ell-1))}},
\end{align*}
where $(a)$ used that $\epsilon \geq 4/\log^4(n)$. By iterating this, it follows 
$$\Ex{ \tilde{Y}^{(e(\ell))}} \leq e^{-\epsilon \cdot \ell/(2\newL) } \cdot \widetilde{Y}^{(e(0))} \leq e^{-\epsilon \cdot \ell/(2 \newL)}\cdot n \cdot  \log^7 (n).$$
Let $\beta:=\epsilon \cdot \ell/(2\newL) - 8\cdot \log\log (n)$. By Markov's inequality,
\begin{align}
\Pro{ \tilde{Y}^{(e(\ell))} \geq \exp\left(-\frac{\epsilon \cdot \ell}{2L}   + \beta\right)\cdot  n \cdot \log^7(n)} \leq e^{-\beta}. \label{eq:tilteY:first}
\end{align}
 Note that
\begin{align}
\exp\left(-\frac{\epsilon \cdot \ell}{2\newL}  +\beta\right)\cdot n\cdot \log^7 (n) &= \exp\left(-\frac{\epsilon\cdot \ell}{2 \newL}+\frac{\epsilon\cdot \ell}{2 \newL} -  8\cdot \log\log(n)\right)\cdot n\cdot \log^7 (n)\notag
\\& = \exp\left(- 8\cdot \log \log(n)\right)\cdot n\cdot \log^7 (n) \notag
=  \frac{n}{\log (n)} \label{eq:nlognApprox}\notag.
\end{align}
By the definition of $\tilde{Y}^{(e(\ell))}$ and the union bound we get 
\begin{align}
\Pro{ Y^{(e(\ell))} \ge \frac{n}{\log (n)}}& \leq 
\Pro{ Y^{(e(\ell))} \ge \frac{n}{\log (n)} ~\Big|~ \bigcap_{s=0}^{\ell} \cE^{(e(s))} } + \Pro{\bigcup_{s=0}^{\ell} \overline{\mathcal{E}^{(e(s))}}}\notag
\\&\le \Pro{ \tilde{Y}^{(e(\ell))} \ge \frac{n}{\log (n)} } + \Pro{ \bigcup_{s=0}^{\ell} \overline{\mathcal{E}^{(e(s))}}}\notag \\
& \leftstackrel{\text{Eq.~(\ref{eq:tilteY:first})}}{\leq} e^{-\beta} + \sum_{s=0}^{\ell} \Pro{ \overline{\mathcal{E}^{(e(s))}}} \leftstackrel{(a)}{\leq} e^{-\beta} + 2 \cdot (\ell+1) \cdot n^{-3}, \notag
\end{align}
where $(a)$ follows from \cref{lem:crucialfirststatement}. Recalling
our choice of $\beta$ earlier in this proof, and the choice of $\ell=\frac{\log(n)}{\log \log(n)}$, finishes the proof.  
\end{proof}

\begin{lemma}[Phase 2]
%[restate=restateRmLemmaPhaseTwo,name={Phase 2}]
\label{lem:rmlemmaphasetwo}
Assume that the load vector $x^{(0)}$ has at most $(1-\epsilon) \cdot n$ tokens, where $\frac{1}{2} < \epsilon < 1$. Then it holds for $t_2:=\tauglobal+\frac{4}{-\log \left( 1 - \epsilon + \frac{2}{\log^4 (n)} \right)} \cdot \log(n) \cdot \taulocal$,
\[
\Pro{ \max_{w \in V} X_{w}^{(t_2)} \leq 1 } \geq 1 - n^{-2}.
\]
\end{lemma}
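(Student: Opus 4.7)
This is essentially a direct application of the Key Lemma (\cref{lem:generallemmaone:RM}) specialized to $\newL = 1$, mirroring the structure of the Phase~1 proof but with a much sharper contraction factor. Set $\ell := \frac{4\log(n)}{-\log\!\left(1-\epsilon + \frac{2}{\log^{4}(n)}\right)}$, so that $t_{2} = e(\ell) = \tauglobal + \ell\cdot\taulocal$. Since by assumption $\epsilon > 1/2$, we have $1-\epsilon + 2/\log^{4}(n) < 1/2 + o(1)$, hence $-\log(1-\epsilon + 2/\log^{4}(n)) = \Omega(1)$ and therefore $\ell = O(\log n)$.

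The plan is to iterate the Key Lemma on the random variable $\tilde{Y}^{(e(k))}$ defined in~\eqref{eq:ytilde}. With $\newL = 1$, the per-epoch contraction factor supplied by \cref{lem:generallemmaone:RM} is exactly $1-\epsilon + \frac{2}{\log^{4}(n)}$. Starting from the deterministic bound $\tilde{Y}^{(e(0))} \leq Y^{(e(0))} \leq (1-\epsilon)n \leq n$ (the total number of tokens is at most $(1-\epsilon)n$, and any token at height $\geq 2$ contributes at most once), applying the Key Lemma together with the tower property yields
\[
\Ex{\tilde{Y}^{(e(\ell))}} \leq \left(1-\epsilon + \tfrac{2}{\log^{4}(n)}\right)^{\ell} \cdot n = n \cdot \exp(-4\log n) = n^{-3}.
\]
By Markov's inequality, $\Pr[\tilde{Y}^{(e(\ell))} \geq 1] \leq n^{-3}$.

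To pass from $\tilde{Y}$ back to $Y$ we use the definition of $\tilde{Y}$: on the event $\bigcap_{s=0}^{\ell}\mathcal{E}^{(e(s))}$ the two coincide. Hence a union bound gives
\[
\Pr\!\left[Y^{(e(\ell))} \geq 1\right] \leq \Pr\!\left[\tilde{Y}^{(e(\ell))} \geq 1\right] + \sum_{s=0}^{\ell}\Pr\!\left[\overline{\mathcal{E}^{(e(s))}}\right] \leq n^{-3} + 2(\ell+1)\cdot n^{-3},
\]
where the second bound uses \cref{lem:crucialfirststatement}. Since $\ell = O(\log n)$, the right-hand side is at most $n^{-2}$ for large $n$. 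Finally, $Y^{(e(\ell))} = \sum_{u\in V}\max\{X_{u}^{(e(\ell))}-1,0\} = 0$ is equivalent to $\max_{w\in V} X_{w}^{(e(\ell))} \leq 1$, which is the desired conclusion at time $t_{2} = e(\ell)$.

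The only mild subtlety is that the Key Lemma requires $\epsilon < 1$ (not necessarily constant) and $\newL$ an integer with $\newL \leq \log^{7}(n)$; both are satisfied here with $\newL = 1$ and $\epsilon \in (1/2,1)$. The rest is bookkeeping: verifying that the hypothesis ``at most $(\newL-\epsilon)n$ tokens'' of the Key Lemma is preserved across epochs (the total number of tokens is conserved by the process, so this carries over automatically), and that $\ell = O(\log n)$ so the union bound over epochs only costs a $\polylog(n)$ factor that is absorbed into $n^{-2}$. No new conceptual ingredient is needed beyond what was already developed for Phase~1.
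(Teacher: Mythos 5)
Your proposal is correct and follows essentially the same route as the paper's own proof: apply the Key Lemma (\cref{lem:generallemmaone:RM}) with $\newL=1$ to get the per-epoch contraction $1-\epsilon+2/\log^4(n)$, iterate over $\ell=\frac{4\log(n)}{-\log(1-\epsilon+2/\log^4(n))}$ epochs starting from $\tilde{Y}^{(e(0))}\le n$ to obtain $\Ex{\tilde{Y}^{(e(\ell))}}\le n^{-3}$, then conclude via Markov and a union bound over the events $\mathcal{E}^{(e(s))}$ using \cref{lem:crucialfirststatement}. No gaps.
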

\begin{proof}
    The proof of this lemma is similar to the one of the previous lemma, but here we have the special case $L=1$.
By assumption, $Y^{(e(0))}\leq Y^{(0)} \leq n$. Furthermore, $\tilde{Y}^{(e(0))} \leq Y^{(e(0))} \leq n$.
Applying \cref{lem:generallemmaone:RM} with $L=1$ yields for any epoch $k \geq 1$,
\begin{align*}
\Ex{ \tilde{Y}^{(e(k))}} &= \Ex{\Ex{ \tilde{Y}^{(e(k))}~\Big|~\mathfrak{F}^{(e(k-1))}} } \leq \left( 1 - \epsilon + \frac{2}{\log^4 (n)} \right) \cdot \Ex{\tilde{Y}^{(e(k-1))}}.
\end{align*}
We now consider $\ell:=\frac{4}{-\log\left( 1 - \epsilon + \frac{2}{\log^4 (n)} \right)} \cdot \log(n)$ many epochs. 

It follows that
\begin{align*}
\Ex{ \tilde{Y}^{(e(\ell))}} &\leq  \left( 1 - \epsilon + \frac{2}{\log^4 (n)} \right)^{\ell} \cdot \widetilde{Y}^{(e(0))}
\\ &\leq \exp\left( \log \left( 1 - \epsilon + \frac{2}{\log^4 (n)} \right) \cdot \frac{4}{-\log \left( 1 - \epsilon + \frac{2}{\log^4 (n)} \right)} \cdot \log(n) \right) \cdot n \\
&=\exp\left(-4\cdot \log(n)\right) \cdot n =  n^{-3}.
\end{align*}
By Markov's inequality,
$\Pro{ \tilde{Y}^{(e(\ell))} \geq 1} \leq n^{-3}$.
By the definition of $\tilde{Y}^{(e(\ell))}$ and the union bound,
\begin{align*}
\Pro{ Y^{(e(\ell))} \ge 1}
&\le \Pro{ Y^{(e(\ell))} \ge 1 ~\Big|~ \bigcap_{s=0}^{\ell} \cE^{(e(s))} } + \Pro{\bigcup_{s=0}^{\ell} \overline{\mathcal{E}^{(e(s))}}}
\\
&\leq \Pro{ \tilde{Y}^{(e(\ell))} \ge 1 } + \sum_{s=0}^{\ell} \Pro{ \overline{\mathcal{E}^{(e(s))}}}\\
&\leq n^{-3} + \sum_{s=0}^{\ell} \Pro{ \overline{\mathcal{E}^{(e(s))}}}
\stackrel{(a)}{\leq} n^{-3} + 2\cdot (\ell+1) \cdot n^{-3} \leq n^{-2},
\end{align*}
where the last line used that $\ell=O(\log(n))$ and $(a)$ follows from \cref{lem:crucialfirststatement},.
\end{proof}

\section{From Linear Number of Tokens to Constant Discrepancy}\label{sec:arbitrary}

\subsection{Reducing Discrepancy to $38$}\label{sec:discrepancy-38}
Here we show that once the load vector consists of only $O(n)$ tokens, then after additional
$O( \tauglobal + \frac{\log (n)}{\log \log (n)} \cdot \taulocal)$ rounds the discrepancy is reduced to $38$ (see \cref{lem:discrepancy-38}). The proof of the result relies on \cref{lem:toNToken} showing that the number of tokens with height at least $\overline{x}+1$ is at most $16n$. 

In this section we derive discrepancy bounds for load vectors where the number of tokens with height above the average load is $O(n)$. In the following lemma we show that the discrepancy can be reduced to $38$.

\begin{lemma}\label{lem:discrepancy-38}
Let $t^{\star}:= 2 \cdot \tauglobal+ \frac{6\log(n)}{\log\log(n)} \cdot \taulocal$.
Assume our process applies a $(\tauglobal,\taulocal)$-good sequence of matchings $\left(\M^{(s)}\right)_{s=1}^{\infty}$ to a load vector $x^{(0)}$, which satisfies 
\[
 \sum_{w \in V} \max\{ x_w^{(0)} - \overline{x}, 0 \} \leq 16 \cdot n.
\]
Then
\[
\Pro{ \discr(X^{(t^{\star})}) \leq 38 } \geq 1 - \exp(- (1/80)) \cdot \log(n) / \log \log(n)).
\]
\end{lemma}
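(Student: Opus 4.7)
The plan is to reduce the discrepancy bound to two one-sided applications of \cref{lem:NtokensTo25Height}: one bounding the upper tail of $X^{(t^{\star})}$ above $\overline{x}$, and one (via self-duality of the averaging rule) bounding the lower tail. Before both, mass conservation turns the hypothesis $\sum_{w\in V}\max\{x_w^{(0)}-\overline{x},0\}\leq 16n$ into the symmetric bound $\sum_{w\in V}\max\{\overline{x}-x_w^{(0)},0\}\leq 16n$, since the two sums necessarily agree.

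For the upper tail I would define the truncated initial vector $\widehat{x}^{(0)}_w:=\max\{x_w^{(0)}-\lfloor\overline{x}\rfloor,0\}$. Because $x_w^{(0)}$ is integer and $\overline{x}-\lfloor\overline{x}\rfloor<1$, a routine counting argument gives $\sum_{w\in V}\widehat{x}^{(0)}_w\leq 17n$. Then \cref{obs:subAddAppl}, used exactly as in Phase~2 of the proof of \cref{lem:NtokensTo25Height}, couples the original process so that $X_w^{(t^{\star})}\leq \widehat{X}_w^{(t^{\star})}+\lfloor\overline{x}\rfloor$, where $\widehat{X}$ denotes the height-sensitive process started from $\widehat{x}^{(0)}$. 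Invoking \cref{lem:NtokensTo25Height} on $\widehat{x}^{(0)}$ with $\newL:=18$ and $\epsilon:=1$ (the assumptions $\epsilon\geq 4/\log^4(n)$ and $\newL\leq\log^7(n)$ are trivial for large $n$, and the token count matches $(\newL-\epsilon)n=17n$) yields $\max_{w\in V}\widehat{X}_w^{(t^{\star})}\leq 19$, and hence $\max_{w\in V} X_w^{(t^{\star})}\leq\lfloor\overline{x}\rfloor+19$.

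For the lower tail I would exploit the self-duality of the height-sensitive averaging rule: for any integer $C\geq\max_{w\in V}x_w^{(0)}$, the reflected vector $\widetilde{X}^{(t)}:=C\mathbf{1}-X^{(t)}$ evolves in law exactly as the height-sensitive process started from $\widetilde{x}^{(0)}:=C\mathbf{1}-x^{(0)}$ under the same matching sequence, because the averaging outcome $(\lceil s/2\rceil,\lfloor s/2\rfloor)$ applied to $(a,b)$ (with $s=a+b$) gets sent to $(C-\lfloor s/2\rfloor,C-\lceil s/2\rceil)$, which is exactly the averaging rule applied to $(C-a,C-b)$ whenever $C$ is an integer. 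The $(\tauglobal,\taulocal)$-goodness of the matching sequence is preserved, being a property of the matchings alone. Repeating the truncation argument for $\widetilde{x}^{(0)}$ (whose mean is $C-\overline{x}$ and whose positive excess is again at most $16n$ by the opening observation) and applying \cref{lem:NtokensTo25Height} a second time gives $\max_{w\in V}\widetilde{X}_w^{(t^{\star})}\leq\lfloor C-\overline{x}\rfloor+19=C-\lceil\overline{x}\rceil+19$, equivalently $\min_{w\in V}X_w^{(t^{\star})}\geq\lceil\overline{x}\rceil-19$.

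Combining the two one-sided bounds yields $\discr(X^{(t^{\star})})\leq 38-(\lceil\overline{x}\rceil-\lfloor\overline{x}\rfloor)\leq 38$. A union bound on the two failure events of \cref{lem:NtokensTo25Height} (each at most $\exp(-\log(n)/(36\log\log(n))+8\log\log(n))+2n^{-2}$) yields for sufficiently large $n$ the stated success probability $1-\exp\bigl(-(1/80)\log(n)/\log\log(n)\bigr)$; the gap between the proposition's rate $1/36$ and the target rate $1/80$ leaves comfortable slack for the factor of two and the additive $8\log\log(n)$ term. The only genuinely new ingredient beyond \cref{lem:NtokensTo25Height} itself is the self-duality verification for the lower tail, which I expect to be the sole point requiring care; the rest is bookkeeping with the floor and ceiling of $\overline{x}$.
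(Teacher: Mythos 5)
Your proposal is correct and follows essentially the same route as the paper's proof: truncate the load vector near $\overline{x}$, apply \cref{lem:NtokensTo25Height} together with the coupling of \cref{obs:subAddAppl} for the upper tail, handle the lower tail by the reflection symmetry of the randomized averaging rule (which the paper packages as \cref{obs:flip}/\cref{obs:upLowDiscRelation}), and finish with a union bound. The only differences are cosmetic parameter choices — the paper truncates at $\lceil\overline{x}\rceil$ and uses $\newL=17$, $\epsilon=1/2$, whereas you truncate at $\lfloor\overline{x}\rfloor$ and use $\newL=18$, $\epsilon=1$ — and both yield the bound $38$ with comfortable slack in the exponent.
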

\begin{proof}
    For any $t \geq 0$, define an auxiliary load vector $\widetilde{X}_u^{(t)} := \max\{ X_u^{(t)} - \lceil \overline{x} \rceil,0 \}$, $u \in V$, that is, we subtract $\lceil \overline{x} \rceil$ tokens from any node (as long as its load is large enough).
By assumption, we have $\sum_{w \in V} \widetilde{x}^{(0)}_w \le 16   \cdot n$. Here we can apply \cref{lem:NtokensTo25Height} with $\newL=17$, $\epsilon:=1/2$ and $t^{\star}:=2\cdot \tauglobal+ \frac{6\log(n)}{\log\log(n)} \cdot \taulocal$ and obtain for $c:=1/70$,
\[
\Pro{ \max_{w \in V} \widetilde{X}_w^{(t^{\star})} \leq 18} \geq 1 - \exp \left( - c \cdot \frac{\log(n)}{\log \log(n)} \right).
\]
Applying \cref{obs:subAddAppl} we get that the same holds for $X^{(t^{\star})}$, resulting in
\[
\Pro{ \max_{w \in V} X_w^{(t^{\star})} \leq \lceil \overline{x} \rceil + 18}
\geq 1 - \exp \left( - c \cdot \frac{\log(n)}{\log \log(n)} \right).
\]
Using a simple symmetry argument (see~\cref{obs:upLowDiscRelation}),
\begin{align*}
\Pro{ \min_{w \in V} X_w^{(t^{\star})} \geq \lfloor \overline{x} \rfloor - 19 } &\geq 1 - \exp \left( - c \cdot \frac{\log(n)}{\log \log(n)} \right).
\end{align*}
A final union bound gives
\[
\Pro{ \discr \left( X^{(t^{\star})} \right) \leq 38
} \geq 1 - 2 \exp \left( - c \cdot \frac{\log(n)}{\log \log(n)} \right) \geq 1 - \exp\left( -(1/80) \cdot \frac{\log(n)}{\log \log(n)} \right),
\]
which yields the statement.
\end{proof}

\subsection{Reducing Discrepancy from $38$ to $4$}

\begin{lemma}\label{lem:disc54to4}
Let $x^{(0)}$ be any load vector with $\discr(x^{(0)}) \leq 38$. Then for $t^{\star}:=2 \cdot \tauglobal+ {6\taulocal \cdot \log (n)}/{\log\log(n)}$ we have,
\[
 \Pro{ \discr(x^{(t^{\star})}) \leq 4 } \geq 1 -  \exp\left( - (1/160) \cdot \log (n) / \log \log (n) \right).
\]
\end{lemma}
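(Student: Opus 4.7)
The plan is to apply \cref{lem:NtokensTo25Height} twice to shifted versions of $x^{(0)}$, once to upper-bound $\max_u X_u^{(t^{\star})}$ and once, via a reflection argument, to lower-bound $\min_u X_u^{(t^{\star})}$, and then to combine both bounds into $\discr(X^{(t^{\star})})\le 4$. The case $\discr(x^{(0)})=0$ is trivial, so I assume $D:=\discr(x^{(0)})\in\{1,\dots,38\}$. Set $m^{\star}:=\min_u x_u^{(0)}$, $M^{\star}:=\max_u x_u^{(0)}$, $\alpha:=\overline{x}-m^{\star}>0$ and $\beta:=M^{\star}-\overline{x}>0$, so $\alpha+\beta=D$. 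I work with the two auxiliary initial vectors $Y^{(0)}_u:=x_u^{(0)}-m^{\star}$ and $Z^{(0)}_u:=M^{\star}-x_u^{(0)}$, which are non-negative integer vectors with token totals $n\alpha$ and $n\beta$. Since averaging is translation invariant, $Y^{(t)}=X^{(t)}-m^{\star}$ along the same sample path; since reflecting loads about $M^{\star}$ swaps the $\lceil\cdot\rceil/\lfloor\cdot\rfloor$ assignment on every matched pair and the shuffling coins are symmetric Rademacher, $Z^{(t)}$ has the same law as the process run on $Z^{(0)}$ with fresh coins. Hence \cref{lem:NtokensTo25Height} applies to both.

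For $Y$ I pick $L_1:=\lceil\alpha+\tfrac12\rceil$ and $\epsilon_1:=L_1-\alpha\in[\tfrac12,\tfrac32)$, and for $Z$ I pick $L_2:=\lceil\beta+\tfrac12\rceil$ and $\epsilon_2:=L_2-\beta$ analogously. Since $\alpha,\beta\le 38$ both $L_i\le 39$, and by construction each $\epsilon_i\ge\tfrac12$, so the token-count hypothesis $(L_i-\epsilon_i)n\ge$ (tokens in the shifted vector) is satisfied. Applying \cref{lem:NtokensTo25Height} to $Y$ yields $\max_u X_u^{(t^{\star})}\le m^{\star}+L_1+1$, and applying it to $Z$ yields $\min_u X_u^{(t^{\star})}\ge M^{\star}-L_2-1$, each with failure probability at most $\exp\!\left(-\epsilon_i\log(n)/(2L_i\log\log(n))+8\log\log(n)\right)+2n^{-2}$. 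Using $\epsilon_i/(2L_i)\ge 1/156$, a routine union bound over the two events drives the combined failure probability below $\exp(-(1/160)\log(n)/\log\log(n))$ for all sufficiently large $n$.

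On this event,
\[
\discr(X^{(t^{\star})})\le (m^{\star}+L_1+1)-(M^{\star}-L_2-1)=L_1+L_2+2-D,
\]
so the whole argument reduces to showing $L_1+L_2\le D+2$. This follows from a short case distinction using $\{\alpha\}+\{\beta\}\in\{0,1\}$. If $\overline{x}\in\mathbb{Z}$ then both fractional parts vanish, $L_1=\alpha+1$ and $L_2=\beta+1$, and the sum equals $D+2$ exactly. If $\overline{x}\notin\mathbb{Z}$ then $\{\alpha\}+\{\beta\}=1$, so at most one of $\{\alpha\},\{\beta\}$ exceeds $\tfrac12$; hence at most one of $L_1,L_2$ takes the ``$\lfloor\cdot\rfloor+2$'' form, and combined with $\lfloor\alpha\rfloor+\lfloor\beta\rfloor=D-1$ the sum remains $\le D+2$. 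In every case $\discr(X^{(t^{\star})})\le 4$.

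The key obstacle is the tight three-way interaction between the additive $+1$ rounding loss inside \cref{lem:NtokensTo25Height}, the slack $\epsilon_i$ required both for the token-count hypothesis and for a usable probability estimate, and the target budget $L_1+L_2\le D+2$. The ``shift by $\tfrac12$ before rounding up'' is the unique soft-rounding rule balancing all three: taking $L_i=\lceil\alpha\rceil$ would collapse $\epsilon_i$ to $\Theta(1/\log^4 n)$ whenever $\{\alpha\}$ is close to $1$ and render the probability bound vacuous, whereas always taking $L_i=\lceil\alpha\rceil+1$ would force $L_1+L_2\ge D+3$ and only yield $\discr\le 5$.
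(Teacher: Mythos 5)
Your proof is correct and follows essentially the same route as the paper's: apply \cref{lem:NtokensTo25Height} with $L\approx\lceil \text{avg}+\tfrac12\rceil$ to bound the maximum, use the reflection symmetry of the process (the paper's \cref{obs:flip}) to bound the minimum, and conclude that all loads lie in an interval of at most five integers. Your bookkeeping via $L_1+L_2\le D+2$ and the paper's ``at most $5$ values in $\left[\lfloor\overline{x}-\tfrac12\rfloor-1,\;\lceil\overline{x}+\tfrac12\rceil+1\right]$'' yield numerically identical bounds.
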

\begin{proof}
Assume without loss of generality that the load values are $\{0,1,2,\ldots,38\}$.
Let us pick $\newL:=\left\lceil \overline{x} + \frac{1}{2} \right\rceil $ 
and $\epsilon:=\frac{1}{2}$; clearly $\newL\le 39$ since $\overline{x} \leq 38$. 
Using \cref{lem:NtokensTo25Height}, we obtain that at round $t^{\star}:=2 \cdot \tauglobal+ {6\taulocal \cdot \log (n)}/{\log\log(n)}$ the maximum load is at most $\newL+1 = \left\lceil \overline{x}+\frac{1}{2} \right\rceil + 1$ with probability at least 
\[
1-\exp\left(-\frac{\log(n)}{156\cdot \log\log(n)}+ 8\cdot \log\log (n)\right) - 2\cdot n^{-2}.
\]

Let us now consider the load vector $\tilde{x}^{(0)}:=38-x^{(0)}$. 
By \cref{obs:flip}, $\tilde{x}^{(t)}=38-x^{(t)}$ for all $t \geq 1$. Also, $\overline{(\tilde{x})} = 38 - \overline{x}$. Repeating the above argument, but now applied to $\tilde{x}$,
yields for any $u \in V$,
 \[
 \tilde{x}_u^{(t^{\star})} \leq L+1 = \left\lceil \overline{(\tilde{x})}+\frac{1}{2} \right\rceil + 1,
\]
which implies
\begin{align*}
 x_u^{(t^{\star})} &\geq 38- \left\lceil \overline{(\tilde{x})}+\frac{1}{2} \right\rceil - 1 = 37- \left\lceil 38 - \overline{x} +\frac{1}{2} \right\rceil  \stackrel{(a)}{=} 37 - \left(38 + \left\lceil -\overline{x} + \frac{1}{2} \right\rceil \right)   \stackrel{(b)}{=} \left\lfloor \overline{x} - \frac{1}{2} \right\rfloor - 1,
\end{align*}
where $(a)$ used the fact that for any integer $k$ and real $z$, $\lceil k + z \rceil = k + \lceil z \rceil$ and $(b)$ used the fact that $\lceil -z \rceil = -\lfloor z \rfloor$ for any real $z$. Hence we can also conclude that the minimum load in $x^{(t^{\star})}$ is at least $\left\lfloor \overline{x} - \frac{1}{2} \right\rfloor - 1$.
Since all load values at round $t^{\star}$ are integers in the interval $\left[ \left\lfloor \overline{x}-\frac{1}{2} \right\rfloor - 1, 
\left\lceil \overline{x}+\frac{1}{2} \right\rceil + 1 \right]$, which are at most $5$ values, the discrepancy is at most $4$. Hence a final union bound gives, 
\[
\Pro{\discr(X^{(t^{\star})})\le 4}
 \ge 1-2\exp\left(-\frac{\log(n)}{156\cdot \log\log(n)}\right) - 4\cdot n^{-2} \ge 1-\exp\left(-\frac{\log(n)}{160\cdot \log\log(n)}\right).\]
\end{proof}

\subsection{Reducing Discrepancy from $4$ to $3$}\label{Sec:ReduceTo3}

To obtain discrepancy $3$, we may not be able to apply \cref{lem:NtokensTo25Height} directly. For example, if $\overline{x}$ is an integer (or close to an integer), the highest load is 
$\overline{x}+2$, and the minimum load is $\overline{x}-2$, then there is no $L$ and $\epsilon$ such that the application of \cref{lem:NtokensTo25Height} would result in a reduced discrepancy. To overcome this problem, we first remove a small number of tokens at the two highest levels (they will be called ``secondary"), and focus on the movement of the remaining tokens (called ``primary"). We can then show, by \cref{lem:NtokensTo25Height}, that most of the primary tokens reduce their height to at most $\overline{x}$. Then we can show that the amount of these primary tokens together with the secondary tokens is smaller than $(L-\epsilon) \cdot n$ (for $L=2$ and proper $\epsilon$), and by applying \cref{lem:rmlemmaphasetwo} we obtain that all tokens at height $\overline{x}+2$ will reduce their height.

\begin{lemma}\label{reductiontothree}
Consider any initial load vector with $\discr(x^{(0)}) \leq 4$.
Then for any $0 < \delta < 1/2$ and $t^*:=2 \tauglobal + \frac{5}{\delta \log \log(n)} \cdot \log(n) \cdot \taulocal$,
\[
\Pro{\discr (X^{(t^*)}) \leq 3 } \geq 1 - 2 \cdot \exp\left( -\log^{1-2  \delta} (n) \right).
\]
\end{lemma}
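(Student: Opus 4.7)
The plan is to split into three subcases depending on where $\overline{x}$ lies in the starting load range. After shifting the load vector additively so that $\min_u X_u^{(0)}=0$ (a shift which preserves both the discrepancy and the dynamics), all loads lie in $\{0,1,2,3,4\}$ and $\overline{x}\in[0,4]$. Throughout the argument I use that $\min_u X_u^{(t)}$ is non-decreasing and $\max_u X_u^{(t)}$ is non-increasing along the process.

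For $\overline{x}\le 7/4$, I apply \cref{lem:NtokensTo25Height} with $L:=2$ and $\epsilon:=2-\overline{x}\ge 1/4$; the total-token hypothesis $\overline{x}\cdot n\le (L-\epsilon)n$ is met, so the conclusion gives $\max_u X_u^{(t^*)}\le L+1=3$ with failure probability at most $\exp(-\Omega(\log(n)/\log\log(n)))$, which is bounded by $\exp(-\log^{1-2\delta}(n))$ for any $\delta<1/2$. Combined with $\min_u X_u^{(t^*)}\ge 0$, the discrepancy is at most $3$. The symmetric case $\overline{x}\ge 9/4$ is handled by applying the same lemma to the reflected process on $4\cdot\mathbf{1}-X$, whose average is at most $7/4$; this gives $\min_u X_u^{(t^*)}\ge 1$ and hence discrepancy at most $3$.

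The delicate subcase is $\overline{x}\in(7/4,9/4)$, where neither direct application admits a positive slack. Here I would follow the primary/secondary decomposition sketched in the paragraph preceding the lemma: designate a sublinear set $\mathcal{B}_{\mathrm{sec}}\subseteq\mathcal{T}$ of at most $n/\log(n)$ tokens initially at heights $3$ and $4$ as \emph{secondary}, calling the remainder \emph{primary}; these sets are fixed in the height-sensitive process and both maintain non-increasing heights. Using the negative-association lemma \cref{lem:height} to decouple primary trajectories from secondary ones, a variant of the key lemma (\cref{lem:generallemmaone:RM}) applies to the primary subpopulation with $L=2$ and a slack $\epsilon=\Omega(1)$ obtained by the excision. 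Iterating the resulting drift inequality for $\ell=\log(n)/(\delta\log\log(n))$ epochs, as in the proof of \cref{lem:rmlemmaphaseone}, drives the number of primary tokens at height $\ge 3$ below $n/\log(n)$ with failure probability $\exp(-\log^{1-2\delta}(n))$ via Markov's inequality on the resulting supermartingale. At the end of this subphase the total number of tokens at height $\ge 3$ is at most the primary remainder plus $|\mathcal{B}_{\mathrm{sec}}|$, both $o(n)$, so \cref{lem:rmlemmaphasetwo} applies to the shifted vector $\max(X-2,0)$ with $\epsilon$ arbitrarily close to $1$; an additional $\tauglobal+O(\taulocal\cdot\log(n)/(\delta\log\log(n)))$ rounds drive every token at height $\ge 4$ out, yielding $\max_u X_u^{(t^*)}\le 3$. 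The bound $\min_u X_u^{(t^*)}\ge 0$ then gives discrepancy at most $3$.

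The \textbf{main obstacle} is the delicate subcase. A direct application of \cref{lem:NtokensTo25Height} with $L=2$ would require a positive slack $\epsilon\le 2-\overline{x}$, which vanishes when $\overline{x}\ge 2$. The primary/secondary decomposition restores a constant slack by virtually setting aside a sublinear set of tokens, but running the key-lemma drift on the primary subpopulation requires showing that the primary trajectories still satisfy the negative-association bound of \cref{lem:height} despite the presence of the secondaries; this coupling is the main technical work. The parameter $\delta$ enters through the number of epochs iterated in the primary subphase, which via Markov's inequality applied to the resulting supermartingale produces the tail $\exp(-\log^{1-2\delta}(n))$ appearing in the statement.
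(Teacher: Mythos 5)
Your overall architecture (easy cases by direct application of \cref{lem:NtokensTo25Height} plus reflection, and a primary/secondary excision for the delicate case) matches the paper's, but the parameterization of the delicate case has a genuine gap. First, the claim that excising a sublinear secondary set yields slack $\epsilon=\Omega(1)$ is false: if $\overline{x}$ is within $o(1)$ of $2$, removing $m=o(n)$ tokens leaves at least $(2-o(1))n-m$ primary tokens, so the best achievable slack for $L=2$ is $\Theta(m/n)=o(1)$. With your choice $m=n/\log(n)$ the slack is $\epsilon=1/\log(n)$, and the Phase~1 tail bound of \cref{lem:rmlemmaphaseone}, namely $\exp\bigl(-\tfrac{\epsilon}{2L}\cdot\tfrac{\log(n)}{\log\log(n)}+8\log\log(n)\bigr)$, has a \emph{positive} exponent $-\tfrac{1}{4\log\log(n)}+8\log\log(n)$ — the bound is vacuous. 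This is exactly why the paper takes $m=n/\log^{\delta}(n)$, giving $\epsilon=1/\log^{\delta}(n)$ and exponent $-\tfrac{\log^{1-\delta}(n)}{4\log\log(n)}+8\log\log(n)\le-\log^{1-2\delta}(n)$; so $\delta$ enters the final probability through the size of the secondary set (equivalently, the per-epoch drift), not, as you assert, through the number of epochs iterated (Phase~1 always runs $\log(n)/\log\log(n)$ epochs; $\delta$ affects only the length of the Phase~2 tail via $-\log(1-\epsilon+2/\log^4(n))\approx\delta\log\log(n)$).

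Second, your delicate subcase $\overline{x}\in(7/4,9/4)$ contains values $\overline{x}>2$, for which \emph{no} sublinear excision can produce positive slack for $L=2$ at all; you must also reflect ($x\mapsto 4\cdot\vec{1}-x$, \cref{obs:flip}) inside this subcase, which collapses your three-way split back to the paper's two cases $\overline{x}\le2$ and $\overline{x}\ge2$. Finally, a smaller point: the comparison between the primary-only process and the full process does not need any negative-association coupling between primary and secondary trajectories — running the primary tokens alone with the same matchings and orientations is dominated by the full process via the deterministic monotonicity of \cref{obs:subAdd}, and the count of tokens at height $\ge3$ in the full vector is then bounded by the primary count plus the total number of secondaries. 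So the "main technical work" you identify is not where the difficulty lies; the difficulty is choosing the secondary-set size so that the resulting $o(1)$ slack still yields a nontrivial concentration bound.
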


\begin{proof}
For simplicity, we assume that the load values at time $0$ are $\{0,1,2,3,4\}$ (this can be achieved by reducing the load by the same value at each node).

To show the lemma, we consider two different cases ($1$ and $2$). In Case 1, we assume that $\overline{x} \leq 2$. 
Ideally, we would like to apply \cref{lem:NtokensTo25Height} for $L=2$, but this requires that the total number of tokens is at most $(L-\epsilon) \cdot n$ for $\epsilon > \frac{4}{\log^4 n}$, but this may not be satisfied if $\overline{x} \in (2 - \frac{4}{\log^4 n},2]$. To overcome this problem, we will first apply the first phase in the proof of \cref{lem:NtokensTo25Height} to a load vector with slightly fewer tokens. To this end, we first mark $\frac{n}{\log^{\delta} n}$ tokens at height $3$ and $4$, and call these tokens ``secondary'' tokens (all other tokens are called ``primary'').
We now apply a $(\tauglobal,\taulocal)$-good sequence of matchings $\left(\M^{(s)}\right)_{s=1}^{\infty}$,
and consider the load balancing process with a fixed sequence of orientations on the graph w.r.t. the primary  tokens only (in which the secondary tokens are removed) versus the process with the same sequence of orientations w.r.t. both types of tokens. The load vector at some time $t$, which results from the load balancing process w.r.t. the primary tokens only, is denoted by $p^{(t)}$.

Clearly, the number of tokens in $p^{(0)}$  is at most $ 2n - \frac{n}{\log^{\delta} n} = (L-\epsilon) \cdot n$, where $L:=2$ and $\epsilon:=\frac{1}{\log^{\delta} n}$.
Then we can apply~\cref{lem:rmlemmaphaseone} for $t_1:=\tauglobal+\log(n)/\log\log(n) \cdot \taulocal$, (note that $\epsilon \geq 4/\log^4(n)$ is satisfied) and conclude that the number of nodes with load at least $3$ in $p^{(t_1)}$ satisfies
\begin{align*}
\Pro{  \left|\left\{ u \in V \colon p_u^{(t_1)} \geq 3 \right\} \right| \leq \frac{n}{\log(n)} } &\geq 1 - \exp\left( - \frac{ (1/\log^{
 \delta} n)}{4} \cdot \frac{\log(n)}{\log \log(n)} + 8 \cdot \log \log (n) \right) - 2 n^{-2} \\ &\geq 1 - \exp\left( -\log^{1-2  \delta} (n) \right).
\end{align*}

According to the second statement of \cref{obs:subAdd}, we have for all rounds $t \geq 0$ and $u \in V$,
\[
 x_u^{(t)} \geq 
 p_u^{(t)}.
\]
As there are $n/\log^{\delta} (n)$ secondary tokens, the number of tokens at height at least $3$ in $x^{(t_1)}$ 
is at most $$2n/\log (n) + n/\log^{\delta} (n) \leq 2 n/ \log^{\delta}(n)=(1-\epsilon) \cdot n,$$ for $\epsilon:=1-\frac{2}{\log^{\delta}(n)} $. Then we consider an additional phase, applied to $x^{(t)}$, $t \geq t_1$, where we consider only these $(1-\epsilon) \cdot n$ tokens at height $3$ and $4$. By \cref{lem:rmlemmaphasetwo}, by round $t^*:=t_1+ \tauglobal+\frac{4}{-\log \left(1 - \epsilon + \frac{2}{\log^4 (n)} \right)} \cdot \log(n) \cdot \taulocal$, all tokens at height $4$ are eliminated with probability $1-n^{-2}$. Hence the total number of rounds is
\[
 t^{*} = 2 \tauglobal + \frac{\log(n)}{\log\log(n)} \cdot \taulocal + \frac{4}{\delta \log \log(n) -2} \cdot \log(n) \cdot \taulocal 
 \leq 2 \tauglobal + \frac{5}{\delta \log \log(n)} \cdot \log(n) \cdot \taulocal,
\]
and by the union bound the success probability for both phases is
\[
  1 -  \exp\left( -\log^{1-2  \delta} (n) \right) + n^{-2} \geq 1 - 2 \cdot \exp\left(-\log^{1-2  \delta} (n) \right).
\]
In Case 2, where $\overline{x} \geq 2$, we consider the flipped load vector with the entries $y_i^{(0)} = 4-x^{(0)}_{i}$, and apply the analysis of the first case to this vector. 
\end{proof}

\section{Proofs of our Main Results}\label{sec:proofsmainresults}

Here we provide a proof of our main \Cref{thm:main-result}.

Before proving the theorem, we present a simple lemma which relates $(\tauglobal,\taulocal)$-good sequences of matchings to the property of $(K,\epsilon)$-smoothing.
\begin{lemma}
\label{lem:TauglobalSmoothingRelation}
Consider a sequence of $(\tauglobal,\taulocal)$-good matchings $\left(\M^{(s)}\right)_{s=1}^{\infty}$.
Let $K\ge 2n^2$ and $0<\epsilon\le 1$. Then for $t^* := (3\cdot \log (\frac{K}{\epsilon})/\log(n))\cdot \tauglobal$,
\[
 \Pro{ \text{ $\left(\M^{(s)}\right)_{s=1}^{t^{*}}$ is $(K,\epsilon)$-smoothing }} \geq 1-n^{-2}. 
\]
\end{lemma}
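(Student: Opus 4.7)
The approach is to partition the $t^\star$ rounds into $m := \lceil 3 \log(K/\epsilon)/\log(n) \rceil$ consecutive \emph{blocks} of $\tauglobal$ rounds each, and to show that under a high-probability event, each block contracts any mean-zero vector by a factor of at least $1/n^3$ in the $\ell_2$-norm. For $i \in [m]$, let $\M_i := \M^{[(i-1)\tauglobal+1, i\tauglobal]}$. By the first property of $(\tauglobal,\taulocal)$-goodness applied to the starting time $(i-1)\tauglobal$, we have $\Pr[\Gamma_i] \geq 1 - 1/n^3$, where $\Gamma_i := \{\bigcap_{u\in V} \|(\M_i)_{u,\cdot} - \vec{1/n}\|_2^2 \leq 1/n^7\}$. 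A union bound over $i \in [m]$ yields $\Pr[\bigcap_{i=1}^{m} \Gamma_i] \geq 1 - m/n^3 \geq 1 - n^{-2}$, using that $m = O(\log(K/\epsilon)/\log n)$ is at most $n$ for any reasonable $K,\epsilon$.

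The next step is to convert the row-wise $\ell_2$ bound into an operator norm bound on $\M_i - J/n$, where $J$ is the all-ones matrix. Under $\Gamma_i$, summing the row-wise bounds gives the Frobenius bound $\|\M_i - J/n\|_F^2 = \sum_{u\in V}\|(\M_i)_{u,\cdot} - \vec{1/n}\|_2^2 \leq n \cdot n^{-7} = n^{-6}$, hence $\|\M_i - J/n\|_{\mathrm{op}} \leq \|\M_i - J/n\|_F \leq n^{-3}$. Next, I will use that both $x^{(0)} - \overline{x}\vec{1}$ is mean-zero and that each $\M_i$ is doubly stochastic, which together imply $(x^{(0)} - \overline{x}\vec{1}) \cdot \M_i = (x^{(0)} - \overline{x}\vec{1}) \cdot (\M_i - J/n)$ and that this product is again mean-zero. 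Iterating the identity $x^{(0)}\cdot\M^{[1,t^\star]} - \overline{x}\vec{1} = (x^{(0)} - \overline{x}\vec{1})\prod_{i=1}^{m}\M_i = (x^{(0)} - \overline{x}\vec{1})\prod_{i=1}^{m}(\M_i - J/n)$ and submultiplicativity of the operator norm gives, on the intersection event,
\[
\bigl\| x^{(0)}\cdot\M^{[1,t^\star]} - \overline{x}\vec{1} \bigr\|_2
\leq \|x^{(0)} - \overline{x}\vec{1}\|_2 \cdot \prod_{i=1}^{m} \|\M_i - J/n\|_{\mathrm{op}}
\leq K\sqrt{n} \cdot n^{-3m}.
\]

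Finally, since $\discr(y) \leq 2\|y - \overline{y}\vec{1}\|_\infty \leq 2\|y - \overline{y}\vec{1}\|_2$, I obtain that on the good event the discrepancy of $x^{(0)}\cdot\M^{[1,t^\star]}$ is at most $2K\sqrt{n}\cdot n^{-3m}$. It remains a routine calculation to verify $2K\sqrt{n}\cdot n^{-3m} \leq \epsilon$ under the hypothesis $K \geq 2n^2$ and $\epsilon \leq 1$: with $m \geq 3\log(K/\epsilon)/\log n$, we have $n^{3m} \geq (K/\epsilon)^9$, so $2K\sqrt{n}/n^{3m} \leq 2\sqrt{n}\epsilon^9/K^8 \leq \epsilon$, which reduces to $K^8 \geq 2\sqrt{n}\epsilon^8$; this holds since $K \geq 2n^2$ and $\epsilon \leq 1$. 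The main subtlety is this last bookkeeping, because the natural contraction per block is $n^{-3}$ in the $\ell_2$ norm but we need to compare the result to the original $\ell_\infty$ discrepancy $K$, which costs a factor of $\sqrt{n}$; the assumption $K \geq 2n^2$ leaves ample slack for this.
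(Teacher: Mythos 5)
Your overall strategy---splitting $[1,t^{*}]$ into blocks of length $\tauglobal$, invoking the first property of $(\tauglobal,\taulocal)$-goodness per block, and composing the per-block contraction---is the same as the paper's. Your contraction mechanism differs in a clean way: you convert the row-wise $\ell_2$ bound into a Frobenius and hence operator-norm bound on $\M_i - J/n$ and multiply these, whereas the paper converts each good block into a $(1,1/(2n^2))$-smoothing property via \cref{obs:tauGtauCRelation} and composes discrepancy reductions directly. Both conversions cost a polynomial factor in $n$ (your $\sqrt{n}$ from passing between $\ell_\infty$ and $\ell_2$, the paper's $n$ inside \cref{obs:tauGtauCRelation}), and the hypothesis $K \geq 2n^2$ absorbs either; your final bookkeeping is correct and has enough slack to also absorb the fact that $m = \lceil 3\log(K/\epsilon)/\log(n)\rceil$ blocks may not fit exactly into $t^{*}$ rounds.

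The one genuine weak point is the probabilistic step. You require \emph{all} $m$ blocks to be good and union-bound, obtaining failure probability $m/n^{3}$, which is at most $n^{-2}$ only when $m \leq n$, i.e., when $\log(K/\epsilon)$ is at most about $n\log(n)/3$. The lemma places no upper bound on $K$, so for $K$ super-exponential in $n$ your bound degrades below the claimed $1-n^{-2}$; you acknowledge this yourself with ``for any reasonable $K,\epsilon$,'' but as stated it is a gap. The paper is structured precisely to avoid it: it lets $Z_i$ indicate that block $i$ fails to be $(1,1/(2n^2))$-smoothing, applies Markov's inequality to $Z=\sum_i Z_i$ to get $\Pro{Z \geq x/2} \leq n^{-2}$ \emph{independently} of the number of blocks $x$, and uses that bad blocks never increase the discrepancy, so half the blocks being good already suffices. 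Your argument admits the same repair: since each $\M_i$ is doubly stochastic, $\left\|\M_i - J/n\right\|_{\mathrm{op}} \leq 1$, so bad blocks are harmless in your product, and $m/2$ good blocks still give contraction $n^{-3m/2}$, which your slack comfortably accommodates. With that modification your proof establishes the lemma in full generality.
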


\begin{proof}\label{Pr:TaglobalSmoothingRelation}
We first define $x:=3\left\lceil\log_{2n^2} (\frac{K}{\epsilon})\right \rceil$. We will consider $x$ subsequent and disjoint subsequences of matchings, 
$\left(\M^{(s)}\right)_{s=(i-1)\cdot \tauglobal+1}^{i\cdot \tauglobal}$, where $i \in [1,x]$. For any $i\in \left[1, x \right]$ we define a random variables $Z_i$ to be zero if  the sequence of matchings $\left(\M^{(s)}\right)_{s=(i-1)\cdot \tauglobal+1}^{i\cdot \tauglobal}$ is $(1,1/(2n^2))$-smoothing and one otherwise.
From \cref{def:taus} together with \cref{obs:tauGtauCRelation} we get
\[
\Pro{Z_i=1} \le n^{-3}.
\]
We define $Z:=\sum_{i=1}^{x}Z_i$ and an event
$
\Psi :=\{Z\le x/2\}.
$
By linearity of expectation we get that
$\Ex{Z} \le x/n^3$. From Markov's inequality it follows that
$\Pro{Z\ge \frac{x}{2}} \le \Pro{Z\ge \frac{x}{n}}\le n^{-2}$ implying that
$\Pro{\Psi}\ge 1-n^{-2}$.

In the remainder of the proof we assume that the event $\Psi$ occurs. This implies that at least $x/2$ matching subsequences of length $\tauglobal$ are $(1,1/(2n^2))$-smoothing.
Note that the discrepancy is non-increasing over the time. Hence conditioning on the event $\Psi$, we get that after $x\cdot \tauglobal$ rounds the discrepancy is at most
\[K\cdot \left(\frac{1}{2n^2}\right)^{x/2} \le K\cdot \left(\frac{1}{2n^2}\right)^{\log_{2n^2} (K/\epsilon)} = \epsilon.\]
Finally, we have that
\[
3\cdot \left\lceil\log_{2n^2}\left(\frac{K}{\epsilon}\right) \right\rceil = 3\cdot \left\lceil\frac{\log (\frac{K}{\epsilon})}{\log (2n^2)} \right \rceil \stackrel{(\star)}{\le} 3 \cdot \frac{\log (\frac{K}{\epsilon})}{\log(n)} = t^{\star} \cdot \frac{1}{\tauglobal},
\]
where $(\star)$ used that $K \geq 2n^2$ and $\epsilon \leq 1$, which implies that the argument of $\lceil . \rceil$ is at least $1$. This completes the proof.
\end{proof}

\begin{proof}[Proof of \cref{thm:main-result}]
To ease the notation, we define $\ell:=\frac{\log(n)}{\log\log(n)}$.
Let $t_0:=3\tauglobal \cdot \log(2Kn)/\log(n)$. First, we define two events 
\[
\mathcal{G}^{\star}:=\left\{ \left(\M^{(s)}\right)_{s=1}^{t_0} ~\mbox{is $(K,1/(2n))$-smoothing}\right\},
\]
and
\[ \mathcal{G}_0 := \left\{ \sum_{w \in V} \max\{ X_w^{(t_0)} - \overline{x}, 0\} \leq 16 \cdot n \right\}.
\]
From \cref{lem:TauglobalSmoothingRelation} it follows that $\Pro{\mathcal{G}^{\star}}\ge 1-n^{-2}$ and from \cref{lem:toNToken} it follows that 
\[\Proco{\mathcal{G}_0 }{\mathcal{G}^{\star}} \geq 1-2\cdot n^{-2}.\] 

Let $t_1:=t_0+2 \cdot\tauglobal + {6 \ell} \cdot \taulocal$.
Here we define an event
$
\mathcal{G}_1:= \left\{\discr(X^{(t_1)}) \leq 38\right\}$
and from \cref{lem:discrepancy-38} it follows that \[\Pro{\mathcal{G}_1 ~\Big|~ \mathcal{G}_0 }\ge 1-\exp(- (1/80) \cdot \ell).\]
Let $t_2:=t_1+ 2\cdot \tauglobal+{6 \ell} \cdot \taulocal$. We define another event 
$
\mathcal{G}_2:= \left\{\discr(X^{(t_1)}) \leq 4\right\}$, and
from \cref{lem:disc54to4} it follows that
\[
\Proco{\mathcal{G}_2}{\mathcal{G}_1} \ge 1-\exp(-(1/160)\cdot \ell).\]
Finally, by the definition of conditional probability we obtain
\begin{align*}
\Pro{\mathcal{G}_2}& \geq \Proco{\mathcal{G}_2}{\mathcal{G}_1}\cdot \Pro{\mathcal{G}_1}\ge \Proco{\mathcal{G}_2}{\mathcal{G}_1}\cdot \Proco{\mathcal{G}_1}{\mathcal{G}_0}\cdot \Proco{\mathcal{G}_0}{\mathcal{G}^{\star}}\cdot \Pro{\mathcal{G}^{\star}} \\& \geq 1-\exp(-(1/160)\cdot \ell) - \exp(-(1/80)\cdot \ell) - 2\cdot n^{-2} -n^{-2} \ge 1-\exp(-(1/200)\cdot \ell).
\end{align*}
Note that 
\begin{align*}
    t_2: &= t_1+ 2\cdot \tauglobal+{6 \ell} \cdot \taulocal = t_0+4 \cdot\tauglobal + {12 \ell} \cdot \taulocal
    \\&= \left(\frac{3\log(2Kn)}{\log(n)}+4\right)\cdot \tauglobal + {12 \ell} \cdot \taulocal,
\end{align*}
finishing the proof of the first statement. 

Let $c>0$ be any constant and $\tau:= t_2+ 2\cdot \tauglobal + \lceil 10/c \rceil \cdot \ell \cdot \taulocal$. Here we define an event $\mathcal{G}_3:=\left\{\discr(X^{(\tau)})\le 3\right\}$.
From \cref{reductiontothree} with $\delta:=c/2$ it follows that \[\Proco{\mathcal{G}_3}{\mathcal{G}_2}\ge 1-2\exp(-\log^{1-c}(n)).\] By the definition of conditional probability we get,
\begin{align*}
    \Pro{\mathcal{G}_3} & \ge \Proco{\mathcal{G}_3}{\mathcal{G}_2}\cdot \Pro{\mathcal{G}_2} \ge 1 -2\exp(-\log^{1-c}(n)) - \Pro{\overline{\mathcal{G}_2}}
    \\&\ge 1- 2\exp(-\log^{1-c}(n)) - \exp(-(1/200)\cdot \ell) \ge 1-\exp(-\log^{1-2c}(n)) .
\end{align*}
Finally, 
\begin{align*}
    \tau &= t_2 + 2\cdot \tauglobal + \frac{\lceil 10/c\rceil \log (n)}{\log\log(n)}\cdot \taulocal \\
    &= \left(\frac{3\log(2Kn)}{\log(n)}+6\right)\cdot\tauglobal + \frac{(\lceil 10/c\rceil+12)\log(n)}{\log\log(n)} \cdot \taulocal.
\end{align*}
Since $c$ is a constant, this finishes the proof.
\end{proof}

We now provide formal statements for the remarks in \Cref{sec:mainresintro} regarding application of the main theorem in various concrete settings: balancing circuit and random matchings.

\begin{corollary}
%[restate=restateCorBC]
\label{cor:BC}
Assuming the premise of \cref{thm:main-result}, for the balancing circuit model we have
$\tau = O\left(\Delta\cdot \log(Kn)/{\left(1-\lambda\left(\vphantom{X^1}\smash{\M^{[1,\Delta]}}\right)\right)}\right)$.
\end{corollary}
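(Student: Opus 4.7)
The plan is to verify that the deterministic periodic sequence of matchings in the balancing circuit model is $(\tauglobal,\taulocal)$-good for
\[
\tauglobal = O\!\left(\frac{\Delta \cdot \log n}{1-\lambda}\right), \qquad \taulocal = O\!\left(\frac{\Delta \cdot \log \log n}{1-\lambda}\right),
\]
where $\lambda := \lambda(\M^{[1,\Delta]})$, and then substitute these values directly into \cref{thm:main-result}. Since the circuit involves no randomness in the selection of matchings, both events in \cref{def:taus} reduce to deterministic spectral inequalities that need to hold with probability one.

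First I would exploit that any doubly stochastic matrix $A$ acts as a contraction by a factor $\sqrt{\lambda(A)}$ on the subspace orthogonal to $\vec{1}$, via the singular-value characterization $\lambda(A) = \lambda_2(AA^T)$ used in the paper. For any starting round $t$ and length $s$, I would decompose $\M^{[t+1,t+s]}$ as a partial-period prefix, times $\lfloor s/\Delta\rfloor - O(1)$ complete periods, times a partial-period suffix. Each complete period (regardless of its cyclic starting point) has the same value $\lambda$, because $AB$ and $BA$ share the same spectrum, and the partial factors are non-expansive on vectors orthogonal to $\vec{1}$. This yields the deterministic bound
\[
\left\| \M_{u,\cdot}^{[t+1,t+s]} - \vec{\tfrac{1}{n}}\,\right\|_2^{2} \;\le\; \lambda^{\lfloor s/\Delta\rfloor - O(1)}
\]
for every $u \in V$ and every $t \ge 0$.

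Choosing $\tauglobal := c_1 \Delta \log n/(1-\lambda)$ for a sufficiently large constant $c_1$ drives the right-hand side below $n^{-7}$, verifying \cref{eq:taus:eq-1}. For \cref{eq:taus:eq-2} I would use the orthogonal decomposition $\|\M_{u,\cdot}^{[t+1,t+s]}\|_2^2 = \|\M_{u,\cdot}^{[t+1,t+s]} - \vec{1/n}\|_2^2 + 1/n$, and pick $\taulocal := c_2 \Delta \log \log n/(1-\lambda)$ so that both summands are at most $1/(2\log^{10} n)$. Plugging into \cref{thm:main-result} then gives
\[
\tau = O\!\left( \tfrac{\log(Kn)}{\log n}\cdot \tauglobal + \tfrac{\log n}{\log \log n}\cdot \taulocal\right) = O\!\left(\tfrac{\Delta \log(Kn)}{1-\lambda} + \tfrac{\Delta \log n}{1-\lambda}\right) = O\!\left(\tfrac{\Delta \log(Kn)}{1-\lambda}\right),
\]
as claimed. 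The only mildly delicate step I anticipate is the partial-period bookkeeping when $t$ is not aligned with the period; however, this only costs an $O(1)$ additive slack in the exponent of $\lambda$, which is harmlessly absorbed into the constants $c_1$ and $c_2$.
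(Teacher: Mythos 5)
Your proposal is correct and follows essentially the same route as the paper: verify that the periodic circuit is $(\tauglobal,\taulocal)$-good with $\tauglobal = O(\Delta\log (n)/(1-\lambda))$ and $\taulocal = O(\Delta\log\log (n)/(1-\lambda))$ via the spectral contraction of the round matrix $\M^{[1,\Delta]}$ (the paper invokes \cite[Lemma~2.4]{DBLP:conf/focs/SauerwaldS12} plus the orthogonal decomposition of $\|\M_{u,\cdot}\|_2^2$) and then substitute into \cref{thm:main-result}; your explicit handling of the partial-period alignment is if anything more careful than the paper's. One minor caveat: since $\lambda$ of a non-symmetric product is defined through singular values of $\M\M^T$, the fact that $AB$ and $BA$ share the same eigenvalue spectrum does not literally give $\lambda(BA)=\lambda(AB)$, but the regrouping $(BA)^m = B(AB)^{m-1}A$ with $A,B$ doubly stochastic and hence non-expansive on $\vec{1}^{\perp}$ yields exactly the $O(1)$ additive slack in the exponent that you already budget for.
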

\begin{proof}
To simplify notation, we define $\lambda:=\lambda(\M^{[1,\Delta]})$.
    Using standard spectral arguments (e.g.,~\cite[Theorem~1]{DBLP:conf/focs/RabaniSW98}) yields that the sequence of matchings $\left(\M^{(s)}\right)_{s=1}^t$ for
$t:=\frac{4\Delta}{1-\lambda}\cdot \log\left(\frac{Kn}{\epsilon}\right)$
is $(K,\epsilon)$-smoothing. Note that this corresponds to the multiplication of $t/\Delta$ round matrices. 

In the following analysis we will show that the balancing circuit model is also $(\tauglobal,\taulocal)$-good with $\tauglobal:= \frac{7\Delta\cdot \log (n)}{1-\lambda}$ and $\taulocal:=\frac{11\Delta\cdot \log\log (n)}{1-\lambda}$.
 Fix an arbitrary $t\ge 1$.
From \cite[Lemma~2.4]{DBLP:conf/focs/SauerwaldS12} (restated as \cref{lem:prob:BalancingCircuit}) it follows for any $u\in V$ and for the choice of $\tauglobal$ as above that we have
\[
\left\| \M_{u,.}^{[t,t+ \tauglobal]} - \vec{ \frac{1}{n}} \right\|_2^2 \le \left(1-\lambda \right)^{\log (n^7)/(1-\lambda)} \le e^{-\log (n^7)} =\frac{1}{n^7}.
\]

Now we consider $\taulocal$. We fix a node $u\in V$ and an arbitrary $t\ge 1$.
Then we have
\begin{align*}
\left\| \M_{u,.}^{[t,t+\taulocal]}\right\|_2^2 -\frac{1}{n} \stackrel{\text{Obs.~\ref{obs:secNormExpansion}}}{=} \left\| \M_{u,.}^{[1,\taulocal]} - \vec{ \frac{1}{n}} \right\|_2^2 \stackrel{\text{Lem.~\ref{lem:prob:BalancingCircuit}}}{\leq} \frac{1}{\log^{11}n}.
\end{align*}
From this we get
\[
\left\| \M_{u,.}^{[t,t+\taulocal]}\right\|_2^2 \le \frac{1}{\log^{11}n} + \frac{1}{n}\le \frac{1}{\log^{10}n}.
\]
\end{proof}

\begin{corollary}
%[restate=restateCorRM]
\label{cor:RM}
Assuming the premise of \cref{thm:main-result}, for the random matching model we have
$\tau =O\left(\log(Kn)/\left(p_{\min}\cdot \Delta \cdot \left(1-\lambda\left(\vphantom{X^1}\P\right)\right)\right)\right)$.
\end{corollary}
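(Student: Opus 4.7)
The plan is to verify that the random matching model is $(\tauglobal,\taulocal)$-good with
\[
  \tauglobal := C_1 \cdot \frac{\log n}{\mu} \qquad \text{and} \qquad \taulocal := C_2 \cdot \frac{\log\log n}{\mu},
\]
where $\mu := p_{\min}\cdot \Delta \cdot (1-\lambda(\P))$ and $C_1,C_2$ are sufficiently large absolute constants, and then to invoke \cref{thm:main-result}. With these choices, the running time given by \cref{thm:main-result} is
\[
   \tau \;=\; O\!\left(\frac{\log(Kn)}{\log n}\cdot \tauglobal + \frac{\log n}{\log \log n}\cdot \taulocal\right) \;=\; O\!\left(\frac{\log(Kn)}{\mu}\right),
\]
which is exactly the stated claim; the asynchronous single-edge model then follows as the special case $p_{\min} = 1/|E|$.

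The crux of the argument is a one-step spectral contraction: for every $x \in \mathbb{R}^n$ with $x\cdot \vec{1}^{\,T} = 0$ one has $\E\bigl[\|x\cdot \M^{(s)}\|_2^2\bigr] \le (1-c'\mu)\cdot \|x\|_2^2$ for some absolute constant $c' > 0$. This fact is standard in the random-matching and gossip literature (cf.~\cite[Lemma~2.6]{DBLP:conf/focs/SauerwaldS12} and \cite{DBLP:journals/tit/BoydGPS06}) and rests on two ingredients. First, the exact identity $\|x\cdot \M\|_2^2 - \|x\|_2^2 = -\tfrac{1}{2}\sum_{[u:v]\in \M}(x_u-x_v)^2$ that holds for every matching matrix $\M$. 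Second, the entrywise lower bound $\E[\M^{(s)}]_{u,v} \ge p_{\min}/2$ on every edge $\{u,v\}$, which upon taking expectations bounds the expected decrease in $\ell_2^2$ by a constant multiple of the Dirichlet form of $p_{\min}\Delta\cdot \P$, and hence by $c'\mu\cdot \|x\|_2^2$ via the spectral gap $1-\lambda(\P)$.

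Iterating this inequality over $T$ mutually independent rounds (using the independence of consecutive matchings assumed in the random matching model) yields
\[
  \E\!\left[\left\| \M_{u,\cdot}^{[t+1,t+T]} - \vec{\tfrac{1}{n}}\right\|_2^2\right] \;\le\; (1-c'\mu)^T
\]
for every node $u$ and every $t \ge 0$. Taking $T = \tauglobal$ with $C_1$ large enough makes this expectation at most $n^{-10}$, so Markov's inequality combined with a union bound over starting nodes gives the first condition of \cref{def:taus} with failure probability at most $1/n^3$. Taking $T = \taulocal$ with $C_2$ large enough makes the expectation at most $1/\log^{C_2 c'}(n)$; Markov's inequality for a fixed starting node then gives the second condition with failure probability at most $1/\log^{11}(n)$.

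The main technical ingredient is the spectral contraction inequality, which couples the randomness over matching realizations with the spectral gap of the diffusion matrix $\P$. This is the only place where the structural assumption $p_{\min} \ge c/\Delta$ and the specific normalization in the definition of $\P$ (division by $2\Delta$) enter, and it is essentially classical; once in hand, the rest of the proof reduces to routine Markov bounds and a substitution into \cref{thm:main-result}.
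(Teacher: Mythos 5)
Your proposal is correct and follows essentially the same route as the paper: both verify that the random matching model is $(\tauglobal,\taulocal)$-good with $\tauglobal=\Theta(\log n/\mu)$ and $\taulocal=\Theta(\log\log n/\mu)$ and then substitute into \cref{thm:main-result}. The only difference is that you re-derive the mixing estimate from the one-step spectral contraction, whereas the paper invokes \cite[Corollary~2.7]{DBLP:conf/focs/SauerwaldS12} (restated as \cref{cor:prob:randomMatcing}) as a black box; your constants need minor tightening (e.g.\ expectation $n^{-11}$ rather than $n^{-10}$ to survive the union bound at level $n^{-3}$), but this is absorbed by choosing $C_1$ large enough.
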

\begin{proof}
    It suffices to show that the random matching model is
$(\tauglobal,\taulocal)$-\emph{good} with $\tauglobal := \frac{14\cdot \log(n)}{p_{\min}\cdot \Delta\cdot (1-\lambda(\P))}$ and $\taulocal := \frac{22\cdot \log\log(n)}{p_{\min}\cdot \Delta\cdot (1-\lambda(\P))}$. 

Fix an arbitrary $t\ge 1$. From \cite[Corollary~2.7]{DBLP:conf/focs/SauerwaldS12} (see \cref{cor:prob:randomMatcing}) it follows for any $u\in V$ and the choice of $\tauglobal$ as above (and sufficiently large $n$) that we have
\begin{align*}
\Pro{ \left\| \M_{u,.}^{[t,t+\tauglobal]} - \vec{ \frac{1}{n}} \right\|_2^2\le \frac{1}{n^7}}
& \ge 1-\frac{1}{n^{7}} .
\end{align*}
Applying the union bound over all nodes $u\in V$ implies that
\[
\Pro{\bigcap_{u \in V} \left\| \M_{u,.}^{[t,t+\tauglobal]} - \vec{ \frac{1}{n}} \right\|_2^2 \le \frac{1}{n^7}} \ge 1-\frac{1}{n^6}.
\]

Now we consider $\taulocal$. We fix a node $u\in V$ and an arbitrary $t\ge 1$.
Then we have
\begin{align*}
\Pro{\left\| \M_{u,.}^{[t,t+\taulocal]} \right\|_2^2 \le \frac{1}{\log ^{10} n}} &\ge \Pro{\left\| \M_{u,.}^{[t,t+\taulocal]} \right\|_2^2\le \frac{1}{\log ^{11} n} +\frac{1}{n}}
\\& \leftstackrel{\text{Obs.~\ref{obs:secNormExpansion}}}{=}
\Pro{\left\| \M_{u,.}^{[t,t+\taulocal]} - \vec{ \frac{1}{n}}
\right\|_2^2\le \frac{1}{\log ^{11} n} } \stackrel{\text{Lem.~\ref{cor:prob:randomMatcing}}}{\ge} 1- \frac{1}{\log ^{11} n}.\qed
\end{align*}
\end{proof}

\bibliographystyle{siamplain}
\bibliography{Camera-Ready-arXiv-long}

\section*{Appendix}
\section{Assorted Tools}

\subsection{Basic Properties of the Height-Sensitive Process}

\begin{lemma}
%[restate=restateOneTokenRW]
\label{lem:OneTokenRW}
Consider any pair of rounds $t_1 < t_2$ and let $(\M^{(s)} )_{s=t_1+1}^{t_2}$ be an arbitrary but fixed sequence of matchings. For any token $i\in \mathcal{T}$ it holds that (1) $H_i^{(t_1)} \geq H_i^{(t_2)}$, and (2)
$
\Proco{W_i^{(t_2)}=v}{w_i^{(t_1)}=u} = 
\M_{u,v}^{[t_1+1,t_2]}.
$
\end{lemma}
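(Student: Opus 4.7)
\textbf{Proof plan for \cref{lem:OneTokenRW}.}

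The plan is to prove both statements by induction on $t_2 - t_1$, reducing them to a single-round claim. By iterating, it suffices to verify that for every round $s$ with matching $\M^{(s)}$: (1) no token's height ever increases during the moving or shuffling step, and (2) a token currently at node $u$ moves to any other node $v$ with probability exactly $\M_{u,v}^{(s)}$. Independence of the token's trajectory from those of other tokens (needed for~(2)) follows because the matchings $(\M^{(s)})$ are treated as fixed, so a single token's location evolves as a time-inhomogeneous Markov chain with transition matrix $\M^{(s)}$ at step~$s$, giving the product formula $\M_{u,v}^{[t_1+1,t_2]}$.

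For the non-increase of heights~(1), fix a round, an edge $\{u,v\}$ with loads $a := X_u^{(t-1)} \geq b := X_v^{(t-1)}$, and let $k = \lceil (a-b)/2 \rceil$. In the moving step the tokens on $u$ at heights $a-k+1,\ldots,a$ are moved to $v$ where they receive heights $b+1,\ldots,b+k$; since $b+k = \lceil (a+b)/2 \rceil \leq a$, each such height strictly decreases (or stays the same). Tokens remaining on $u$ keep their heights, and tokens already on $v$ are untouched. In the shuffling step a swap between siblings at the same height preserves each token's height; the only other move is the possible transfer of the excess topmost token at $v$ to $u$, where its new height equals its old one (the stack below at $u$ has exactly one fewer token). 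Hence heights are monotone non-increasing, and iterating over $s = t_1+1,\ldots,t_2$ yields~(1).

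For~(2), fix a round $s$ and a token $i$ at node $u$. If $u$ is unmatched, the token stays at $u$ and $\M^{(s)}_{u,u}=1$, matching the claim. Otherwise, assume $u$ is matched to $v$ with $a\ge b$ (the case $a<b$ is symmetric). I will case-split on the token's height $h$ at the start of the round. If $h \leq a-k$, the token stays on $u$ after the moving step; its sibling on $v$ at height $h$ exists because $h \leq a-k = \lfloor (a+b)/2 \rfloor \leq \lceil (a+b)/2\rceil = b+k$, so the shuffling step swaps them with probability $1/2$, placing the token at $u$ or $v$ each with probability $1/2$. If $h > a-k$, the token moves to $v$ in the moving step; in the shuffling step it either has a sibling at $u$ (and swaps with probability $1/2$) or is the excess token at $v$ (and migrates to $u$ with probability $1/2$); either way it ends at $u$ or $v$ with probability $1/2$. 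This matches $\M^{(s)}_{u,u} = \M^{(s)}_{u,v} = 1/2$.

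The main obstacle is the careful bookkeeping of heights across the two cases of the shuffling step (sibling present vs.~excess token), in particular verifying that the excess token case is consistent with the same $1/2$–$1/2$ split. Once this one-round claim is established, composing transition matrices along the fixed matching sequence gives the formula $\M_{u,v}^{[t_1+1,t_2]}$ and completes the proof.
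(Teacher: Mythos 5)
Your proposal is correct and follows essentially the same route as the paper: a direct verification that the moving and shuffling steps never increase a token's height, and an inductive/compositional argument that each round yields an exact $1/2$--$1/2$ split between the two matched endpoints regardless of the token's height or sibling status, so the one-step transition matrix is $\M^{(s)}$ and the product formula follows. Your case analysis for the excess token and for the sibling-existence conditions is accurate and, if anything, slightly more explicit than the paper's.
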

\begin{proof}
    The height of a token can only change in the moving step of a round $t$. This only happens if the token is on one endpoint of an edge $[u:v] \in \M^{(t)}$ with $x_{u}^{(t-1)} \ge x_{v}^{(t-1)}$. Then tokens on node $u$ at height $x_{v}^{(t-1)}+\lceil \frac{x_{u}^{(t-1)} - x_{v}^{(t-1)}}{2}\rceil +i$ for some $i>0$ will decrease their height by $\lceil \frac{x_{u}^{(t-1)} - x_{v}^{(t-1)}}{2}\rceil$ and move to node $v$. No other token on $u$ or $v$ will change its height during the moving step, and hence the first statement follows.

    Similar to the proof of \cite[Lemma~4.1]{DBLP:conf/focs/SauerwaldS12}, we prove the second statement by induction over $t \in [t_1+1,t_2]$, that is, for all $u, v \in V$, we have
    $\Proco{W_i^{(t)}=v}{w_i^{(t_1)}=u} = \M_{u,v}^{[t_1+1,t]}$. For $t=t_1+1$, $\M^{[t_1+1,t]}$ is the identity matrix, which means that the induction base holds. For the induction hypothesis, consider $\Proco{W_i^{(t)}=v}{w_i^{(t_1)}=u}$. If $v$ is not part of a matching in round $t$, then 
    \[
    \Proco{W_i^{(t)}=v}{w_i^{(t_1)}=u} = \Proco{W_i^{(t-1)}=v}{w_i^{(t_1)}=u} \stackrel{(\star)}{=} \M_{u,v}^{[t_1+1,t-1]}= \M_{u,v}^{[t_1,t]},
    \]
    where $(\star)$ used the induction hypothesis.
    If $v$ is matched with a node $w$ in round $t$, then by definition of the height-sensitive process the token $i$ has a probability of $\frac{1}{2}$ of reaching $v$ from either $v$ or $w$, and hence
    \begin{align*}
         \Proco{W_i^{(t)}=v}{w_i^{(t_1)}=u} &= 
        \Proco{W_i^{(t-1)}=v}{w_i^{(t_1)}=u} \cdot \frac{1}{2}
         +  \Proco{W_i^{(t-1)}=w}{w_i^{(t_1)}=u} \cdot \frac{1}{2} \\
         &\stackrel{(\star)}{=} 
         \M_{u,v}^{[t_1,t]} \cdot \frac{1}{2}
         + \M_{u,w}^{[t_1,t]} \cdot \frac{1}{2} \\
         &= \M_{u,v}^{[t_1,t]} \cdot \M_{v,v}^{(t+1)}
         + \M_{u,w}^{[t_1,t]} \cdot \M_{w,v}^{(t+1)} \\
         &= \M_{u,v}^{[t_1,t+1]},
    \end{align*}
 where $(\star)$ used the induction hypothesis.
\end{proof}

\subsection{Tail Bounds}

For completeness we state the following tail bound that we use in our proofs.

\begin{theorem}[cf.~{\cite[page 92, Theorem 4.16, Azuma's Inequality]{DBLP:books/cu/MotwaniR95}}]
\label{thm:MethodOfBoundedDifferences}
Let $X_0, X_1, \ldots$ be a martingale sequence such that for each $k$, $|X_k - X_{k-1}| \le c_k$, where $c_k$ may depend on $k$. Then, for all $t\ge 0$ and any $\lambda>0$,
\[\Pro{\lvert X_t-X_0 \rvert \ge \lambda} \le 2\cdot \exp\left(- \frac{\lambda^2}{2\sum_{k=1}^t c_{k}^2}\right).\]
\end{theorem}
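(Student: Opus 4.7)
The plan is to apply the exponential moment method (a standard Chernoff-style argument), since we have a sum of bounded martingale differences. Write $D_k := X_k - X_{k-1}$; by the martingale property $\E[D_k \mid X_0,\ldots,X_{k-1}] = 0$ and by hypothesis $|D_k| \le c_k$ almost surely. The core technical ingredient I would establish first is Hoeffding's lemma: for any centered random variable $Y$ with $Y \in [-c,c]$ almost surely, $\E[e^{sY}] \le \exp(s^2 c^2/2)$ for every $s \in \R$. This follows by writing $e^{sy}$ as a convex combination on $[-c,c]$ via the identity $y = \tfrac{c+y}{2c}\cdot c + \tfrac{c-y}{2c}\cdot (-c)$, taking expectations (using $\E[Y]=0$) to reduce the task to bounding $\tfrac{1}{2}(e^{sc}+e^{-sc})$, and verifying the resulting one-variable inequality $\cosh(x) \le e^{x^2/2}$ by Taylor expansion.

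Next I would apply Hoeffding's lemma conditionally to each $D_k$ and iterate using the tower rule. Letting $\F^{(k)} := \sigma(X_0,\ldots,X_k)$, we have $\E[e^{sD_k} \mid \F^{(k-1)}] \le e^{s^2 c_k^2/2}$, so
\[
\E\!\left[ e^{s(X_t-X_0)} \right] = \E\!\left[ e^{s\sum_{k=1}^{t-1}D_k} \cdot \E\!\left[ e^{sD_t} \,\middle|\, \F^{(t-1)} \right] \right] \le e^{s^2 c_t^2/2} \cdot \E\!\left[ e^{s(X_{t-1}-X_0)} \right],
\]
and by induction on $t$ we obtain $\E[e^{s(X_t-X_0)}] \le \exp\!\bigl( \tfrac{s^2}{2} \sum_{k=1}^t c_k^2 \bigr)$ for every $s \in \R$.

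Markov's inequality then yields $\Pr[X_t - X_0 \ge \lambda] \le \exp\!\bigl(-s\lambda + \tfrac{s^2}{2} \sum_{k=1}^t c_k^2\bigr)$ for any $s>0$; optimizing in $s$ by choosing $s := \lambda / \sum_{k=1}^t c_k^2$ produces the one-sided tail $\exp\!\bigl(-\lambda^2/(2\sum_{k=1}^t c_k^2)\bigr)$. Applying the same argument to the martingale $(-X_k)_{k\ge 0}$ (which satisfies the same difference bounds) controls the lower tail, and a union bound over the two events delivers the claimed two-sided inequality with the factor of $2$ out front.

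The statement is a textbook concentration inequality (the paper itself only invokes it via a citation), so no real obstacle is anticipated. The only non-routine step is Hoeffding's lemma, and that reduces to an elementary one-variable exponential inequality; the rest is the routine Chernoff-bound template combined with the tower property for conditional expectations.
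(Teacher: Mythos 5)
Your proof is correct and is the standard argument for Azuma's inequality; the paper itself does not prove this statement but only cites it from Motwani--Raghavan, where essentially the same exponential-moment/Hoeffding-lemma argument is given. All steps (the conditional Hoeffding lemma, the tower-rule induction, the Chernoff optimization $s=\lambda/\sum_{k=1}^t c_k^2$, and the union bound for the two tails) are sound.
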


\begin{lemma}[cf.~{\cite[Proposition~13.2.6]{Rosenthal2006}}]\label{lem:take-out-what-is-known}
Let $X$ and $Y$ be random variables, and let $\mathfrak{F}$ be a sub-$\sigma$-algebra. Suppose that $\Ex{X}$ and $\Ex{XY}$ are finite, and furthermore that $X$ is $\mathfrak{F}$-measurable. Then with probability~$1$,
\[ \Ex{X \cdot Y~\Big|~\mathfrak{F}} = X \cdot \Ex{Y~\Big|~\mathfrak{F}}.
\]
\end{lemma}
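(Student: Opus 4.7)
The plan is to verify that $X \cdot \Ex{Y \mid \mathfrak{F}}$ satisfies the two defining properties of $\Ex{XY \mid \mathfrak{F}}$: namely, $\mathfrak{F}$-measurability and the integral-matching property $\int_A X \cdot \Ex{Y \mid \mathfrak{F}} \, d\Pr = \int_A XY \, d\Pr$ for every $A \in \mathfrak{F}$. Measurability is immediate, since $X$ is $\mathfrak{F}$-measurable by hypothesis and $\Ex{Y \mid \mathfrak{F}}$ is $\mathfrak{F}$-measurable by definition. So the real content is to establish the integral identity, after which uniqueness (up to $\Pr$-null sets) of conditional expectations gives the claim. I would prove the identity by the standard measure-theoretic bootstrapping sometimes called the ``standard machine''.

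First I would handle the base case $X = \1_B$ with $B \in \mathfrak{F}$. For any $A \in \mathfrak{F}$, since $A \cap B \in \mathfrak{F}$, the definition of $\Ex{Y \mid \mathfrak{F}}$ gives
\[
\int_A \1_B \cdot \Ex{Y \mid \mathfrak{F}} \, d\Pr \;=\; \int_{A \cap B} \Ex{Y \mid \mathfrak{F}} \, d\Pr \;=\; \int_{A \cap B} Y \, d\Pr \;=\; \int_A \1_B \cdot Y \, d\Pr,
\]
which is the required identity. By linearity of both sides in $X$, the identity then extends to simple $\mathfrak{F}$-measurable $X$.

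Next I would pass to non-negative $\mathfrak{F}$-measurable $X$ by approximating $X$ from below with an increasing sequence $(X_n)$ of simple $\mathfrak{F}$-measurable functions, $X_n \uparrow X$. Assuming first that $Y \geq 0$, the monotone convergence theorem applied on both sides (note that $\Ex{Y \mid \mathfrak{F}} \geq 0$ a.s.\ when $Y \geq 0$) yields the identity for non-negative $X$ and non-negative $Y$. The final step is to remove the sign restrictions by writing $X = X^+ - X^-$ and $Y = Y^+ - Y^-$, expanding $XY$ into four products, and applying the non-negative case to each; the finiteness hypotheses $\Ex{|X|} < \infty$ and $\Ex{|XY|} < \infty$ (implicit in $\Ex{X}$ and $\Ex{XY}$ being finite, up to passing through absolute values) ensure each of the four conditional expectations is a.s.\ finite, so the subtraction is well-defined almost surely.

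The main technical obstacle is the integrability bookkeeping in the last step: one must make sure that $X \cdot \Ex{Y \mid \mathfrak{F}}$ is a.s.\ finite and that the four partial identities can be legitimately recombined without running into $\infty - \infty$. The cleanest way to navigate this is to first establish the identity unconditionally for $X, Y \geq 0$ in the extended sense (allowing $+\infty$), and then argue that under $\Ex{|XY|} < \infty$ the relevant conditional expectations are finite a.s.\ by the conditional Jensen / tower property $\Ex{\,\Ex{|XY| \mid \mathfrak{F}}\,} = \Ex{|XY|} < \infty$. Once this finiteness is in hand, linearity of conditional expectation over integrable summands recombines the four non-negative identities into the claimed equality, completing the proof.
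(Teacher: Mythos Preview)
Your proof is correct and follows the standard ``simple functions $\to$ non-negative $\to$ general'' route for this classical identity. Note, however, that the paper does not actually prove this lemma: it is stated in the appendix as a known tool and attributed to Rosenthal's textbook (Proposition~13.2.6) without argument. So there is no paper proof to compare against; your write-up supplies exactly the kind of standard measure-theoretic justification one would expect from the cited reference.
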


\subsection{Basic Results from \cite{DBLP:conf/focs/SauerwaldS12}}

In this appendix we list basic results from \cite{DBLP:conf/focs/SauerwaldS12} that we use in our analysis.

\begin{lemma}[cf.~{\cite[Lemma 2.4]{DBLP:conf/focs/SauerwaldS12}}]\label{lem:prob:BalancingCircuit}
Consider the balancing circuit model with sequence of matchings $\left(\M^{(s)}\right)_{s=1}^{\infty}$. Let $\lambda:=\lambda(\M^{[1,\Delta])}$ Then for any node $u\in V$ it holds
\[
\left\|\M_{u,.}^{[1,t\cdot \Delta]} - \vec{ \frac{1}{n}} \right\|_2^2 \le \lambda^{t}.
\]
\end{lemma}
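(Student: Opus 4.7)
The plan is to recognize $\M^{[1,t\Delta]}$ as the $t$-th power of the round matrix $A := \M^{[1,\Delta]}$ (by periodicity of the balancing circuit sequence) and then run a standard spectral decay argument for doubly stochastic matrices, adjusted to cope with the fact that $A$ need not be symmetric. I would write $J := \frac{1}{n}\vec{1}\vec{1}^T$, and note that $A$ is doubly stochastic, so $A\vec{1} = \vec{1}$ and $\vec{1}^T A = \vec{1}^T$, which yields $AJ = JA = J$ and $J^2 = J$. A quick induction then shows $(A-J)^t = A^t - J$ for every $t \geq 1$.

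With that identity in hand, I would rewrite
\[
\M_{u,\cdot}^{[1,t\Delta]} - \vec{\frac{1}{n}} \;=\; e_u^T(A^t - J) \;=\; e_u^T(A-J)^t,
\]
and bound
\[
\left\| \M_{u,\cdot}^{[1,t\Delta]} - \vec{\frac{1}{n}} \right\|_2 \;=\; \left\| \bigl((A-J)^T\bigr)^t e_u \right\|_2 \;\leq\; \|A-J\|_{\mathrm{op}}^t,
\]
using $\|e_u\|_2 = 1$, sub-multiplicativity of the operator norm, and the equality of the operator norms of a matrix and its transpose.

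It then remains to identify $\|A-J\|_{\mathrm{op}}^2$ with $\lambda := \lambda(A) = \lambda_2(AA^T)$. Expanding $(A-J)^T(A-J) = A^TA - A^TJ - JA + J^2 = A^TA - J$, I would observe that $A^TA - J$ annihilates $\vec{1}$ (since $A^TA\vec{1} = \vec{1}$) while agreeing with the PSD matrix $A^TA$ on $\vec{1}^\perp$. Its largest eigenvalue is therefore the second-largest eigenvalue of $A^TA$, namely $\sigma_2(A)^2 = \lambda_2(AA^T) = \lambda$. Squaring the previous bound yields $\left\| \M_{u,\cdot}^{[1,t\Delta]} - \vec{\frac{1}{n}} \right\|_2^2 \leq \lambda^t$, as claimed.

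The only real subtlety lies in the non-symmetry of $A$: while each factor $\M^{(s)}$ is symmetric, their product typically is not, so we are forced to work with singular values of $A$ (equivalently, eigenvalues of $AA^T$) rather than eigenvalues of $A$ itself. This matches the paper's definition of $\lambda(\cdot)$ via $\widetilde{\M} = \M\M^T$ for non-symmetric $\M$, and it explains why the decay is $\lambda^t = \sigma_2(A)^{2t}$ rather than some power of $|\lambda_2(A)|$.
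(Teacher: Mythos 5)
Your proof is correct. Note that the paper itself does not prove this lemma at all: it is stated in the appendix as an import, citing \cite[Lemma~2.4]{DBLP:conf/focs/SauerwaldS12}, so there is no in-paper proof to compare against. Your argument is the standard spectral-decay derivation that underlies the cited result: all the key steps check out --- periodicity gives $\M^{[1,t\Delta]}=A^t$ with $A=\M^{[1,\Delta]}$; double stochasticity gives $AJ=JA=J$ and hence $(A-J)^t=A^t-J$; $(A-J)^T(A-J)=A^TA-J$ kills $\vec{1}$ and agrees with $A^TA$ on $\vec{1}^\perp$, so $\|A-J\|_{\mathrm{op}}^2=\lambda_2(A^TA)=\lambda_2(AA^T)=\lambda$ (using that $AA^T$ is PSD, so the paper's $\max\{|\lambda_2|,|\lambda_n|\}$ reduces to $\lambda_2$); and submultiplicativity of the operator norm together with $\|e_u\|_2=1$ yields the claimed $\lambda^t$ after squaring. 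You are also right that the non-symmetry of $A$ is the one point requiring care, and your handling of it via singular values matches exactly how the paper defines $\lambda(\cdot)$ for non-symmetric matrices. The only cosmetic remark is that the paper declares $\vec{1}$ to be a row vector, so $J=\frac{1}{n}\vec{1}^T\vec{1}$ in that convention, but this does not affect the argument.
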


\begin{lemma}[{cf.~\cite[Corollary 2.7]{DBLP:conf/focs/SauerwaldS12}}]\label{cor:prob:randomMatcing}
Consider the random matching model with sequence of matchings $\left(\M^{(s)}\right)_{s=1}^{\infty}$. Then for any node $u\in V$ and any round $t \geq 1$ we have
\[
\Pro{ \left\| \M_{u,.}^{[1,t]} -\vec{ \frac{1}{n}} \right\|_2^2 \le e^{-\frac{p_{\min}\cdot \Delta}{2}\cdot (1-\lambda(\P))\cdot t}} \ge 1-e^{-\frac{p_{\min}\cdot \Delta}{2}\cdot (1-\lambda(\P))\cdot t},
\]
\end{lemma}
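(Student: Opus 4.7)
\textbf{Proof Proposal for \cref{cor:prob:randomMatcing}.}

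The plan is to track the quadratic potential $\psi(t) := \|\M_{u,\cdot}^{[1,t]} - \vec{1/n}\|_2^2$ along the random matching process and show that it contracts in expectation by a factor of roughly $1 - p_{\min}\Delta(1-\lambda(\P))$ per round. Combined with Markov's inequality, this will immediately yield the stated tail bound. Note that $\psi(0) \le 1$ since $\M^{[1,0]}$ is the identity matrix, so $\M_{u,\cdot}^{[1,0]}$ is a unit indicator on $u$.

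First I would compute the one-step expected decrease of $\psi$. Conditioning on the matchings $(\M^{(s)})_{s=1}^{t-1}$ and applying the same potential-drop calculation that appears inside the proof of \cref{lem:theorem32_new} (with the vector $a$ taken to be the $u$-th unit vector, and reading the recursion ``backwards'' from time $t-1$ to time $t$), one obtains
\[
\psi(t-1) - \E\bigl[\psi(t) \,\big|\, (\M^{(s)})_{s=1}^{t-1}\bigr] \;=\; \tfrac{1}{2}\sum_{[x:y]\in E} \Pr\!\bigl[[x:y]\in\M^{(t)}\bigr]\cdot \bigl(\M_{u,x}^{[1,t-1]} - \M_{u,y}^{[1,t-1]}\bigr)^2,
\]
where the expectation is over the random choice of $\M^{(t)}$. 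Using the hypothesis $\Pr[[x:y]\in\M^{(t)}]\ge p_{\min}$ for every edge $[x:y]\in E$ and recognizing the Laplacian quadratic form, this lower bound becomes $\tfrac{p_{\min}}{2}\cdot y^{\mathsf T} L\, y$, where $y := \M_{u,\cdot}^{[1,t-1]}$ and $L$ is the unweighted Laplacian of $G$.

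Next I would invoke the spectral identity $L = 2\Delta(I - \P)$ to rewrite the drop as $p_{\min}\Delta\cdot y^{\mathsf T}(I-\P)y$. Because $\M^{[1,t-1]}$ is doubly stochastic, $y$ has mean $1/n$, so $y - \vec{1/n}$ is orthogonal to the top eigenvector of $\P$, and therefore $y^{\mathsf T}(I-\P)y \ge (1-\lambda(\P))\cdot \|y-\vec{1/n}\|_2^2 = (1-\lambda(\P))\cdot \psi(t-1)$. Combining,
\[
\E\bigl[\psi(t) \,\big|\, (\M^{(s)})_{s=1}^{t-1}\bigr] \;\le\; \bigl(1 - p_{\min}\Delta(1-\lambda(\P))\bigr)\cdot \psi(t-1).
\]
Iterating this contraction over $t$ rounds and using $\psi(0)\le 1$ and $1-x\le e^{-x}$ gives $\E[\psi(t)] \le \exp\!\bigl(-p_{\min}\Delta(1-\lambda(\P))\cdot t\bigr)$. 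A direct application of Markov's inequality with threshold $\exp\!\bigl(-\tfrac{p_{\min}\Delta(1-\lambda(\P))}{2}\cdot t\bigr)$ then yields the claimed bound.

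The main obstacle is the first step: making the backward recursion argument precise for a single row $\M_{u,\cdot}^{[1,t]}$ written as a product $\M_{u,\cdot}^{[1,t-1]}\cdot \M^{(t)}$, and verifying the identity for the one-step expected drop when only a lower bound $p_{\min}$ on the per-edge inclusion probability is available (as opposed to exact probabilities). The clean factorization uses that matched pairs satisfy $\M_{u,x}^{[1,t]} = \M_{u,y}^{[1,t]} = (\M_{u,x}^{[1,t-1]} + \M_{u,y}^{[1,t-1]})/2$, which contributes exactly $\tfrac{1}{2}(\M_{u,x}^{[1,t-1]} - \M_{u,y}^{[1,t-1]})^2$ to the drop for that edge; summing over edges with weight equal to the inclusion probability then gives the expected drop, and monotonicity of the contribution in the inclusion probability lets us substitute the $p_{\min}$ lower bound.
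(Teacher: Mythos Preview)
The paper does not supply its own proof of this lemma; it is listed in the appendix under ``Basic Results from \cite{DBLP:conf/focs/SauerwaldS12}'' and simply quoted from \cite[Corollary~2.7]{DBLP:conf/focs/SauerwaldS12}, with only the remark that the original constraint $p_{\min}=\Omega(1/\Delta)$ can be dropped. Your argument---one-step expected contraction of the quadratic potential $\psi(t)=\|\M_{u,\cdot}^{[1,t]}-\vec{1/n}\|_2^2$ via the Laplacian identity $L=2\Delta(I-\P)$ and the spectral gap, then Markov's inequality applied to $\E[\psi(t)]\le e^{-p_{\min}\Delta(1-\lambda(\P))t}$---is correct and is precisely the standard proof underlying the cited corollary. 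Your derivation also nowhere invokes $p_{\min}=\Omega(1/\Delta)$, which confirms the paper's remark that this hypothesis is unnecessary.
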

We remark that in contrast to \cite[Corollary 2.7]{DBLP:conf/focs/SauerwaldS12}, we have dropped the constraint that $p_{\min}=\Omega(\frac{1}{\Delta})$ as it is easy to see that the proof works without this constraint.
\begin{theorem}[cf.~{\cite[Theorem 2.9]{DBLP:conf/focs/SauerwaldS12}}]\label{tauCount:RM}
Let $G$ be any graph with maximum degree $\Delta$ and consider the random matching model. Then with probability at least $1-n^{-1}$ the sequence of matchings $\left(\M^{(s)}\right)_{s=1}^t$ is $(K,1/(2n))$-smoothing for
\[
t:= \frac{8}{\Delta\cdot p_{\min}} \cdot \frac{1}{1-\lambda(\P)}\cdot \log\left(4Kn^2\right).
\]

\end{theorem}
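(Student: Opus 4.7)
The plan is to derive \cref{tauCount:RM} as a direct corollary of the per-row $\ell_2$-contraction in \cref{cor:prob:randomMatcing}, combined with a union bound over nodes and a Cauchy--Schwarz passage from the $\ell_2$-norm of a row of $\M^{[1,t]}$ (minus the uniform matrix) to the $\ell_\infty$-norm of the continuous load deviation. The heavy lifting is done by the spectral estimate in \cref{cor:prob:randomMatcing}; everything else is bookkeeping.

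First, I would instantiate \cref{cor:prob:randomMatcing} at the specified $t$. Writing $\alpha := p_{\min}\cdot\Delta\cdot(1-\lambda(\P))/2$, the chosen value $t = 8\log(4Kn^2)/(\Delta\cdot p_{\min}\cdot(1-\lambda(\P)))$ gives $\alpha t = 4\log(4Kn^2)$, so for every fixed $u\in V$,
\[
\Pr\!\left[\bigl\|\M_{u,\cdot}^{[1,t]}-\vec{1/n}\bigr\|_2^2 \le (4Kn^2)^{-4}\right] \ge 1-(4Kn^2)^{-4}.
\]
A union bound over the $n$ starting nodes then leaves a failure probability of at most $n\cdot(4Kn^2)^{-4}\ll 1/n$. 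On the complementary event every row of $\M^{[1,t]}-\vec{1/n}\,\vec{1}$ has squared $\ell_2$-norm at most $(4Kn^2)^{-4}$, so using the elementary identity $\sum_u\|(\M^{[1,t]})_{u,\cdot}-\vec{1/n}\|_2^2=\sum_v\|(\M^{[1,t]})_{\cdot,v}-\vec{1/n}\|_2^2$, each column $v$ also satisfies $\|\M_{\cdot,v}^{[1,t]}-\vec{1/n}\|_2^2 \le n(4Kn^2)^{-4}$.

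Next, I would upgrade these matrix bounds to a discrepancy bound on the continuous process. Fix any $x^{(0)}$ with $\discr(x^{(0)})\le K$ and set $x^{(t)} := x^{(0)}\cdot \M^{[1,t]}$. Since $\M^{[1,t]}$ is doubly stochastic, for every node $u$,
\[
x_u^{(t)}-\bar{x}=\sum_{v\in V}\bigl(x_v^{(0)}-\bar{x}\bigr)\bigl(\M_{v,u}^{[1,t]}-1/n\bigr),
\]
and Cauchy--Schwarz together with $\|x^{(0)}-\bar{x}\vec{1}\|_2^2\le nK^2$ (which follows from $|x_v^{(0)}-\bar x|\le K$) gives
\[
|x_u^{(t)}-\bar{x}|^2 \le nK^2\cdot \bigl\|\M_{\cdot,u}^{[1,t]}-\vec{1/n}\bigr\|_2^2 \le n^2K^2(4Kn^2)^{-4}\le \frac{1}{(4n)^2}.
\]
Hence $|x_u^{(t)}-\bar{x}|\le 1/(4n)$ for every $u$, and $\discr(x^{(t)})\le 1/(2n)$, which is precisely $(K,1/(2n))$-smoothing by \cref{def:smoothing}.

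The only step that is not entirely routine is the row-to-column passage in the first paragraph; the path above goes through the equality of row-sum and column-sum squared Frobenius contributions, which is slightly lossy. A cleaner alternative is to exploit that in the random matching model the i.i.d.\ sequence $(\M^{(s)})_{s=1}^{t}$ of symmetric matchings has the same joint law when reversed, so $\M^{[1,t]}$ and its transpose are equal in distribution and the per-column bound $(4Kn^2)^{-4}$ follows directly from \cref{cor:prob:randomMatcing} applied to the reversed walk. Either route finishes the proof with the stated probability at least $1-1/n$.
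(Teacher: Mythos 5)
Your proof is correct and follows essentially the route the paper intends: instantiate \cref{cor:prob:randomMatcing} at the stated $t$ (giving the per-row bound $(4Kn^2)^{-4}$), union bound over the $n$ rows, and convert the row-wise $\ell_2$ bounds into $(K,1/(2n))$-smoothing --- this last conversion is exactly \cref{obs:tauGtauCRelation}, which you could cite directly instead of re-deriving it via the Frobenius/column-norm detour. The only caveat is your time-reversal aside: the random matching model assumes independence but not identical distribution of the matchings across rounds, so $\M^{[1,t]}$ and its transpose need not be equal in law; fortunately that alternative is not needed, since your main argument (and \cref{obs:tauGtauCRelation}) already closes the proof.
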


Next observation bounds the tail of the lower gap by that of upper gap.
\begin{observation}[cf.~{\cite[Observation 2.12]{DBLP:conf/focs/SauerwaldS12}}]
\label{obs:upLowDiscRelation} 
Assume
$\discr(x^{(0)})=K$.
Fix a sequence of matchings $\left(\M^{(s)}\right)_{s=1}^{\infty}$.
Then for arbitrary positive integers $\alpha$ and $t$ it holds that
\[\max_{\substack{y\in \Z^n:\\ \discr(y)\le K}}\left\{\Pro{X_{\min}^{(t)} \le \lfloor \overline{x} \rfloor - \alpha ~\Big|~ x^{(0)}=y}\right\} \le \max_{\substack{y\in \Z^n:\\ \discr(y)\le K}}\left\{\Pro{X_{\max}^{(t)} \ge \lfloor \overline{x} \rfloor + \alpha ~\Big|~ x^{(0)}=y}\right\}. \]
\end{observation}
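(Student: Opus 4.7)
The plan is to exploit the symmetry of the balancing rule under the involution $y \mapsto C \cdot \vec{1} - y$ for a suitably large integer constant $C$. Intuitively, negating the loads swaps the roles of maximum and minimum, and the randomized rounding step is invariant under this flip (if $a+b$ is odd, the excess token goes to each of the two matched nodes with probability $1/2$; the same holds for $(C-a)+(C-b)$).

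Fix an arbitrary $y \in \Z^n$ with $\discr(y) \leq K$ achieving (or approximating) the maximum on the left-hand side. Set $C := \max_{i} y_i$ (any integer with $C \geq \max_i y_i$ would do) and define $y' := C \cdot \vec{1} - y \in \Z^n$. Then $\discr(y') = \discr(y) \leq K$, so $y'$ is a feasible candidate on the right-hand side, and $\overline{y'} = C - \overline{y}$, which gives $\lfloor \overline{y'} \rfloor = C - \lceil \overline{y} \rceil$ since $C$ is an integer.

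Next I would set up the coupling. Run the height-sensitive process from $y'$ using exactly the same matching sequence and the same random shuffling bits as when running it from $y$. I claim that the resulting load vectors satisfy $(X')^{(t)} = C \cdot \vec{1} - X^{(t)}$ for all $t \geq 0$, which can be verified by a straightforward induction on $t$: the key observation is that on any matched edge $\{u,v\}$ with loads $a,b$ in the $X$-process, the corresponding loads in the $X'$-process are $C-a, C-b$; since $\lceil C-(a+b)/2 \rceil = C - \lfloor (a+b)/2 \rfloor$ and $\lfloor C - (a+b)/2 \rfloor = C - \lceil (a+b)/2 \rceil$, the coupling preserves the relation edge-by-edge.

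In particular $(X')_{\max}^{(t)} = C - X_{\min}^{(t)}$ pointwise under the coupling, so
\[
\Pro{(X')_{\max}^{(t)} \geq \lfloor \overline{y'} \rfloor + \alpha \cond x^{(0)} = y'} = \Pro{X_{\min}^{(t)} \leq \lceil \overline{y} \rceil - \alpha \cond x^{(0)} = y} \geq \Pro{X_{\min}^{(t)} \leq \lfloor \overline{y} \rfloor - \alpha \cond x^{(0)} = y},
\]
where the last inequality is trivial because $\lceil \overline{y} \rceil \geq \lfloor \overline{y} \rfloor$. Since $y'$ is feasible for the maximum on the right-hand side, this gives the desired inequality for the particular $y$, and taking the supremum over $y$ on the left completes the proof. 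No step here is really an obstacle; the only mild care needed is the floor-versus-ceiling shift, which is absorbed by the monotonicity of the cumulative distribution function of $X_{\min}^{(t)}$.
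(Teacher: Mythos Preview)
The paper does not actually prove this observation; it merely cites \cite[Observation~2.12]{DBLP:conf/focs/SauerwaldS12}. The relevant ingredient the paper does supply is \cref{obs:flip}, and your approach is exactly the right one: exploit the flip $y \mapsto C\cdot\vec{1}-y$ and then absorb the floor/ceiling mismatch by monotonicity. Your handling of $\lfloor \overline{y'} \rfloor = C - \lceil \overline{y} \rceil$ and the final inclusion of events is correct.

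There is, however, a small but real slip in the coupling. Using the \emph{same} shuffling bits does \emph{not} give the pointwise identity $(X')^{(t)} = C\cdot\vec{1} - X^{(t)}$. If the bit says ``give the ceiling to $u$'' in both processes, then $X_u = \lceil (a+b)/2 \rceil$ while $X'_u = \lceil (2C-a-b)/2 \rceil = C - \lfloor (a+b)/2 \rfloor \neq C - X_u$ when $a+b$ is odd. The correct coupling flips the orientation bits, exactly as in the paper's \cref{obs:flip} (where $\widetilde{\Phi}_{u,v}^{(t)} = -\Phi_{u,v}^{(t)}$). This does not damage your conclusion: since the orientations are i.i.d.\ Rademacher, flipping them yields a process with the same law, so the distributional equality $(X')^{(t)} \stackrel{d}{=} C\cdot\vec{1} - X^{(t)}$ holds regardless, and that is all you need. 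Just replace ``same random shuffling bits'' by ``opposite random shuffling bits'' (or argue distributionally) and the proof goes through.
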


Occasionally, we may add/subtract the same number of tokens to/from each node.
\begin{observation}[cf.~{\cite[Observation 2.11]{DBLP:conf/focs/SauerwaldS12}}]\label{obs:subAdd}
Fix a sequence of matchings $\left(\M^{(s)}\right)_{s=1}^{\infty}$.
Consider two executions of the discrete load balancing protocol with the same matchings and the same random choices for the excess tokens but with different initial load vectors $x^{(0)}$ and $\widetilde{x}^{(0)}$. Then the following two statements hold:
\begin{enumerate}
\item If $\widetilde{x}^{(0)}=x^{(0)}+\alpha\cdot \1$ for some $\alpha\in \Z$, then $\widetilde{X}^{(t)}=X^{(t)} + \alpha\cdot \1$ for all $t\ge 1$.
\item If $x_u^{(0)} \le \widetilde{x}_u^{(0)}$ for all $u\in V$, then $X_u^{(t)} \le \widetilde{X}_u^{(t)}$ for all $u\in V$ and $t\ge 1$.
\end{enumerate}
\end{observation}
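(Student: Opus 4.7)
The plan is to prove both statements by induction on $t$. At $t=0$ each claim holds by hypothesis, so it suffices to show that one step of the balancing process preserves the property. Since the update affects only matched pairs of nodes (leaving any unmatched node's load unchanged), and since the process acts independently on distinct edges of the matching $\M^{(t)}$, I would reduce to verifying the property for a single matched edge $\{u,v\}$, under the coupling in which the two executions use the same orientation bit for the excess token on this edge.

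For statement 1, suppose that at the start of round $t$ we already have $\widetilde{X}^{(t-1)} = X^{(t-1)} + \alpha\cdot\1$, and let $a := X_u^{(t-1)}$, $b := X_v^{(t-1)}$. Then $\widetilde{X}_u^{(t-1)} + \widetilde{X}_v^{(t-1)} = a+b+2\alpha$, and because $\alpha$ is an integer,
\[
\left\lceil \tfrac{a+b+2\alpha}{2}\right\rceil = \left\lceil \tfrac{a+b}{2}\right\rceil + \alpha, \qquad \left\lfloor \tfrac{a+b+2\alpha}{2}\right\rfloor = \left\lfloor \tfrac{a+b}{2}\right\rfloor + \alpha.
\]
Since the coupled random orientation assigns the (possible) excess token to the same endpoint in both executions, the new loads satisfy $\widetilde{X}_u^{(t)} = X_u^{(t)} + \alpha$ and $\widetilde{X}_v^{(t)} = X_v^{(t)} + \alpha$, closing the induction.

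For statement 2, assume inductively $X_w^{(t-1)} \leq \widetilde{X}_w^{(t-1)}$ for all $w$, and fix a matched edge $\{u,v\}$. Put $s := X_u^{(t-1)} + X_v^{(t-1)}$ and $\widetilde{s} := \widetilde{X}_u^{(t-1)} + \widetilde{X}_v^{(t-1)}$, so $s \leq \widetilde{s}$. The plan is a case split on the parities of $s$ and $\widetilde{s}$. If $s \equiv \widetilde{s} \pmod{2}$, both executions use the coupled orientation in the same way (or both ignore it), and the resulting loads on $u$ and $v$ are $\lfloor s/2\rfloor,\lceil s/2\rceil$ versus $\lfloor \widetilde{s}/2\rfloor,\lceil \widetilde{s}/2\rceil$ assigned to matching endpoints, from which $X_u^{(t)} \le \widetilde{X}_u^{(t)}$ and $X_v^{(t)} \le \widetilde{X}_v^{(t)}$ follow immediately from $s\leq\widetilde{s}$. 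If the parities differ, only one of $s,\widetilde{s}$ produces an excess token; here one checks the four sub-cases by writing the new loads explicitly.

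The one place that needs attention — and what I would flag as the mild technical obstacle — is the parity-mismatch subcase of statement 2. There the rounding gap is $\pm 1/2$ per endpoint, so naively $s \leq \widetilde{s}$ only is not enough to rule out a reversal of the inequality at one endpoint. The key observation that unlocks all such subcases is that $s \leq \widetilde{s}$ combined with $s \not\equiv \widetilde{s}\pmod 2$ forces $s \leq \widetilde{s} - 1$, giving an integer unit of slack that absorbs any $\pm 1/2$ rounding discrepancy regardless of how the coupled excess orientation falls. Once this is in place, each of the four parity sub-cases is a one-line verification, the induction step closes, and the observation is established.
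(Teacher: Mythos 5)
Your proof is correct. The paper itself gives no proof of this observation---it is imported verbatim from \cite{DBLP:conf/focs/SauerwaldS12} (Observation~2.11)---but your argument is the standard one: induction over rounds, reduction to a single matched edge (valid since matching edges are vertex-disjoint and unmatched nodes are untouched), and a coupling of the orientation bits $\Phi_{u,v}^{(t)}$. You correctly identify the only delicate point, namely that in the parity-mismatch sub-case of statement~2 the condition $s\le\widetilde{s}$ together with $s\not\equiv\widetilde{s}\pmod 2$ forces $s\le\widetilde{s}-1$, which absorbs the worst-case rounding gap $\lceil s/2\rceil\le(s+1)/2\le\widetilde{s}/2\le\lfloor\widetilde{s}/2\rfloor$ (and the coupling is genuinely needed only in the both-odd sub-case).
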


The following observation shows an important relationship between \cref{def:taus,def:smoothing} 

\begin{observation}[cf.~{\cite[Lemma~2.2]{DBLP:conf/focs/SauerwaldS12}}]
\label{obs:tauGtauCRelation}
Assume the sequence of matchings $\left(\M^{(s)}\right)_{s=1}^t$ satisfies for all
$u\in V$ that $\left\|\M_{u,.}^{[1,t]}-\vec{ \frac{1}{n} }\right\|_2^2 \le {\left(\frac{\epsilon}{2 K\cdot n}\right)^2}$.
Then this sequence is $(K,\epsilon)$-smoothing.
\end{observation}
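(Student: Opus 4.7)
The plan is to reduce the discrepancy bound to a simple Cauchy--Schwarz estimate, using the row-norm hypothesis together with double-stochasticity of $\M^{[1,t]}$ to control each coordinate of the continuous load vector $X^{(t)} := x^{(0)} \cdot \M^{[1,t]}$ relative to the average $\overline{x}$.

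First, I would exploit the fact that $\M^{[1,t]}$ is doubly stochastic. Since $\vec{1} \cdot \M^{[1,t]} = \vec{1}$, the mean is preserved, so
\[
X_v^{(t)} - \overline{x} = \sum_{u \in V} \bigl(x_u^{(0)} - \overline{x}\bigr)\, \M_{u,v}^{[1,t]}.
\]
Because $\sum_u (x_u^{(0)} - \overline{x}) = 0$, I may replace $\M_{u,v}^{[1,t]}$ by $\M_{u,v}^{[1,t]} - 1/n$ without changing the sum. Cauchy--Schwarz then gives
\[
\bigl|X_v^{(t)} - \overline{x}\bigr| \le \bigl\|x^{(0)} - \overline{x}\cdot \vec{1}\bigr\|_2 \cdot \bigl\|\M_{\cdot,v}^{[1,t]} - \vec{\tfrac{1}{n}}\bigr\|_2.
\]
Since $\discr(x^{(0)}) \le K$ implies $|x_u^{(0)} - \overline{x}| \le K$ for all $u$, the first factor is at most $K\sqrt{n}$.

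The key step, and the only place where a small subtlety arises, is bounding the column norm $\|\M_{\cdot,v}^{[1,t]} - \vec{1/n}\|_2$, whereas the hypothesis is phrased in terms of row norms. I would simply sum the row-norm bound over all $u$:
\[
\sum_{u \in V} \bigl(\M_{u,v}^{[1,t]} - \tfrac{1}{n}\bigr)^2
\le \sum_{u \in V} \sum_{w \in V} \bigl(\M_{u,w}^{[1,t]} - \tfrac{1}{n}\bigr)^2
\le n \cdot \Bigl(\frac{\epsilon}{2Kn}\Bigr)^2,
\]
so $\|\M_{\cdot,v}^{[1,t]} - \vec{1/n}\|_2 \le \epsilon/(2K\sqrt{n})$. (This conversion is crude but sufficient; the factor $\sqrt{n}$ lost is exactly the $\sqrt{n}$ gained in the bound on $\|x^{(0)} - \overline{x}\vec{1}\|_2$.)

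Combining the two factors yields $|X_v^{(t)} - \overline{x}| \le K\sqrt{n} \cdot \epsilon/(2K\sqrt{n}) = \epsilon/2$ for every $v$, and therefore
\[
\discr\bigl(x^{(0)} \cdot \M^{[1,t]}\bigr) \le 2 \cdot \max_{v \in V} \bigl|X_v^{(t)} - \overline{x}\bigr| \le \epsilon,
\]
which is precisely the $(K,\epsilon)$-smoothing property. No step here should present a substantive obstacle: the entire argument is one application of Cauchy--Schwarz together with the row-to-column-norm comparison above.
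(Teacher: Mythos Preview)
Your proof is correct. The paper's own argument is essentially the same one-line norm estimate but uses the simpler route $\max_v |\M_{u,v}^{[1,t]}-1/n|\le \|\M_{u,\cdot}^{[1,t]}-\vec{1/n}\|_2\le \epsilon/(2Kn)$ followed by an $\ell_1$--$\ell_\infty$ bound $|X_v^{(t)}-\overline{x}|\le \sum_u |x_u^{(0)}-\overline{x}|\cdot|\M_{u,v}^{[1,t]}-1/n|\le nK\cdot \epsilon/(2Kn)=\epsilon/2$, which avoids your row-to-column conversion altogether; the two approaches are equivalent in spirit and in the constants they produce.
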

We will omit the proof, since the result follows immediately 
by noting that $ \max_{v \in V} \left| \M_{u,v}^{[1,t]} - \frac{1}{n} \right| \leq \left\| \M_{u,.}^{[1,t]} - \vec{ \frac{1}{n} } \right\|_2 $ and then applying the third statement of \cite[Lemma 2.2]{DBLP:conf/focs/SauerwaldS12}.

\subsection{Other Tools}

In this section we collect a few other tools and results that are frequently used in our analysis.

\begin{observation}\label{obs:subAddAppl}
Fix a sequence of matchings $\left(\M^{(s)}\right)_{s=1}^{\infty}$.
Consider two executions of the discrete load balancing protocol with the same matchings and the same random choices for the excess tokens' movements, but with different initial load vectors $x^{(0)}$ and $\widetilde{x}^{(0)}$ such that for some $\alpha\in \N$ and for all $u\in V$ it holds
\[
\widetilde{x}_u^{(0)} := \max\left\{ x_u^{(0)}-\alpha,0 \right\}.
\]
Then for any round $t\ge 0$ it holds
\[
X_u^{(t)} \le \widetilde{X}_u^{(t)} + \alpha.
\]
It then directly follows that the height of any token in load vector $X^{(t)}$ is at most the maximum height of any token in load vector $\widetilde{X}^{(t)}$ plus $\alpha$.
\end{observation}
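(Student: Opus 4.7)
The plan is to reduce the observation to the two building blocks provided by the already-cited \cref{obs:subAdd}, avoiding any reliance on a negative ``shifted'' load vector by comparing instead to a \emph{lifted} auxiliary process starting from $\widetilde{x}^{(0)}+\alpha\cdot\vec{1}$ (which is non-negative, so the process is well-defined).

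First I would observe the pointwise inequality
\[
 \widetilde{x}^{(0)}_u+\alpha \;=\;\max\{x^{(0)}_u-\alpha,0\}+\alpha \;\ge\; (x^{(0)}_u-\alpha)+\alpha \;=\; x^{(0)}_u,
\]
valid for every node $u$. Now couple three executions via the same matching sequence $\left(\M^{(s)}\right)_{s=1}^\infty$ and the same excess-token coins: the original process on $x^{(0)}$, the tilde process on $\widetilde{x}^{(0)}$, and an auxiliary process on $\widetilde{x}^{(0)}+\alpha\cdot\vec{1}$. By statement~(1) of \cref{obs:subAdd} (translation invariance), the auxiliary process's load vector at round $t$ is exactly $\widetilde{X}^{(t)}+\alpha\cdot\vec{1}$. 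By statement~(2) of \cref{obs:subAdd} (pointwise monotonicity under a shared coupling) applied to the original and auxiliary executions, the initial pointwise inequality above propagates to every later round, yielding
\[
 X^{(t)}_u \;\le\; \widetilde{X}^{(t)}_u+\alpha
\]
for all $u$ and $t$, which is exactly the claimed load-domination.

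For the height statement, I would invoke the structural fact (cf.~\cref{lem:OneTokenRW} and the definition of the height-sensitive process) that at any node $u$ in round $t$ the tokens occupy heights exactly $1,2,\ldots,X^{(t)}_u$. Hence any token in the $X$-process has height at most $X^{(t)}_u \le \widetilde{X}^{(t)}_u+\alpha \le \max_v \widetilde{X}^{(t)}_v + \alpha$, which is the maximum token height present in $\widetilde{X}^{(t)}$ plus $\alpha$.

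The main (and essentially only) subtlety is to verify that statement~(2) of \cref{obs:subAdd} can be applied to our particular pair of coupled executions, even though one of the two initial vectors ($\widetilde{x}^{(0)}+\alpha\cdot\vec{1}$) is not the tilde vector given in the observation. This is not a real obstacle: the cited statement holds for any two executions sharing the matchings and excess-token coins, independently of how their initial vectors arise. The conceptual step to get right is therefore the three-process coupling that chains translation-invariance with pointwise monotonicity; once that is in place, the inequality falls out immediately and the height conclusion is a one-line consequence of the height-sensitive process's invariant.
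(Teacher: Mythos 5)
Your proof is correct and follows essentially the same route as the paper: your auxiliary vector $\widetilde{x}^{(0)}+\alpha\cdot\vec{1}$ equals the paper's $\widehat{x}^{(0)}_u=\max\{x^{(0)}_u,\alpha\}$, and both arguments chain statement (1) (translation invariance) and statement (2) (pointwise monotonicity) of \cref{obs:subAdd} in the same way. No gaps.
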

\begin{proof}
We consider two auxiliary executions of discrete load balancing with initial load vectors $\widehat{x}^{(0)}$ and $\widetilde{x}^{(0)}$ in which for each node $u\in V$ we have
\[\widehat{x}^{(0)}_u:= \max\{x_u^{(0)},\alpha\} \quad\quad \text{and} \quad\quad \widetilde{x}^{(0)}_u:= \widehat{x}^{(0)}_u - \alpha.\]
We run the discrete load balancing for these executions with the same matching $\M^{(t)}$ and the same random choices for the excess tokens in each round $t\in \N$ but with different initial load vectors constructed as above.
From the second statement of \cref{obs:subAdd}, it follows that for any round $t\ge 0$ and each node $u\in V$ we have $X_u^{(t)} \le \widehat{X}_u^{(t)}$. Similarly, from the first statement of \cref{obs:subAdd} it follows that
$\widetilde{X}_u^{(t)} = \widehat{X}_u^{(t)}-\alpha$. A combination of these two implies that
\begin{equation*}\label{eq:diffExecutions}
X_u^{(t)} \le \widetilde{X}_u^{(t)} + \alpha,
\end{equation*}
for any node $u\in V$ and any round $t\ge 0$.
\end{proof}

The next observation uses the concept of orientation, which is defined in \cref{eq:disc_cont}.
\begin{observation}\label{obs:flip}
Fix a sequence of matchings $\left(\M^{(s)}\right)_{s=1}^{\infty}$. Consider two executions of the discrete load balancing protocol with the same matchings but with different initial load vectors $x^{(0)}$ and $\widetilde{x}^{(0)}$, where $\widetilde{x}^{(0)}:=K\cdot \vec{ 1} -x^{(0)}$ for some $K \in \N_0$, and flipped orientations, i.e., $\widetilde{\Phi}_{u,v}^{(t)}=-\Phi_{u,v}^{(t)}$. Then for any round $t \geq 1$,
\[
\widetilde{X}^{(t)} = K\cdot \vec{1} -X^{(t)}.
\]
\end{observation}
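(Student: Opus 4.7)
The plan is a straightforward induction on the round index $t$. The base case $t=0$ holds by definition, since $\widetilde{X}^{(0)} = \widetilde{x}^{(0)} = K\cdot\vec{1} - x^{(0)} = K\cdot\vec{1} - X^{(0)}$.

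For the inductive step, I assume $\widetilde{X}^{(t-1)} = K\cdot\vec{1} - X^{(t-1)}$ and process the matching $\M^{(t)}$ edge by edge. Unmatched nodes $w$ are trivial since $\widetilde{X}_w^{(t)} = \widetilde{X}_w^{(t-1)} = K - X_w^{(t-1)} = K - X_w^{(t)}$. For a matched pair $\{u,v\}$, let $S := X_u^{(t-1)} + X_v^{(t-1)}$, so by the inductive hypothesis $\widetilde{X}_u^{(t-1)} + \widetilde{X}_v^{(t-1)} = 2K - S$. If $S$ is even then the averaging step deterministically assigns each endpoint the value $S/2$ in the original process and $(2K-S)/2 = K - S/2$ in the flipped process, so the claimed identity persists independently of the orientation.

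The only nontrivial case is $S$ odd. Here the original process assigns $\lceil S/2 \rceil$ to one endpoint and $\lfloor S/2 \rfloor$ to the other, with the choice governed by $\Phi_{u,v}^{(t)} \in \{-1,+1\}$. In the flipped process, the sum $2K-S$ is likewise odd, and using the identities
\[
\left\lceil \tfrac{2K-S}{2} \right\rceil = K - \left\lfloor \tfrac{S}{2} \right\rfloor, \qquad \left\lfloor \tfrac{2K-S}{2} \right\rfloor = K - \left\lceil \tfrac{S}{2} \right\rceil,
\]
one sees that if $u$ receives $\lceil S/2 \rceil$ in the original process, then the flipped orientation $\widetilde{\Phi}_{u,v}^{(t)} = -\Phi_{u,v}^{(t)}$ causes $u$ to receive $\lfloor (2K-S)/2 \rfloor = K - \lceil S/2 \rceil = K - X_u^{(t)}$ in the flipped process; symmetrically, $v$ receives $K - X_v^{(t)}$.

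The only real subtlety is bookkeeping the convention: one must verify that negating the Rademacher variable $\Phi_{u,v}^{(t)}$ indeed corresponds to swapping which endpoint receives the excess token (this is immediate from the formula $E_{u,v}^{(t)} = \tfrac{1}{2}\Odd(X_u^{(t-1)}+X_v^{(t-1)})\cdot\Phi_{u,v}^{(t)}$ used in \cref{eq:disc_cont}). No genuine technical obstacle is anticipated—the result is just the elementary observation that the pairwise averaging update commutes with the involution $x \mapsto K - x$ provided the tie-breaking orientation is simultaneously negated.
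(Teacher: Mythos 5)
Your proof is correct and follows essentially the same route as the paper: induction on $t$, the trivial case of unmatched nodes, and the verification that the averaging update commutes with $x \mapsto K - x$ once the orientation is negated (the paper writes this via the identity $\Odd(2K-S)=\Odd(S)$ and the formula $E_{u,v}^{(t)}=\frac{1}{2}\Odd(\cdot)\Phi_{u,v}^{(t)}$, where you use the equivalent floor/ceiling identities). No gap.
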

\begin{proof}
The proof is by induction over $t \in \N_0$. The base case $t=0$ holds by assumption. Now assuming $\widetilde{x}^{(t-1)}= K\cdot\1-X^{(t-1)}$ for some $t-1 \in \N_0$, we will prove that the same equation holds for round $t$. If a node $u$ is not matched in round $t$ then \[\widetilde{X}_u^{(t)}= \widetilde{X}_u^{(t-1)} = K-X_u^{(t-1)} = K-X_u^{(t)}.\]
Now assume the edge $[u:v]$ is in the matching $\M^{(t)}$. 
Then we have
\[
X_u^{(t)} = \frac{X_u^{(t-1)}+X_v^{(t-1)}}{2} + \frac{1}{2}\cdot \Odd\left(X_u^{(t-1)}+X_v^{(t-1)}\right)\cdot \Phi_{u,v}^{(t)}
\]
and
\[
X_v^{(t)} = \frac{X_u^{(t-1)}+X_v^{(t-1)}}{2} + \frac{1}{2}\cdot \Odd\left(X_u^{(t-1)}+X_v^{(t-1)}\right)\cdot \Phi_{v,u}^{(t)}.
\]
Furthermore, we have
\begin{align}
\widetilde{X}_u^{(t)} &= \frac{\widetilde{X}_u^{(t-1)} + \widetilde{X}_v^{(t-1)}}{2} + \frac{1}{2}\cdot \Odd\left(\widetilde{X}_u^{(t-1)} + \widetilde{X}_v^{(t-1)}\right)\cdot \widetilde{\Phi}_{u,v}^{(t)}\notag \\
& =
\frac{2K-\left(X_u^{(t-1)}+X_v^{(t-1)}\right)}{2} + \frac{1}{2} \cdot \Odd\left( 2K- \left(X_u^{(t-1)}+X_v^{(t-1)}\right)\right) \cdot \widetilde{\Phi}_{u,v}^{(t)}\notag\\
& = \frac{2K-\left(X_u^{(t-1)}+X_v^{(t-1)}\right)}{2} + \frac{1}{2} \cdot \Odd\left( X_u^{(t-1)}+X_v^{(t-1)}\right) \cdot \widetilde{\Phi}_{u,v}^{(t)}\notag\\
& = K - \frac{\left(X_u^{(t-1)}+X_v^{(t-1)}\right)}{2} - \frac{1}{2} \cdot \Odd\left(X_u^{(t-1)}+X_v^{(t-1)}\right) \cdot \Phi_{u,v}^{(t)}\notag\\
&= K-X_u^{(t)}.\notag
\end{align}
The case for $v$ is analogous to the case for $u$. Hence for each node $u\in V$, we get $\widetilde{X}_u^{(t)}=K-X_u^{(t)}$ and consequently, $\widetilde{X}^{(t)}=K\cdot \vec{ 1} -X^{(t)}$. 
This finishes the induction step and completes the proof.
\end{proof}

\begin{observation}\label{obs:secNormExpansion}
Let $\M$ be an $n\times n$ doubly stochastic matrix. Then for any node $u\in V$,
\[
\left\|\M_{u,.} -\vec{ \frac{1}{n} } \right\|_2^2 = \left\| \M_{u,.}\right\|_2^2 - \frac{1}{n}.
\]
\end{observation}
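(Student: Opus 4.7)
The plan is straightforward: this is a direct algebraic identity obtained by expanding the squared $\ell_2$-norm and using that each row of a doubly stochastic matrix sums to $1$. There is no real obstacle here; the observation is essentially a Pythagoras/variance-type decomposition.

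First I would write out the definition of the $\ell_2$-norm componentwise:
\[
\left\|\M_{u,.} -\vec{\tfrac{1}{n}} \right\|_2^2 = \sum_{v \in V} \left(\M_{u,v} - \tfrac{1}{n}\right)^2.
\]
Next I would expand the square of each term, splitting the sum into three pieces:
\[
\sum_{v \in V} \M_{u,v}^2 \;-\; \frac{2}{n}\sum_{v\in V} \M_{u,v} \;+\; \sum_{v \in V} \frac{1}{n^2}.
\]
The first sum is by definition $\|\M_{u,.}\|_2^2$, and the last sum equals $n \cdot \tfrac{1}{n^2} = \tfrac{1}{n}$. For the middle sum, I would invoke the row-stochasticity of $\M$ (which follows from $\M$ being doubly stochastic) to get $\sum_{v\in V} \M_{u,v} = 1$, so the middle term contributes $-\tfrac{2}{n}$.

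Combining, the right-hand side becomes $\|\M_{u,.}\|_2^2 - \tfrac{2}{n} + \tfrac{1}{n} = \|\M_{u,.}\|_2^2 - \tfrac{1}{n}$, which is exactly the claimed identity. Note that only row-stochasticity is actually used; the column-stochasticity of $\M$ is not needed for this particular statement.
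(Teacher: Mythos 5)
Your proof is correct and follows essentially the same approach as the paper: expand the squared norm termwise, use row-stochasticity to evaluate the cross term, and combine. Your remark that only row-stochasticity is needed is accurate (and the paper's own computation uses nothing more).
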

\begin{proof} We calculate
\begin{align*}
\left\|\M_{u,.} -\1\cdot \frac{1}{n}\right\|_2^2 &= \sum_{v\in V} \left(\M_{u,v}-\frac{1}{n}\right)^2 = \sum_{v\in V} \left((\M_{u,v})^2 -\frac{2}{n}\cdot \M_{u,v} - \frac{1}{n^2}\right)
\\&= \sum_{v\in V} (\M_{u,v})^2 - \frac{2}{n}\cdot \sum_{v\in V} \M_{u,v} + \sum_{v\in V} \frac{1}{n^2} = \left\| \M_{u,.}\right\|_2^2 - \frac{2}{n} + \frac{1}{n}. \qed
\end{align*}
\end{proof}

\begin{observation}\label{obs:inter:secondNormBound}
Let $\M$ be any doubly stochastic matrix and $\left(a_k\right)_{k\in V}$ be any stochastic vector. Then
\[\sum_{w\in V} \left(\sum_{k \in V} a_k\cdot \M_{w,k} \right)^2 = \sum_{w\in V} \left(\sum_{k \in V} a_k\cdot \M_{w,k} -\frac{1}{n} \right)^2 + \frac{1}{n}.\]
\end{observation}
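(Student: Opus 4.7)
The plan is to reduce the identity to the fact that $\sum_{w \in V} b_w = 1$ for $b_w := \sum_{k \in V} a_k \cdot \M_{w,k}$, which follows directly from double stochasticity of $\M$ together with $\|a\|_1 = 1$. Concretely, I would first compute $\sum_{w \in V} b_w$ by swapping the order of summation, obtaining $\sum_{k \in V} a_k \cdot \left( \sum_{w \in V} \M_{w,k} \right) = \sum_{k \in V} a_k = 1$, where the inner sum equals $1$ because the columns of a doubly stochastic matrix sum to $1$, and the outer sum equals $1$ because $(a_k)_{k \in V}$ is stochastic.

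Next, I would expand the right-hand side of the claimed identity. Writing
\[
\sum_{w \in V} \left( b_w - \frac{1}{n} \right)^2 = \sum_{w \in V} b_w^2 - \frac{2}{n} \sum_{w \in V} b_w + n \cdot \frac{1}{n^2},
\]
and substituting $\sum_{w \in V} b_w = 1$, the two trailing terms give $-2/n + 1/n = -1/n$. Adding back the $1/n$ from the right-hand side of the claimed identity cancels this, leaving exactly $\sum_{w \in V} b_w^2$, which is the left-hand side. This completes the argument.

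There is no real obstacle here, since the proof is a one-line expansion once the normalization $\sum_w b_w = 1$ is observed; the only subtlety is remembering that double stochasticity is needed (not merely row-stochasticity), because we require the column sums of $\M$ to equal $1$ in order to identify $\sum_w b_w$ with $\sum_k a_k$.
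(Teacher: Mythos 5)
Your proof is correct. It is essentially the same algebraic identity as the paper's, but you reach it from a slightly different angle: you set $b_w := \sum_k a_k \M_{w,k}$, establish $\sum_w b_w = 1$ via column-stochasticity and $\|a\|_1 = 1$, and then expand $\sum_w (b_w - 1/n)^2$. The paper instead rewrites $\sum_k a_k\M_{w,k}$ as $\sum_k(a_k - 1/n)\M_{w,k} + 1/n$ (which consumes \emph{row}-stochasticity), expands that square, and then invokes column-stochasticity in the cross term. The net effect is the same, but your version is marginally cleaner and shows that in fact only column-stochasticity of $\M$ is used: row-stochasticity is never invoked. Your closing remark is thus slightly conservative --- ``double stochasticity'' is assumed in the statement, but your argument uses only the column sums; it is not that both row- and column-stochasticity are needed, rather that column-stochasticity is the one direction that matters and row-stochasticity alone would not do. That is a hair's sharper than what you wrote, but your proof as given is complete and correct.
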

\begin{proof}
\begin{align*}
\sum_{w\in V}\left(\sum_{k\in V} a_k\cdot \M_{w,k}\right)^2 &= \sum_{w\in V}\left(\sum_{k\in V} \left(a_k-\frac{1}{n}\right)\cdot \M_{w,k}+\frac{1}{n}\right)^2
\\
\\&= \sum_{w\in V} \left(\sum_{k\in V} \left(a_k-\frac{1}{n}\right)\cdot \M_{w,k} \right)^2 +
\frac{2}{n}\cdot \sum_{w\in V}\sum_{k\in V} \left(a_k-\frac{1}{n}\right)\cdot \M_{w,k} + \sum_{w\in V} \frac{1}{n^2}
\\& =\sum_{w\in V} \left(\sum_{k\in V} \left(a_k-\frac{1}{n}\right)\cdot \M_{w,k} \right)^2 +
\frac{2}{n}\cdot  \sum_{k\in V} \left(a_k-\frac{1}{n}\right)\cdot \sum_{w\in V} \M_{w,k} + \frac{1}{n}
\\
& \stackrel{(a)}{=} \sum_{w\in V} \left( \sum_{k\in V} a_k\cdot \M_{w,k} -\frac{1}{n}\right)^2 + \frac{1}{n},
\end{align*}
where $(a)$ holds since $\M_{w,.}$ is stochastic, from $\sum_{k\in V} a_k=1$ and $|V|=n$ we get $\sum_{k\in V} \left(a_k-\frac{1}{n}\right)=0$.
\end{proof}
The next observation is a well-known statement and in fact, it is implied by Equation \cref{eq:psi:non-increasing}.
\begin{observation} \label{monotone}
For each node $w\in V$ the expression $\left\|\M^{[1,t]}_{w,.}-\vec{ \frac{1}{n}}\right\|_2^2$ is non-increasing over~$t$. 
\end{observation}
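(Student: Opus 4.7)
The plan is to reduce the statement to the same pairwise-averaging inequality that drives the proof of \cref{eq:psi:non-increasing}, but with the matching appended on the right rather than prepended on the left. Fix $w \in V$ and define $\widetilde{\Psi}^{(t)} := \|\M^{[1,t]}_{w,\cdot} - \vec{\tfrac{1}{n}}\|_2^2$. Since $\M^{[1,t+1]}_{w,\cdot} = \M^{[1,t]}_{w,\cdot} \cdot \M^{(t+1)}$, only coordinates belonging to matched pairs in round $t+1$ change, and each such pair $[k:k']$ has both of its new entries equal to the average $(\M^{[1,t]}_{w,k} + \M^{[1,t]}_{w,k'})/2$. The elementary identity $2((a+b)/2)^2 \leq a^2 + b^2$, applied to the deviations $a := \M^{[1,t]}_{w,k} - 1/n$ and $b := \M^{[1,t]}_{w,k'} - 1/n$, then shows that the pair's contribution to $\widetilde{\Psi}$ can only decrease (indeed, by exactly $(a-b)^2/2$). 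Summing over all matched pairs yields $\widetilde{\Psi}^{(t+1)} \leq \widetilde{\Psi}^{(t)}$.

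A cleaner equivalent argument uses the fact that each $\M^{(t+1)}$ is symmetric, doubly stochastic, and block-diagonal with $2 \times 2$ averaging blocks and $1 \times 1$ identity blocks, hence has eigenvalues in $\{0,1\}$ and fixes the uniform vector. Right-multiplication by $\M^{(t+1)}$ is therefore a contraction on the hyperplane orthogonal to $\vec{1}$, so the $\ell_2$ distance from $\M^{[1,t]}_{w,\cdot}$ to $\vec{\tfrac{1}{n}}$ can only shrink.

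There is essentially no obstacle. The only mild subtlety is that \cref{eq:psi:non-increasing} is formulated with the interval $[s+1,t]$ growing at its \emph{left} endpoint (as $s$ decreases), whereas here the interval $[1,t]$ grows at its \emph{right} endpoint. One can either carry out the direct pairwise calculation above, or exploit the symmetry of each matching matrix: $(\M^{[1,t]})^T = \M^{(t)} \cdots \M^{(1)}$ is itself a product of matching matrices in reverse order, and the $w$-th row of $\M^{[1,t]}$ equals the $w$-th column of this transpose; applying \cref{eq:psi:non-increasing} with $a := e_w$ to the reversed sequence then delivers the monotonicity immediately.
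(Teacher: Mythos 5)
Your proof is correct and matches the paper's intent: the paper gives no explicit argument, merely asserting that the observation is well known and implied by \cref{eq:psi:non-increasing}, and your direct pairwise-averaging computation (equivalently, the contraction property of each symmetric doubly stochastic matching matrix) is exactly the mechanism behind that equation. You also correctly identify and resolve the one subtlety the paper glosses over---that \cref{eq:psi:non-increasing} grows the product interval at its left endpoint while the observation grows it at the right---via the reversal/transposition argument, so nothing is missing.
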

\end{document}